\newtheorem{theorem}{Theorem}[section]
\newtheorem{lemma}[theorem]{Lemma}
\newtheorem{corollary}[theorem]{Corollary}
\newtheorem{definition}[theorem]{Definition}
\newcommand{\R}{\mathbb{R}}
\newcommand{\<}{\langle}
\renewcommand{\>}{\rangle}
\newcommand{\goto}{\rightarrow}
\newcommand{\sgn}{\textrm{sgn}}
\renewcommand{\P}{\operatorname{\mathbb{P}}}
\newcommand{\E}{\operatorname{\mathbb{E}}}
\newcommand{\norm}[1]{{\left\lVert{#1}\right\rVert}}
\newcommand{\e}{\mathrm{e}}
\newcommand{\PO}{{\cal P}_{\Omega}}
\newcommand{\PT}{{\cal P}_T}
\newcommand{\PTc}{{\cal P}_{T^\perp}}
\newcommand{\cAOT}{\mathcal{A}_{\Omega T}}
\newcommand{\eab}{\vct{e}_{a}\vct{e}_{b}^*}
\newcommand{\eabone}{\vct{e}_{a_1}\vct{e}_{b_1}^*}
\newcommand{\eabtwo}{\vct{e}_{a_2}\vct{e}_{b_2}^*}
\newcommand{\eabthree}{\vct{e}_{a_3}\vct{e}_{b_3}^*}
\newcommand{\eabp}{\vct{e}_{a'} \vct{e}_{b'}^*}
\newcommand{\dab}{\delta_{ab}}
\newcommand{\xiab}{\xi_{ab}}
\newcommand{\xiabp}{\xi_{a'b'}}
\newcommand{\xiabpp}{\xi'_{a'b'}}
\newcommand{\vct}[1]{\bm{#1}}
\newcommand{\mtx}[1]{\bm{#1}}
\newcommand{\lspan}[1]{\operatorname{span}{#1}}
\newcommand{\rank}{\operatorname{rank}}
\newcommand{\trace}{\operatorname{trace}}
\newcommand{\OpId}{\mathcal{I}}
\numberwithin{equation}{section}
\def \endprf{\hfill {\vrule height6pt width6pt depth0pt}\medskip}
\newenvironment{proof}{\noindent {\bf Proof} }{\endprf\par}
\title{Exact Matrix Completion via %Semidefinite Programming}
Convex Optimization}
\author{Emmanuel J. Cand\`es$^{\dagger}$ and Benjamin Recht$^{\sharp}$\\
  \vspace{-.1cm}\\
  $\dagger$ Applied and Computational Mathematics,
  Caltech, Pasadena, CA 91125\\
  \vspace{-.3cm}\\
  $\sharp$ Center for the Mathematics of Information, Caltech,
  Pasadena, CA 91125}
\date{May 2008}
\begin{document}

\maketitle

\vspace{-0.3in}

\begin{abstract}
  We consider a problem of considerable practical interest: the
  recovery of a data matrix from a sampling of its entries.
  Suppose that we observe $m$ entries selected uniformly at random
  from a matrix $M$. Can we complete the matrix and recover the
  entries that we have not seen?

  We show that one can perfectly recover most low-rank matrices
  from what appears to be an incomplete set of entries. We prove that
  if the number $m$ of sampled entries obeys
\[
m \ge C \, n^{1.2} r \log n
\]
for some positive numerical constant $C$, then with very high probability,
most $n \times n$ matrices of rank $r$ can be perfectly recovered by
solving a simple convex optimization program. This program finds the
matrix with minimum nuclear norm that fits the data.  The condition
above assumes that the rank is not too large. However, if one replaces
the 1.2 exponent with 1.25, then the result holds for all values of
the rank. Similar results hold for arbitrary rectangular matrices as
well.
% When the rank is not too large, the 1.25 exponent may be replaced
% with 1.2, and a stronger result holds.
Our results are connected with the recent
literature on compressed sensing, and show that objects other than
signals and images can be perfectly reconstructed from very limited
information.

\end{abstract}

{\bf Keywords.}  Matrix completion, low-rank matrices, convex
optimization, duality in optimization, nuclear norm minimization,
random matrices, noncommutative Khintchine inequality, decoupling,
compressed sensing.

\section{Introduction}
\label{sec:intro}

In many practical problems of interest, one would like to recover a
matrix from a sampling of its entries. As a motivating example,
consider the task of inferring answers in a partially filled out
survey.  That is, suppose that questions are being asked to a
collection of individuals.  Then we can form a matrix where the rows index
each individual and the columns index the questions. We collect data
to fill out this table but unfortunately, many questions are left
unanswered. Is it possible to make an educated guess about what the
missing answers should be? How can one make such a guess? Formally, we
may view this problem as follows. We are interested in recovering a
data matrix $\mtx{M}$ with $n_1$ rows and $n_2$ columns but only get
to observe a number $m$ of its entries which is comparably much
smaller than $n_1 n_2$, the total number of entries. Can one recover
the matrix $\mtx{M}$ from $m$ of its entries? In general, everyone
would agree that this is impossible without some additional
information.

In many instances, however, the matrix we wish to recover is known
to be structured in the sense that it is low-rank or approximately
low-rank.  (We recall for completeness that a matrix with $n_1$ rows
and $n_2$ columns has rank $r$ if its rows or columns span an
$r$-dimensional space.) Below are two examples of practical
scenarios where one would like to be able to recover a low-rank
matrix from a sampling of its entries.
\begin{itemize}
\item {\em The Netflix problem.} In the area of recommender systems,
  users submit ratings on a subset of entries in a database, and the
  vendor provides recommendations based on the user's
  preferences~\cite{Rennie05,SrebroThesis}. Because users only rate a
  few items, one would like to infer their preference for unrated
  items.

  A special instance of this problem is the now famous Netflix
  problem~\cite{NetflixPrize}. Users (rows of the data matrix) are
  given the opportunity to rate movies (columns of the data matrix)
  but users typically rate only very few movies so that there are very
  few scattered observed entries of this data matrix. Yet one would
  like to complete this matrix so that the vendor (here Netflix) might
  recommend titles that any particular user is likely to be willing to
  order.  In this case, the data matrix of all user-ratings may be
  approximately low-rank because it is commonly believed that only a
  few factors contribute to an individual's tastes or preferences. 

\item {\em Triangulation from incomplete data.}  Suppose we are given
 partial information about the distances between objects and would
 like to reconstruct the low-dimensional geometry describing their
 locations. For example, we may have a network of low-power
 wirelessly networked sensors scattered randomly across a
 region. Suppose each sensor only has the ability to construct
 distance estimates based on signal strength readings from its
 nearest fellow sensors. From these noisy distance estimates, we can
 form a partially observed distance matrix. We can then estimate the
 true distance matrix whose rank will be equal to two if the sensors
 are located in a plane or three if they are located in three
 dimensional space~\cite{Linial95,So07}.  In this case, we
 only need to observe a few distances per node to have enough
 information to reconstruct the positions of the objects.
\end{itemize}
These examples are of course far from exhaustive and there are many
other problems which fall in this general category. For instance, we
may have some very limited information about a covariance matrix of
interest. Yet, this covariance matrix may be low-rank or approximately
low-rank because the variables only depend upon a comparably smaller
number of factors.

\subsection{Impediments and solutions}
%\label{sec:which}

Suppose for simplicity that we wish to recover a square $n \times n$
matrix $\mtx{M}$ of rank $r$.\footnote{We emphasize that there is
nothing special about $\mtx{M}$ being square and all of our
discussion would apply to arbitrary rectangular matrices as well.
The advantage of focusing on square matrices is a simplified
exposition and reduction in the number of parameters of which we
need to keep track.} Such a matrix $\mtx{M}$ can be represented by
$n^2$ numbers, but it only has $(2 n - r ) r$ degrees of freedom.
This fact can be revealed by counting parameters in the singular
value decomposition (the number of degrees of freedom associated
with the description of the singular values and of the left and
right singular vectors).  When the rank is small, this is
considerably smaller than $n^2$. For instance, when $\mtx{M}$
encodes a 10-dimensional phenomenon, then the number of degrees of
freedom is about $20\, n$ offering a reduction in dimensionality by
a factor about equal to $n/20$.  When $n$ is large (e.g.~in the
thousands or millions), the data matrix carries much less
information than its ambient dimension suggests. The problem is now
whether it is possible to recover this matrix from a sampling of its
entries without having to probe all the $n^2$ entries, or more
generally collect $n^2$ or more measurements about $\mtx{M}$.

\subsubsection{Which matrices?}

In general, one cannot hope to be able to recover a low-rank matrix
from a sample of its entries. Consider the rank-1 matrix $\mtx{M}$
equal to
\begin{equation}
  \label{eq:problem1}
  \mtx{M} = \vct{e}_1 \vct{e}_n^* = \begin{bmatrix}
  0 & 0 & \cdots & 0 & 1 \\
 0 & 0 & \cdots & 0 & 0 \\
 \vdots & \vdots &  \vdots & \vdots & \vdots \\
 0 & 0 & \cdots & 0 & 0
\end{bmatrix},
\end{equation}
where here and throughout, $\vct{e}_i$ is the $i$th canonical basis
vector in Euclidean space (the vector with all entries equal to 0
but the $i$th equal to 1). This matrix has a 1 in the top-right
corner and all the other entries are 0. Clearly this matrix cannot
be recovered from a sampling of its entries unless we pretty much
see all the entries. The reason is that for most sampling sets, we
would only get to see zeros so that we would have no way of guessing
that the matrix is not zero. For instance, if we were to see 90\% of
the entries selected at random, then 10\% of the time we would only
get to see zeroes.

It is therefore impossible to recover {\em all} low-rank matrices from
a set of sampled entries but can one recover {\em most} of them?
% Put differently, what kinds of matrices can one hope to recover.
To investigate this issue, we introduce a simple model of low-rank
matrices. Consider the singular value decomposition (SVD) of a
matrix $\mtx{M}$
\begin{equation}
  \label{eq:svd}
  \mtx{M} = \sum_{k = 1}^r \sigma_k \vct{u}_k \vct{v}_k^*,
\end{equation}
where the $\vct{u}_k$'s and $\vct{v}_k$'s are the left and right
singular vectors, and the $\sigma_k$'s are the singular values (the
roots of the eigenvalues of $\mtx{M}^*\mtx{M}$). Then we could think
of a {\em generic} low-rank matrix as follows: the family
$\{\vct{u}_k\}_{1 \le k \le r}$ is selected uniformly at random among
all families of $r$ orthonormal vectors, and similarly for the the
family $\{\vct{v}_k\}_{1 \le k \le r}$. The two families may or may
not be independent of each other. We make no assumptions about the
singular values $\sigma_k$. In the sequel, we will refer to this model
as the {\em random orthogonal model}.  This model is convenient in the
sense that it is both very concrete and simple, and useful in the
sense that it will help us fix the main ideas. In the sequel, however,
we will consider far more general models. The question for now is
whether or not one can recover such a generic matrix from a sampling
of its entries.

\subsubsection{Which sampling sets?}

Clearly, one cannot hope to reconstruct any low-rank matrix
$\mtx{M}$---even of rank $1$---if the sampling set avoids any column
or row of $\mtx{M}$. Suppose that $\mtx{M}$ is of rank 1 and of the
form $\vct{x} \vct{y}^*$, $\vct{x}, \vct{y} \in \R^n$ so that the
$(i,j)$th entry is given by
\[
M_{ij} = x_i y_j.
\]
Then if we do not have samples from the first row for example, one
could never guess the value of the first component $x_1$, by any
method whatsoever; no information about $x_1$ is observed. There is of
course nothing special about the first row and this argument extends
to any row or column. To have any hope of recovering an unknown
matrix, one needs at least one observation per row and one observation
per column.

We have just seen that if the sampling is adversarial, e.g. one
observes all of the entries of $\mtx{M}$ but those in the first row,
then one would not even be able to recover matrices of rank $1$.  But
what happens for most sampling sets? Can one recover a low-rank matrix
from almost all sampling sets of cardinality $m$? Formally, suppose
that the set $\Omega$ of locations corresponding to the observed
entries ($(i,j) \in \Omega$ if $M_{ij}$ is observed) is a set of
cardinality $m$ sampled uniformly at random. Then can one recover a
generic low-rank matrix $M$, perhaps with very large probability, from
the knowledge of the value of its entries in the set $\Omega$?

\subsubsection{Which algorithm?}

If the number of measurements is sufficiently large, and if the
entries are sufficiently uniformly distributed as above, one might
hope that there is only {\em one} low-rank matrix with these
entries. If this were true, one would want to recover the data
matrix by solving the optimization problem
\begin{equation}
 \label{eq:l0}
  \begin{array}{ll}
\textrm{minimize}   & \quad \rank(\mtx{X})\\
\textrm{subject to} & \quad X_{ij} = M_{ij} \quad (i,j) \in \Omega,
 \end{array}
\end{equation}
where $\mtx{X}$ is the decision variable and $\rank(\mtx{X})$ is
equal to the rank of the matrix $\mtx{X}$.  The program
\eqref{eq:l0} is a common sense approach which simply seeks the
simplest explanation fitting the observed data. If there were only
one low-rank object fitting the data, this would recover $\mtx{M}$.
This is unfortunately of little practical use because this
optimization problem is not only NP-hard, but all known algorithms
which provide exact solutions require time doubly exponential in the
dimension $n$ of the matrix in both theory and practice \cite{Grigoriev84}.

If a matrix has rank $r$, then it has exactly $r$ nonzero singular
values so that the rank function in \eqref{eq:l0} is simply the
number of nonvanishing singular values. In this paper,
we consider an alternative which minimizes the sum of the singular
values over the constraint set. This sum is called the \emph{nuclear
  norm},
\begin{equation}
  \label{eq:nuclear}
  \|\mtx{X}\|_* = \sum_{k = 1}^n \sigma_k(\mtx{X})
\end{equation}
where, here and below, $\sigma_k(\mtx{X})$ denotes the $k$th largest
singular value of $\mtx{X}$.
% (i.e., the square root of the $i$th largest
% eigenvalue of $\mtx{X}^*\mtx{X}$).
The heuristic optimization is then given by
\begin{equation}
 \label{eq:sdp}
  \begin{array}{ll}
\textrm{minimize}   & \quad \|\mtx{X}\|_*\\
\textrm{subject to} & \quad   X_{ij} = M_{ij} \quad (i,j) \in \Omega.
 \end{array}
\end{equation}
Whereas the rank function counts the number of nonvanishing singular
values, the nuclear norm sums their amplitude and in some sense, is
to the rank functional what the convex $\ell_1$ norm is to the
counting $\ell_0$ norm in the area of sparse signal recovery.  The
main point here is that the nuclear norm is a convex function and,
as we will discuss in Section~\ref{sec:background} can be optimized
efficiently via semidefinite programming.

\subsubsection{A first typical result}

Our first result shows that, perhaps unexpectedly, this heuristic
optimization recovers a generic $\mtx{M}$ when the number of randomly
sampled entries is large enough.  We will prove the following:
\begin{theorem}
  \label{teo:main}
  Let $\mtx{M}$ be an $n_1 \times n_2$ matrix of rank $r$ sampled from
  the random orthogonal model, and put $n = \max(n_1, n_2)$. Suppose
  we observe $m$ entries of $\mtx{M}$ with locations sampled uniformly
  at random. Then there are numerical constants $C$ and $c$ such that if
\begin{equation}
\label{eq:main1} m \ge C \, n^{5/4} r \log n\, ,
\end{equation}
the minimizer to the problem \eqref{eq:sdp} is unique and equal to
$\mtx{M}$ with probability at least $1 - c n^{-3}$; that is to say,
the semidefinite program $\eqref{eq:sdp}$ recovers all the entries of
$\mtx{M}$ with no error.  In addition, if $r \leq n^{1/5}$, then the
recovery is exact with probability at least $1 - c n^{-3}$ provided
that
\begin{equation}
\label{eq:main2} m \ge C \,  n^{6/5}r \log n\,.
\end{equation}
\end{theorem}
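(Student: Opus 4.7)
The plan is to prove exact recovery by exhibiting a \emph{dual certificate} for the convex program \eqref{eq:sdp} and then establishing the requisite concentration bounds using the noncommutative Khintchine inequality.

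\textbf{Step 1 (reduction to a dual certificate).} Write $\mtx{M} = \mtx{U}\Sigma \mtx{V}^*$ and let $T$ denote the tangent space at $\mtx{M}$ to the rank-$r$ matrix manifold, spanned by matrices of the form $\mtx{U}\mtx{X}^* + \mtx{Y}\mtx{V}^*$. The subdifferential of $\|\cdot\|_*$ at $\mtx{M}$ consists of $\mtx{U}\mtx{V}^* + \mtx{W}$ with $\mtx{W} \in T^\perp$ and $\|\mtx{W}\| \le 1$, so standard convex analysis shows that $\mtx{M}$ is the unique minimizer of \eqref{eq:sdp} provided (a) $\PO$ is injective on $T$, and (b) there exists $\mtx{Y}$ with $\PO\mtx{Y}=\mtx{Y}$, $\PT\mtx{Y} = \mtx{U}\mtx{V}^*$, and $\|\PTc\mtx{Y}\| < 1$.

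\textbf{Step 2 (incoherence and invertibility on $T$).} Under the random orthogonal model, concentration for the Haar measure on the Stiefel manifold gives, with failure probability $O(n^{-3})$,
\[
\max_i \|\mtx{U}^* \vct{e}_i\|^2 + \max_j \|\mtx{V}^*\vct{e}_j\|^2 \le C\, r\log n / n,
\]
which in turn implies $\max_{a,b}\|\PT(\eab)\|_F^2 \le C' r\log n/n$. Writing $\PO = \sum_{(a,b)} \delta_{ab}\, \langle\,\cdot\,,\eab\rangle\,\eab$ with $\delta_{ab}$ independent Bernoulli indicators (the Bernoulli and uniform sampling models are comparable up to negligible loss), the operator $\cAOT := \PT\PO\PT$ is a sum of independent rank-one operators on $T$ with expectation $(m/n^2)\PT$. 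A noncommutative Khintchine bound, combined with the incoherence estimate, yields
\[
\bigl\|\cAOT - (m/n^2)\PT\bigr\|_{T\to T} \le \tfrac12 (m/n^2)
\]
with probability $\ge 1 - cn^{-3}$ under \eqref{eq:main1}. This establishes (a) and lets us invert $\cAOT$ on $T$.

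\textbf{Step 3 (construction via a Neumann series).} Define the candidate
\[
\mtx{Y} = \PO\, \cAOT^{-1}(\mtx{U}\mtx{V}^*),
\]
so that $\PO\mtx{Y}=\mtx{Y}$ and $\PT\mtx{Y} = \cAOT \cAOT^{-1}(\mtx{U}\mtx{V}^*) = \mtx{U}\mtx{V}^*$. Setting $\mtx{H} = \PT - (n^2/m)\cAOT$ on $T$, Step 2 gives $\|\mtx{H}\|_{T\to T}\le 1/2$, so
\[
\cAOT^{-1}(\mtx{U}\mtx{V}^*) = (n^2/m)\sum_{k\ge 0}\mtx{H}^k (\mtx{U}\mtx{V}^*).
\]
Using $\PTc\PT = 0$, the task reduces to showing
\[
\sum_{k\ge 0} (n^2/m)\,\bigl\|\PTc(\PO - \PT)\,\mtx{H}^k(\mtx{U}\mtx{V}^*)\bigr\| < 1
\]
with high probability, with enough decay in $k$ to allow termwise union bounds.

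\textbf{Step 4 (main obstacle: moment bounds).} The principal difficulty is obtaining sharp operator-norm bounds on each term above, since each is a sum of products of dependent random rank-one operators. My plan is the moment method: bound the $2p$-th Schatten norm for $p \sim \log n$, then apply Markov and a union bound over $k$. Expanding $\mtx{H}^k(\mtx{U}\mtx{V}^*)$ produces a combinatorial sum indexed by sequences $(a_1,b_1),\dots,(a_{k+1},b_{k+1})$, with repeated Bernoulli indicators $\delta_{a_i b_i}$. Because the same pair may appear in several factors, a decoupling inequality (de la Pe\~na--Montgomery-Smith) is needed to replace the repeated indicators by independent copies, after which the noncommutative Khintchine inequality bounds the expected $2p$-th moment by a combinatorial trace. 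Evaluating that trace amounts to counting closed walks in an auxiliary bipartite multigraph, with each walk's contribution controlled by the incoherence bounds from Step~2 and by the density $m/n^2$. Summing walks and summing over $k$ shows the certificate works whenever $m \ge Cn^{5/4}r\log n$, giving \eqref{eq:main1}. For $r \le n^{1/5}$, the higher Neumann terms are dominated by a handful of low-order contributions whose combinatorics admits a sharper evaluation, saving a factor that translates into the improved exponent $6/5$ in \eqref{eq:main2}. This combinatorial/analytic bookkeeping is where the bulk of the technical work lies.
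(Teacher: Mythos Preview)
Your overall architecture matches the paper almost exactly: reduce to a dual certificate (Lemma~3.1 in the paper), show $\PT\PO\PT$ is well-conditioned via a Rudelson-type bound (Theorem~4.1), build $\mtx{Y}$ as the least-squares solution and expand via a Neumann series, and control each term by decoupling plus noncommutative Khintchine (Sections~6.1--6.4). The incoherence estimates for the random orthogonal model are exactly Lemma~2.2, so Theorem~\ref{teo:main} is indeed the specialization of the general Theorem~1.3.

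One point in Step~4 is inverted, though, and it matters for getting the exponents right. The exponents $5/4$ and $6/5$ do \emph{not} come from the combinatorics of the moment expansion; they come from the crude \emph{tail} estimate (Lemma~4.8), which bounds $\sum_{k\ge k_0}\mathcal{H}^k(\mtx{E})$ in Frobenius norm and produces a factor $(n^2r/m)^{1/2}(\mu_0 nr\log n/m)^{k_0/2}$. Using sharp decoupling/Khintchine bounds only for $k=0,1,2$ and then this crude bound for $k_0=3$ forces $m\gtrsim n^{5/4}r\log n$. To reach $6/5$ you push the tail cutoff to $k_0=4$, which requires an additional sharp bound on the $k=3$ term (Lemma~4.7); that lemma is where the $\mu_0^2 nr^2\log n$ scaling enters, and \emph{that} is the origin of the restriction $r\le n^{1/5}$. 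So the improvement comes from analyzing \emph{more} low-order terms precisely (not fewer), and the rank restriction comes from the extra low-order term, not from any tail domination.
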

The theorem states that a surprisingly small number of entries are
sufficient to complete a generic low-rank matrix. For small values of
the rank, e.g. when $r = O(1)$ or $r = O(\log n)$, one only needs to
see on the order of $n^{6/5}$ entries (ignoring logarithmic factors)
which is considerably smaller than $n^2$---the total number of entries
of a squared matrix. The real feat, however, is that the recovery
algorithm is tractable and very concrete. Hence the contribution is
twofold:
\begin{itemize}
\item Under the hypotheses of Theorem \ref{teo:main}, there is a
  unique low-rank matrix which is consistent with the observed
  entries.
\item Further, this matrix can be recovered by the convex
optimization  \eqref{eq:sdp}. In other words, for most problems, the
  nuclear norm relaxation is {\em formally equivalent} to the
  combinatorially hard rank minimization problem \eqref{eq:l0}.
\end{itemize}

Theorem \ref{teo:main} is in fact a special instance of a far more
general theorem that covers a much larger set of matrices $\mtx{M}$.
We describe this general class of matrices and precise recovery
conditions in the next section.

\subsection{Main results}
\label{sec:main}

As seen in our first example \eqref{eq:problem1}, it is impossible
to recover a matrix which is equal to zero in nearly all of its
entries unless we see all the entries of the matrix. To recover a
low-rank matrix, this matrix cannot be in the null space of the
sampling operator giving the values of a subset of the entries. Now
it is easy to see that if the singular vectors of a matrix $\mtx{M}$
are highly concentrated, then $\mtx{M}$ could very well be in the
null-space of the sampling operator. For instance consider the
rank-2 symmetric matrix $\mtx{M}$ given by
\[
\mtx{M} = \sum_{k = 1}^2 \sigma_k \vct{u}_k \vct{u}_k^*, \quad
\begin{array}{ll}
  \vct{u}_1 & = (\vct{e}_1 + \vct{e}_2)/\sqrt{2},\\
  \vct{u}_2 & = (\vct{e}_1 - \vct{e}_2)/\sqrt{2},
 \end{array}
\]
where the singular values are arbitrary. Then this matrix vanishes
everywhere except in the top-left $2 \times 2$ corner and one would
basically need to see all the entries of $\mtx{M}$ to be able to
recover this matrix exactly by any method whatsoever. There is an
endless list of examples of this sort.  Hence, we arrive at the
notion that, somehow, the singular vectors need to be sufficiently
spread---that is, uncorrelated with the standard basis---in order to
minimize the number of observations needed to recover a low-rank
matrix.\footnote{Both the left and right singular vectors need to be
  uncorrelated with the standard basis. Indeed, the matrix $\vct{e}_1
  \vct{v}^*$ has its first row equal to $\vct{v}$ and all the others
  equal to zero. Clearly, this rank-1 matrix cannot be recovered unless
  we basically see all of its entries.} This motivates the following
definition.
\begin{definition}
\label{def:coherence} Let $U$ be a subspace of $\mathbb{R}^n$ of
dimension $r$ and $\mtx{P}_U$ be the orthogonal projection onto $U$.
Then the \emph{coherence} of $U$ (vis-\`a-vis the standard basis
$(\vct{e}_i)$) is defined to be
\begin{equation}
\label{eq:coherence} \mu(U) \equiv \frac{n}{r} \max_{1 \le i \le n}
\|\mtx{P}_U
  \vct{e}_i\|^2.
\end{equation}
\end{definition}
Note that for any subspace, the smallest $\mu(U)$ can be is $1$,
achieved, for example, if $U$ is spanned by vectors whose entries all
have magnitude $1/\sqrt{n}$. The largest possible value for $\mu(U)$
is $n/r$ which would correspond to any subspace that contains a
standard basis element.  We shall be primarily interested in subspace
with low coherence as matrices whose column and row spaces have low
coherence cannot really be in the null space of the sampling
operator. For instance, we will see that the random subspaces
discussed above have nearly minimal coherence.

To state our main result, we introduce two assumptions about an
$n_1\times n_2$ matrix $\mtx{M}$ whose SVD is given by $\mtx{M} =
\sum_{1 \le k \le r} \sigma_k \vct{u}_k \vct{v}_k^*$ and with column
and row spaces denoted by $U$ and $V$ respectively.
\begin{description}
\item[{A0}] The coherences obey $\max(\mu(U), \mu(V)) \le \mu_0$ for
  some positive $\mu_0$.
\item[{A1}] The $n_1 \times n_2$ matrix $\sum_{1 \le k \le r}
  \vct{u}_k \vct{v}_k^*$ has a maximum entry bounded by $\mu_1
  \sqrt{r/(n_1 n_2)}$ in absolute value for some positive $\mu_1$.
\end{description}
The $\mu$'s above may depend on $r$ and $n_1, n_2$. Moreover, note
that {\bf A1} always holds with $\mu_1 = \mu_0 \, \sqrt{r}$ since
the $(i,j)$th entry of the matrix $\sum_{1\le k \le r} \vct{u}_k
\vct{v}_k^*$ is given by $\sum_{1\le k \le r} u_{ik} v_{jk}$ and by
the Cauchy-Schwarz inequality,
\[
\left|\sum_{1\le k \le r} u_{ik} v_{jk}\right| \le \sqrt{\sum_{1\le k
    \le r} |u_{ik}|^2}\, \sqrt{\sum_{1\le k \le r} |v_{jk}|^2} \le
\frac{\mu_0 r}{\sqrt{n_1 n_2}}.
\]
Hence, for sufficiently small ranks, $\mu_1$ is comparable to $\mu_0$.
As we will see in Section \ref{sec:whichlow}, for larger ranks, both
subspaces selected from the uniform distribution and spaces
constructed as the span of singular vectors with bounded entries are
not only incoherent with the standard basis, but also obey {\bf A1}
with high probability for values of $\mu_1$ at most logarithmic in
$n_1$ and/or $n_2$. Below we will assume that $\mu_1$ is greater than
or equal to 1.

We are in the position to state our main result: if a matrix has row
and column spaces that are incoherent with the standard basis, then
nuclear norm minimization can recover this matrix from a random
sampling of a small number of entries.
\begin{theorem}
  \label{teo:main2} Let $\mtx{M}$ be an $n_1 \times n_2$ matrix of
  rank $r$ obeying {\bf A0} and {\bf A1} and put $n = \max(n_1,n_2)$.
  Suppose we observe $m$ entries of $\mtx{M}$ with locations sampled
  uniformly at random. Then there exist constants $C$, $c$ such that
  if
\begin{equation}\label{eq:main3}
  m \ge  \, C
  \max(\mu_1^2, \mu_0^{1/2} \mu_1, \mu_0 n^{1/4}) \, n r (\beta \log n)
\end{equation}
for some $\beta>2$, then the minimizer to the problem \eqref{eq:sdp}
is unique and equal to $\mtx{M}$ with probability at least $1 - c
n^{-\beta}$.  For $r \le \mu_0^{-1} n^{1/5}$ this estimate can be
improved to
\begin{equation}
  \label{eq:main4} m \ge   C \, \mu_0
  \, n^{6/5} r (\beta \log n)
\end{equation}
with the same probability of success.
\end{theorem}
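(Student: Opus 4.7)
The proof proceeds by constructing a dual certificate for the convex program \eqref{eq:sdp}. I would first pass to the Bernoulli sampling model in which each entry is observed independently with probability $p = m/(n_1n_2)$; a standard reduction converts exact-recovery probabilities between this model and the uniform-$m$-sample model. Let $T$ be the tangent space to the manifold of rank-$r$ matrices at $\mtx{M}$, i.e.\ the span of matrices of the form $\vct{u}_k\vct{x}^* + \vct{y}\vct{v}_k^*$. Subgradient calculus for the nuclear norm then says that $\mtx{M}$ is the unique optimum provided (a) $\PO$ is injective on $T$, and (b) there exists $\mtx{Y}$ supported on $\Omega$ with $\PT(\mtx{Y}) = \sum_k \vct{u}_k\vct{v}_k^*$ and $\|\PTc(\mtx{Y})\| < 1$ in operator norm. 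Everything reduces to verifying these two conditions.

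The quantitative backbone for both (a) and the construction in (b) is the concentration estimate $\|p^{-1}\POT - \PT\|_{\mathrm{op}} \le 1/2$ as operators on $T$, where $\POT := \PT\PO\PT$. I would establish this via the noncommutative Khintchine inequality (or equivalently matrix Bernstein) applied to the self-adjoint, mean-zero random operator $\sum_{a,b}(\delta_{ab} - p)\,\langle\,\cdot\,, \eab\rangle\,\PT(\eab)$; assumption {\bf A0} furnishes the crucial variance bound $\|\PT(\eab)\|_F^2 \le 2\mu_0 r/n$. This delivers injectivity of $\PO$ on $T$ and shows $\cAOT := \POT$ is invertible on $T$ with $\|\cAOT^{-1}\|_{\mathrm{op}} \le 2/p$, so the natural ansatz
\[
\mtx{Y} := \PO\,\cAOT^{-1}\Bigl(\sum_{k=1}^r \vct{u}_k\vct{v}_k^*\Bigr)
\]
is automatically supported on $\Omega$ and satisfies $\PT(\mtx{Y}) = \sum_k\vct{u}_k\vct{v}_k^*$ by construction.

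It remains to show $\|\PTc(\mtx{Y})\| < 1$, which is the heart of the matter. Expanding $\cAOT^{-1} = p^{-1}\sum_{k\ge 0}(\PT - p^{-1}\POT)^k$ as a Neumann series and using $\PTc\PT = 0$, one obtains
\[
\PTc(\mtx{Y}) = \sum_{k\ge 0}\PTc\bigl(p^{-1}(\PO - p\OpId)\bigr)(\PT - p^{-1}\POT)^k\Bigl(\sum_j \vct{u}_j\vct{v}_j^*\Bigr),
\]
so it suffices to bound each summand in spectral norm and sum the resulting series. The $k$-th summand is a $(k{+}1)$-linear sum over Bernoulli variables $\delta_{ab}$ with kernels built from inner products of the $\PT(\eab)$; assumption {\bf A1} enters through an $\ell_\infty$ bound on $\sum_j\vct{u}_j\vct{v}_j^*$ which feeds the variance. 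This is the main obstacle: a direct moment computation tangles the $k+1$ indices together in a way that prevents a clean application of matrix concentration. The remedy is a decoupling argument that replaces the dependent multi-index sum by a product of sums in mutually independent Bernoulli copies, after which the noncommutative Khintchine inequality controls the high moments of each factor --- essentially a sum of independent rank-one matrices --- with coherence-driven variance. Summing the resulting geometric-type series yields $\|\PTc(\mtx{Y})\| < 1$ once $m$ dominates the right-hand side of \eqref{eq:main3}, with the $\mu_0 n^{1/4}$ term arising from the worst higher-order summand. The sharper threshold \eqref{eq:main4} then follows by invoking the universal bound $\mu_1 \le \mu_0\sqrt{r}$ in the small-rank regime $r \le \mu_0^{-1}n^{1/5}$, which makes the $\mu_0 n^{1/4}$ contribution dominate the $\mu_1^2$ and $\mu_0^{1/2}\mu_1$ ones.
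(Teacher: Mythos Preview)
Your overall architecture is correct and matches the paper almost exactly: Bernoulli reduction, the Rudelson-type bound $\|p^{-1}\PT\PO\PT - \PT\|\le 1/2$ for injectivity, the least-squares dual certificate, the Neumann expansion of $\PTc(\mtx{Y})$, and the use of decoupling plus noncommutative Khintchine to control the low-order summands. The $\mu_0 n^{1/4}$ term does indeed come from a crude Frobenius-norm tail bound on $\sum_{k\ge k_0}\mathcal{H}^k(\mtx{E})$ with $k_0=3$, exactly as you say.

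However, your derivation of \eqref{eq:main4} has a genuine gap. You claim that in the regime $r\le \mu_0^{-1}n^{1/5}$ one simply uses $\mu_1\le\mu_0\sqrt r$ so that $\mu_1^2$ and $\mu_0^{1/2}\mu_1$ are dominated by $\mu_0 n^{1/4}$. That is true, but it only yields $m\gtrsim \mu_0 n^{5/4} r\log n$, which is \eqref{eq:main3} specialized, \emph{not} the sharper $n^{6/5}$ bound claimed in \eqref{eq:main4}. Nothing in your argument produces the exponent $1/5$.

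The paper obtains \eqref{eq:main4} by going one order deeper in the Neumann series: it bounds the $k=3$ term $p^{-1}(\PTc\PO\PT)\mathcal{H}^3(\mtx{E})$ explicitly (this requires an additional, rather delicate estimate giving a condition of the form $m\gtrsim \mu_0^2 nr^2\log n$), and only then applies the crude tail bound starting at $k_0=4$. With $k_0=4$ the tail bound $(n^2r/m)^{1/2}(\mu_0 nr\log n/m)^{k_0/2}\lesssim 1$ translates into $m\gtrsim \mu_0 n^{6/5}r\log n$, and the explicit $k=3$ estimate is dominated by this precisely when $\mu_0 r\le n^{1/5}$. So the mechanism behind the improvement is ``one more decoupled moment estimate, then a better tail,'' not a reshuffling of the three terms in \eqref{eq:main3}.
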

Theorem \ref{teo:main2} asserts that if the coherence is low, few
samples are required to recover $\mtx{M}$. For example, if $\mu_0 =
O(1)$ and the rank is not too large, then the recovery is exact with
large probability provided that
%(Old version): Theorem \ref{teo:main2} asserts that
%if the coherence is low, few samples are required. For example, if
%the coherence is low which is what we have in mind, $\mu_0 = O(1)$
%and $\mu_1$ is poly-logarithmic in $n$, the recovery is exact with
%large probability provided that
\begin{equation}
\label{eq:incoherent-teo} m \ge C \, n^{6/5} r \log n\,.
\end{equation}
We give two illustrative examples of matrices with incoherent column
and row spaces. This list is by no means exhaustive.
\begin{enumerate}
\item The first example is the random orthogonal model. For values of
  the rank $r$ greater than $\log n$, $\mu(U)$ and $\mu(V)$ are
  $O(1)$, $\mu_1 = O(\log n)$ both with very large probability.
  Hence, the recovery is exact provided that $m$ obeys
  \eqref{eq:main1} or \eqref{eq:main2}. Specializing Theorem
  \ref{teo:main2} to these values of the parameters gives Theorem
  \ref{teo:main}. Hence, Theorem \ref{teo:main} is a special case of
  our general recovery result.

\item The second example is more general and, in a nutshell, simply
  requires that the components of the singular vectors of $\mtx{M}$
  are small.  Assume that the $\vct{u}_j$ and $\vct{v}_j$'s obey
\begin{equation}
  \label{eq:muB}
  \max_{ij} |\<\vct{e}_i, \vct{u}_j\>|^2 \le \mu_B/n, \quad  \max_{ij} |\<\vct{e}_i, \vct{v}_j\>|^2
  \le \mu_B/n,
\end{equation}
for some value of $\mu_B=O(1)$. Then the maximum coherence is at
most $\mu_B$ since $\mu(U) \le \mu_B$ and $\mu(V) \le \mu_B$.
Further, we will see in Section \ref{sec:whichlow} that ${\bf A1}$
holds most of the time with $\mu_1 = O(\sqrt{\log n})$. Thus, for
matrices with singular vectors obeying \eqref{eq:muB}, the recovery
is exact provided that $m$ obeys \eqref{eq:incoherent-teo} for
values of the rank not exceeding $\mu_B^{-1} n^{1/5}$.
\end{enumerate}

\subsection{Extensions}
\label{sec:extensions}

Our main result (Theorem~\ref{teo:main2}) extends to a variety of
other low-rank matrix completion problems beyond the sampling of
entries. Indeed, suppose we have two orthonormal bases
$\vct{f}_1,\ldots, \vct{f}_{n}$ and $\vct{g}_1,\ldots, \vct{g}_{n}$
of $\R^n$, and that we are interested in solving the rank
minimization problem
\begin{equation}\label{eq:rank-min-prob-general}
  \begin{array}{ll}
\textrm{minimize}   & \quad \textrm{rank}(\mtx{X})\\
\textrm{subject to} & \quad \vct{f}_i^* \mtx{X} \vct{g}_j =
\vct{f}_i^* \vct{M} \vct{g}_j, \quad (i,j) \in \Omega,
 \end{array}\,.
\end{equation}
This comes up in a number of applications. As a motivating example,
there has been a great deal of interest in the machine learning
community in developing specialized algorithms for the
\emph{multiclass} and \emph{multitask} learning problems (see, e.g.,
\cite{Abernethy06, Argyriou07, Amit07}).  In multiclass learning,
the goal is to build multiple classifiers with the same training
data to distinguish between more than two categories. For example,
in face recognition, one might want to classify whether an image
patch corresponds to an eye, nose, or mouth.  In multitask learning,
we have a large set of data, but have a variety of different
classification tasks, and, for each task, only partial subsets of
the data are relevant.  For instance, in activity recognition, we
may have acquired sets of observations of multiple subjects and want
to determine if each observed person is walking or running. However,
a different classifier is to be learned for each individual, and it
is not clear how having access to the full collection of
observations can improve classification performance. Multitask
learning aims precisely to take advantage of the access to the full
database to improve performance on the individual tasks.

In the abstract formulation of this problem for linear classifiers,
we have $K$ classes to distinguish and are given training examples
$\vct{f}_1, \ldots, \vct{f}_n$.  For each example, we are given
partial labeling information about which classes it belongs or does
not belong to. That is, for each example $\vct{f}_j$ and class $k$,
we may either be told that $\vct{f}_j$ belongs to class $k$, be told
$\vct{f}_j$ does not belong to class $k$, or provided no information
about the membership of $\vct{f}_j$ to class $k$.  For each class $1
\le k \le K$, we would like to produce a linear function $\vct{w}_k$
such that $\vct{w}_k^* \vct{f}_i
> 0$ if $\vct{f}_i$ belongs to class $k$ and $\vct{w}_k^* \vct{f}_i
< 0$ otherwise.  Formally, we can search for the vector $\vct{w}_k$
that satisfies the equality constraints $\vct{w}_k^* \vct{f}_i =
y_{ik}$ where $y_{ik}=1$ if we are told that $\vct{f}_i$ belongs to
class $k$, $y_{ik}=-1$ if we are told that $\vct{f}_i$ does not
belong to class $k$, and $y_{ik}$ unconstrained if we are not
provided information. A common hypothesis in the multitask setting
is that the $\vct{w}_k$ corresponding to each of the classes
together span a very low dimensional subspace with dimension
significantly smaller than $K$~\cite{Abernethy06, Argyriou07,
Amit07}.  That is, the basic assumption is that
\[
    \mtx{W} = [\vct{w}_1, \ldots, \vct{w}_K]
\]
is low-rank. Hence, the multiclass learning problem can be cast as
\eqref{eq:rank-min-prob-general} with observations of the form
$\vct{f}_i^* \mtx{W} \vct{e}_j$.

To see that our theorem provides conditions under which
\eqref{eq:rank-min-prob-general} can be solved via nuclear norm
minimization, note that there exist unitary transformations
$\mtx{F}$ and $\mtx{G}$ such that $\vct{e}_j = \mtx{F} \vct{f}_j$
and $\vct{e}_j = \mtx{G} \vct{g}_j$ for each $j = 1,\ldots, n$.
Hence,
\begin{equation*}
  \vct{f}_i^* \mtx{X} \vct{g}_j = \vct{e}_i^* (\mtx{F} \mtx{X} \mtx{G}^*) \vct{e}_j.
\end{equation*}
% Write the SVD of $\mtx{M}$ as $\mtx{U} \Sigma \mtx{V}^*$.
Then if the conditions of Theorem~\ref{teo:main2} hold for the
matrix $\mtx{F} \mtx{X} \mtx{G}^*$, it is immediate that nuclear
norm minimization finds the unique optimal solution of
\eqref{eq:rank-min-prob-general} when we are provided a large enough
random collection of the inner products $\vct{f}_i^* \mtx{M}
\vct{g}_j$. In other words, all that is needed is that the column
and row spaces of $\mtx{M}$ be respectively incoherent with the
basis $(\vct{f}_i)$ and $(\vct{g}_i)$.

From this perspective, we additionally remark that our results likely
extend to the case where one observes a small number of arbitrary
linear functionals of a hidden matrix $\mtx{M}$. Set $N= n^2$ and
$\mtx{A}_1,\ldots, \mtx{A}_{N}$ be an orthonormal basis for the linear
space of $n \times n$ matrices with the usual inner product
$\<\mtx{X}, \mtx{Y}\> = \trace(\mtx{X}^* \mtx{Y})$.  Then we expect
our results should also apply to the rank minimization problem
\begin{equation}
\label{eq:generalrank}
  \begin{array}{ll}
    \textrm{minimize}   & \quad \textrm{rank}(\mtx{X})\\
    \textrm{subject to} & \quad \<\mtx{A}_k, \mtx{X}\> =
\<\mtx{A}_k, \mtx{M}\> \quad k \in \Omega,
 \end{array}
\end{equation}
where $\Omega \subset\{1,\ldots, N\}$ is selected uniformly at
random. % The Euclidean inner product between two matrices is
% $\<\mtx{X}, \mtx{Y}\> = \trace(\mtx{X}^* \mtx{Y})$.
In fact, \eqref{eq:generalrank} is \eqref{eq:l0} when the orthobasis
is the canonical basis $(\vct{e}_i \vct{e}_j^*)_{1 \le i, j \le
  n}$. Here, those low-rank matrices which have small inner product
with all the basis elements $\mtx{A}_k$ may be recoverable by nuclear
norm minimization. To avoid unnecessary confusion and notational
clutter, we leave this general low-rank recovery problem for future
work.

\subsection{Connections, alternatives and prior art}
\label{sec:background}

Nuclear norm minimization is a recent heuristic introduced by Fazel
in~\cite{FazelThesis}, and is an extension of the trace heuristic
often used by the control community, see
e.g.~\cite{Beck98,Mesbahi97}. Indeed, when the matrix variable is
symmetric and positive semidefinite, the nuclear norm of $\mtx{X}$
is the sum of the (nonnegative) eigenvalues and thus equal to the
trace of $\mtx{X}$. Hence, for positive semidefinite unknowns,
\eqref{eq:sdp} would simply minimize the trace over the constraint
set:
\begin{equation*}
  \begin{array}{ll}
\textrm{minimize}   & \quad  \trace(\mtx{X})\\
\textrm{subject to} & \quad   X_{ij} = M_{ij} \quad (i,j) \in \Omega\\
& \quad \mtx{X} \succeq 0
 \end{array}.
\end{equation*}
This is a semidefinite program.  Even for the general matrix $\mtx{M}$
which may not be positive definite or even symmetric, the nuclear norm
heuristic can be formulated in terms of semidefinite programming as,
for instance, the program \eqref{eq:sdp} is equivalent to
\begin{equation*}
  \begin{array}{ll}
\textrm{minimize}   & \quad  \trace(\mtx{W}_1) + \trace(\mtx{W}_2)\\
\textrm{subject to} & \quad   X_{ij} = M_{ij} \quad (i,j) \in \Omega\\
& \quad \begin{bmatrix} \mtx{W}_1 & \mtx{X}\\
\mtx{X}^* & \mtx{W}_2
\end{bmatrix} \succeq 0
 \end{array}
\end{equation*}
with optimization variables $\mtx{X}$, $\mtx{W}_1$ and $\mtx{W}_2$,
(see, e.g.,~\cite{FazelThesis,Vandenberghe96}). There are many
efficient algorithms and high-quality software available for solving
these types of problems.

% BHR (5/27) Tightened this paragraph slightly
Our work is inspired by results in the emerging field of {\em
compressive sampling} or {\em compressed sensing}, a new paradigm
for acquiring information about objects of interest from what
appears to be a highly incomplete set of
measurements~\cite{CRT06,OptimalRecovery,DonohoCS}. In practice,
this means for example that high-resolution imaging is possible with
fewer sensors, or that one can speed up signal acquisition time in
biomedical applications by orders of magnitude, simply by taking far
fewer specially coded samples. Mathematically speaking, we wish to
reconstruct a signal $\vct{x}\in \R^n$ from a small number
measurements $\vct{y} = \Phi \vct{x}$, $\vct{y} \in \R^m$, and $m$
is much smaller than $n$; i.e.~we have far fewer equations than
unknowns.  In general, one cannot hope to reconstruct $\vct{x}$ but
assume now that the  object we wish to recover is known to be
structured in the sense that it is sparse (or approximately sparse).
This means that the unknown object depends upon a smaller number of
unknown parameters. Then it has been shown that $\ell_1$
minimization allows recovery of sparse signals from remarkably few
measurements: supposing $\Phi$ is chosen randomly from a suitable
distribution, then with very high probability, all sparse signals
with about $k$ nonzero entries can be recovered from on the order of
$k \log n$ measurements. For instance, if $\vct{x}$ is $k$-sparse in
the Fourier domain, i.e.~$\vct{x}$ is a superposition of $k$
sinusoids, then it can be perfectly recovered with high
probability---by $\ell_1$ minimization---from the knowledge of about
$k \log n$ of its entries sampled uniformly at random \cite{CRT06}.

From this viewpoint, the results in this paper greatly extend the
theory of compressed sensing by showing that other types of
interesting objects or structures, beyond sparse signals and images,
can be recovered from a limited set of measurements. Moreover, the
techniques for proving our main results build upon ideas from the
compressed sensing literature together with probabilistic tools such
as the powerful techniques of Bourgain and of Rudelson for bounding
norms of operators between Banach spaces.

Our notion of incoherence generalizes the concept of the same name
in compressive sampling.  Notably, in~\cite{CR07}, the authors
introduce the notion of the incoherence of a unitary transformation.
Letting $\mtx{U}$ be an $n \times n$ unitary matrix, the
\emph{coherence} of $\mtx{U}$ is given by
\[
\mu(\mtx{U}) = n\max_{j,k} |U_{jk}|^2.
\]
This quantity ranges in values from $1$ for a unitary transformation
whose entries all have the same magnitude to $n$ for the identity
matrix.  Using this notion, \cite{CR07} showed that with high
probability, a $k$-sparse signal could be recovered via linear
programming from the observation of the inner product of the signal
with $m = \Omega(\mu(\mtx{U}) k\, \log n)$ randomly selected columns
of the matrix $\mtx{U}$. This result provided a generalization of
the celebrated results about partial Fourier observations described
in~\cite{CRT06}, a special case where $\mu(\mtx{U})=1$. 
This paper generalizes the notion of incoherence to problems beyond
the setting of sparse signal recovery.

In~\cite{Recht07}, the authors studied the nuclear norm heuristic
applied to a related problem where partial information about a
matrix $\mtx{M}$ is available from $m$  equations of the form
\begin{equation}
  \label{eq:linearfunctional}
  \<\mtx{A}^{(k)}, M\> =  \sum_{ij} A^{(k)}_{ij} M_{ij} = b_k,  \qquad k = 1, \ldots, m,
\end{equation}
where for each $k$, $\{A^{(k)}_{ij}\}_{ij}$ is an i.i.d.~sequence of
Gaussian or Bernoulli random variables and the sequences
$\{\mtx{A}^{(k)}\}$ are also independent from each other (the
sequences $\{\mtx{A}^{(k)}\}$ and $\{b_k\}$ are available to the
analyst). Building on the concept of {\em restricted isometry}
introduced in \cite{Candes05} in the context of sparse signal
recovery, \cite{Recht07} establishes the first sufficient conditions
for which the nuclear norm heuristic returns the minimum rank element
in the constraint set.  They prove that the heuristic succeeds with
large probability whenever the number $m$ of available measurements is
greater than a constant times $2 n r \log n$ for $n \times n$
matrices. Although this is an interesting result, a serious impediment
to this approach is that one needs to essentially measure random
projections of the unknown data matrix---a situation which
unfortunately does not commonly arise in practice.  Further, the
measurements in \eqref{eq:linearfunctional} give some information
about {\em all} the entries of $\mtx{M}$ whereas in our problem,
information about most of the entries is simply not available. In
particular, the results and techniques introduced in~\cite{Recht07} do
not begin to address the matrix completion problem of interest to us
in this paper. As a consequence, our methods are completely different;
for example, they do not rely on any notions of restricted
isometry. Instead, as we discuss below, we prove the existence of a
Lagrange multiplier for the optimization (\ref{eq:sdp}) that certifies
the unique optimal solution is precisely the matrix that we wish to
recover.

Finally, we would like to briefly discuss the possibility of other
recovery algorithms when the sampling happens to be chosen in a very
special fashion.  For example, suppose that $\mtx{M}$ is generic and
that we precisely observe every entry in the first $r$ rows and
columns of the matrix.  Write $\mtx{M}$ in block form as
\begin{equation*}
   \mtx{M} = \left[\begin{array}{cc} \mtx{M}_{11} &  \mtx{M}_{12}\\ \mtx{M}_{21} & \mtx{M}_{22}
\end{array}\right]
\end{equation*}
with $\mtx{M}_{11}$ an $r\times r$ matrix. In the special case that
$\mtx{M}_{11}$ is invertible and $\mtx{M}$ has rank $r$, then it is
easy to verify that $\mtx{M}_{22} =
\mtx{M}_{21}\mtx{M}_{11}^{-1}\mtx{M}_{12}$. One can prove this
identity by forming the SVD of $\mtx{M}$, for example. That is, if
$\mtx{M}$ is generic, and the upper $r\times r$ block is invertible,
and we observe \emph{every} entry in the first $r$ rows and columns,
we can recover $\mtx{M}$.  This result immediately generalizes to
the case where one observes precisely $r$ rows and $r$ columns and
the $r \times r$ matrix at the intersection of the observed rows and
columns is invertible. However, this scheme has many practical
drawbacks that stand in the way of a generalization to a completion
algorithm from a general set of entries.  First, if we miss
\emph{any} entry in these rows or columns, we cannot recover
$\mtx{M}$, nor can we leverage any information provided by entries
of $\mtx{M}_{22}$. Second, if the matrix has rank less than $r$, and
we observe $r$ rows and columns, a combinatorial search to find the
collection that has an invertible square sub-block is required.
Moreover, because of the matrix inversion, the algorithm is rather
fragile to noise in the entries.

\subsection{Notations and organization of the paper}
\label{sec:notations}

The paper is organized as follows. We first argue in Section
\ref{sec:whichlow} that the random orthogonal model and, more
generally, matrices with incoherent column and row spaces obey the
assumptions of the general Theorem \ref{teo:main2}.  To prove Theorem
\ref{teo:main2}, we first establish sufficient conditions which
guarantee that the true low-rank matrix $\mtx{M}$ is the unique
solution to \eqref{eq:sdp} in Section \ref{sec:duality}. One of these
conditions is the existence of a dual vector obeying two crucial
properties. Section \ref{sec:architecture} constructs such a dual
vector and provides the overall architecture of the proof which shows
that, indeed, this vector obeys the desired properties provided that
the number of measurements is sufficiently large.  Surprisingly, as
explored in Section \ref{sec:coupon}, the existence of a dual vector
certifying that $\mtx{M}$ is unique is related to some problems in
random graph theory including ``the coupon collector's problem.''
Following this discussion, we prove our main result via several
intermediate results which are all proven in Section \ref{sec:proofs}.
Section \ref{sec:numerical} introduces numerical experiments showing
that matrix completion based on nuclear norm minimization works well
in practice.  Section \ref{sec:discussion} closes the paper with a
short summary of our findings, a discussion of important extensions
and improvements. In particular, we will discuss possible ways of
improving the 1.2 exponent in \eqref{eq:main4} so that it gets closer
to 1. Finally, the Appendix provides proofs of auxiliary lemmas
supporting our main argument.

Before continuing, we provide here a brief summary of the notations
used throughout the paper. Matrices are bold capital, vectors are bold
lowercase and scalars or entries are not bold. For instance, $\mtx{X}$
is a matrix and $X_{ij}$ its $(i,j)$th entry. Likewise $\vct{x}$ is a
vector and $x_i$ its $i$th component. When we have a collection of
vectors $\vct{u}_k\in \R^n$ for $1\leq k \leq d$, we will denote by
$u_{ik}$ the $i$th component of the vector $\vct{u}_k$ and
$[\vct{u}_1,\ldots,\vct{u}_d]$ will denote the $n\times d$ matrix
whose $k$th column is $\vct{u}_k$.

A variety of norms on matrices will be discussed. The spectral norm of
a matrix is denoted by $\|\mtx{X}\|$.  The Euclidean inner product
between two matrices is $\<\mtx{X}, \mtx{Y}\> = \trace(\mtx{X}^*
\mtx{Y})$, and the corresponding Euclidean norm, called the Frobenius
or Hilbert-Schmidt norm, is denoted $\|\mtx{X}\|_F$.  That is,
$\|\mtx{X}\|_F=\<\mtx{X},\mtx{X}\>^{1/2}$.  The nuclear norm of a
matrix $\mtx{X}$ is $\|\mtx{X}\|_*$. The maximum entry of $\mtx{X}$
(in absolute value) is denoted by $\|\mtx{X}\|_\infty \equiv \max_{ij}
|X_{ij}|$. For vectors, we will only consider the usual Euclidean
$\ell_2$ norm which we simply write as $\|\vct{x}\|$.

Further, we will also manipulate linear transformation which acts on
matrices and will use caligraphic letters for these operators as in
${\cal A}(\mtx{X})$.  In particular, the identity operator will be
denoted by $\OpId$. The only norm we will consider for these
operators is their spectral norm (the top singular value) denoted by
$\|{\cal A}\| = \sup_{\mtx{X} :
  \|\mtx{X}\|_F \le 1} \, \|{\cal A}(\mtx{X})\|_F$.

Finally, we adopt the convention that $C$ denotes a numerical
constant independent of the matrix dimensions, rank, and number of
measurements, whose value may change from line to line.  Certain
special constants with precise numerical values will be ornamented
with subscripts (e.g., $C_R$). Any exceptions to this notational
scheme will be noted in the text.

\section{Which matrices are incoherent?}
\label{sec:whichlow}

In this section we restrict our attention to square $n \times n$
matrices, but the extension to rectangular $n_1\times n_2$ matrices
immediately follows by setting $n=\max(n_1,n_2)$.

\subsection{Incoherent bases span incoherent subspaces}
Almost all $n\times n$ matrices $\mtx{M}$ with singular vectors
$\{\vct{u}_k\}_{1 \le k \le r}$ and $\{\vct{v}_k\}_{1 \le k \le r}$
obeying the size property \eqref{eq:muB} also satisfy the
assumptions {\bf A0} and {\bf A1} with $\mu_0 = \mu_B$, $\mu_1 = C
\mu_B \sqrt{\log n}$ for some positive constant $C$.  As mentioned
above, {\bf A0} holds automatically, but, observe that {\bf A1}
would not hold with a small value of $\mu_1$ if two rows of the
matrices $[\vct{u}_1, \ldots, \vct{u}_r]$ and $[\vct{v}_1, \ldots,
\vct{v}_r]$ are identical with all entries of magnitude
$\sqrt{\mu_B/n}$ since it is not hard to see that in this case
\[
\|\sum_k \vct{u}_k \vct{v}_k^*\|_\infty = \mu_B \, r/n.
\]
Certainly, this example is constructed in a very special way, and
should occur infrequently. We now show that it is generically
unlikely.

Consider the matrix
\begin{equation}\label{eq:example-inc-svd}
    \sum_{k=1}^r \epsilon_k \vct{u}_k \vct{v}_k^*,
\end{equation}
where $\{\epsilon_k\}_{1 \le k \le r}$ is an arbitrary sign
sequence. For almost all choices of sign sequences, {\bf A1} is
satisfied with $\mu_1 = O(\mu_B \sqrt{\log n})$. Indeed, if one
selects the signs uniformly at random, then for each $\beta > 0$,
\begin{equation}
  \label{eq:incoherent1}
  \P(\|  \sum_{k=1}^r \epsilon_k \vct{u}_k \vct{v}_k\|_\infty \ge \mu_B \, \sqrt{8\beta r \log n}/n) \le (2n^2)  \, n^{-\beta}. 
\end{equation}
This is of interest because suppose the low-rank matrix we wish to
recover is of the form
\begin{equation}\label{eq:example-inc-svd2}
   \mtx{M} = \sum_{k=1}^r \lambda_k \vct{u}_k \vct{v}_k^*
\end{equation}
with scalars $\lambda_k$.  Since the vectors $\{\vct{u}_k\}$ and
$\{\vct{v}_k\}$ are orthogonal, the singular values of $\mtx{M}$ are
given by $|\lambda_k|$ and the singular vectors are given by
$\sgn(\lambda_k) \vct{u}_k$ and $\vct{v}_k$ for $k = 1,\ldots, r$.
Hence, in this model {\bf A1} concerns the maximum entry of the matrix
given by \eqref{eq:example-inc-svd} with
$\epsilon_k=\sgn(\lambda_k)$. That is to say, for most sign patterns,
the matrix of interest obeys an appropriate size condition.  We
emphasize here that the only thing that we assumed about the
$\vct{u}_k$'s and $\vct{v}_k$'s was that they had small entries. In
particular, they could be equal to each other as would be the case for
a symmetric matrix.

The claim \eqref{eq:incoherent1} is a simple application of Hoeffding's
inequality. The $(i,j)$th entry of \eqref{eq:example-inc-svd} is given
by
\[
Z_{ij} = \sum_{1 \le k \le r} \epsilon_k u_{ik} v_{jk},
\]
and is a sum of $r$ zero-mean independent random variables, each
bounded by $\mu_B/n$. Therefore,
\[
\P(|Z_{ij}| \geq \lambda \mu_B\sqrt{r}/n) \leq 2e^{-\lambda^2/8}.
\]
Setting $\lambda$ proportional to $\sqrt{\log n}$ and applying the
union bound gives the claim.

To summarize, we say that $\mtx{M}$ is sampled from the {\em
  incoherent basis model} if it is of the form
\begin{equation}
\label{eq:incoherentmodel}
    \mtx{M} = \sum_{k=1}^r \epsilon_k \sigma_k \vct{u}_k \vct{v}_k^*;
  \end{equation}
  $\{\epsilon_k\}_{1 \le k \le r}$ is a random sign sequence, and
  $\{\vct{u}_k\}_{1 \le k \le r}$ and $\{\vct{v}_k\}_{1 \le k \le r}$
  have maximum entries of size at most $\sqrt{\mu_B/n}$.
  \begin{lemma}
    \label{teo:incoherentbases} There exist numerical constants $c$ and $C$
    such that for any $\beta > 0$, matrices from the incoherent basis
    model obey the assumption {\bf A1} with $\mu_1 \leq C\mu_B
    \sqrt{(\beta + 2) \log n}$ with probability at least $1-c n^{-\beta}$.
\end{lemma}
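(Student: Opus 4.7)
The plan is to show, entry by entry, that the random-sign combination $Z_{ij} = \bigl[\sum_{k} \epsilon_k \vct{u}_k \vct{v}_k^*\bigr]_{ij} = \sum_{k=1}^r \epsilon_k u_{ik} v_{jk}$ is small via Hoeffding's inequality, and then take a union bound over the $n^2$ entries. Note that since the singular values $\sigma_k$ in the incoherent basis model \eqref{eq:incoherentmodel} are absorbed into the signs of the $\lambda_k$ in \eqref{eq:example-inc-svd2}, assumption {\bf A1} for $\mtx{M}$ is exactly a uniform bound on the entries of $\sum_k \epsilon_k \vct{u}_k \vct{v}_k^*$.

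First, I fix $(i,j)$. Conditional on the $\vct{u}_k$'s and $\vct{v}_k$'s, the summands $\epsilon_k u_{ik} v_{jk}$ are independent mean-zero random variables. Using the size hypothesis $|u_{ik}|, |v_{jk}| \le \sqrt{\mu_B/n}$, each summand is bounded in absolute value by $\mu_B/n$, so $\sum_k (u_{ik} v_{jk})^2 \le r \mu_B^2/n^2$. Hoeffding's inequality therefore yields
\[
\P\!\left(|Z_{ij}| \ge \lambda \,\mu_B \sqrt{r}/n \right) \;\le\; 2\,e^{-\lambda^2/8}
\]
for every $\lambda > 0$ (the constant $8$ is slack and only affects the numerical constant $C$ in the conclusion).

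Second, I choose $\lambda = \sqrt{8(\beta+2)\log n}$, so that each entry fails the desired bound with probability at most $2 n^{-(\beta+2)}$. A union bound over the $n^2 \le n_1 n_2$ entries gives
\[
\P\!\left(\Bigl\|\sum_{k=1}^r \epsilon_k \vct{u}_k \vct{v}_k^*\Bigr\|_\infty \ge \mu_B \sqrt{8(\beta+2)\,r\log n}\,/n \right) \;\le\; 2 n^{-\beta},
\]
which is exactly assumption {\bf A1} with $\mu_1 = C\mu_B \sqrt{(\beta+2)\log n}$ for $C = \sqrt{8}$ (using $n = \max(n_1,n_2)$ so that $\sqrt{r/(n_1n_2)} \ge \sqrt{r}/n$).

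There is no real obstacle: everything reduces to a textbook Hoeffding bound plus a union bound, and the only mild subtlety is keeping track of the fact that the signs of the singular values of $\mtx{M}$ can be pushed onto the $\epsilon_k$'s without loss, so that controlling $\sum_k \epsilon_k \vct{u}_k \vct{v}_k^*$ is the same as controlling the matrix appearing in {\bf A1}.
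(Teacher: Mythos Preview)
Your proof is correct and follows essentially the same approach as the paper: Hoeffding's inequality applied entrywise to $Z_{ij}=\sum_k \epsilon_k u_{ik}v_{jk}$, followed by a union bound over the $n^2$ entries. The paper uses the same constant $8$ in the Hoeffding bound and the same $\beta\mapsto\beta+2$ shift to absorb the $n^2$ factor from the union bound; your handling of the sign absorption into the $\epsilon_k$'s also matches the paper's discussion preceding the lemma.
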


\subsection{Random subspaces span incoherent subspaces}

In this section, we prove that the random orthogonal model obeys the
two assumptions {\bf A0} and {\bf A1} (with appropriate values for the
$\mu$'s) with large probability.
\begin{lemma}
  \label{teo:gaussian} Set $\bar r = \max(r,\log n)$. Then there exist
  constants $C$ and $c$ such that the random orthogonal model
  obeys:\footnote{When $r \ge C' (\log n)^3$ for some positive constant
    $C'$, a better estimate is possible, namely, $\|\mtx{\sum_{1 \le k
        \le r} \vct{u}_k \vct{v}_k^*}\|_\infty \le C\, \sqrt{r\log
      n}/n$.}
  \begin{enumerate}
  \item $\max_i \|\mtx{P}_U \vct{e}_i\|^2 \le C \, \bar r/n$,
  \item $\|\sum_{1 \le k \le r} \vct{u}_k \vct{v}_k^*\|_\infty \le C
    \, \log n \, \sqrt{\bar r}/n$.
\end{enumerate}
  with probability $1- c n^{-3} \log n$.
\end{lemma}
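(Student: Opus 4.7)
The approach is via the standard Gaussian realization of the random orthogonal model. Let $\mtx{G}$ and $\mtx{H}$ be independent $n\times r$ matrices of i.i.d.\ $N(0,1)$ entries; then $\mtx{P}_U=\mtx{G}(\mtx{G}^*\mtx{G})^{-1}\mtx{G}^*$ is the orthogonal projector onto a uniformly random $r$-dimensional subspace, and $\mtx{U}=\mtx{G}(\mtx{G}^*\mtx{G})^{-1/2}$, $\mtx{V}=\mtx{H}(\mtx{H}^*\mtx{H})^{-1/2}$ are independent Haar samples on the Stiefel manifold. Two classical ingredients drive the proof: (i) by Wishart concentration, $\|(\mtx{G}^*\mtx{G})^{-1}\|\le 2/n$ off an event of probability $\le e^{-cn}$, and similarly for $\mtx{H}$; and (ii) the chi-square tail together with the tail of a single coordinate of the uniform distribution on $S^{n-1}$, both with polynomial decay.

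For part (1), write
\[
\|\mtx{P}_U\vct{e}_i\|^2 \;=\; \tilde g_i^{*}(\mtx{G}^*\mtx{G})^{-1}\tilde g_i \;\le\; \|(\mtx{G}^*\mtx{G})^{-1}\|\cdot\|\tilde g_i\|^2,
\]
where $\tilde g_i\in\R^r$ denotes the $i$th row of $\mtx{G}$ (marginally $N(0,\mtx{I}_r)$). The $\chi^2_r$ tail and a union bound over $i\in\{1,\ldots,n\}$ give $\max_i\|\tilde g_i\|^2\le C\bar r$ with polynomially small failure probability; here $\bar r=\max(r,\log n)$ enters precisely because for $r<\log n$ the $\chi^2_r$ tail is controlled by $\log n$ rather than by $r$. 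Combined with (i) this yields the first claim. The mild dependence between $\tilde g_i$ and $\mtx{G}^*\mtx{G}$ is harmless: one simply bounds each bad event unconditionally and union-bounds.

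For part (2), observe $(\mtx{U}\mtx{V}^*)_{ij}=\<\tilde u_i,\tilde v_j\>$, where $\tilde u_i,\tilde v_j\in\R^r$ are the $i$th and $j$th rows of $\mtx{U},\mtx{V}$. Conditional on $\mtx{U}$, the matrix $\mtx{V}$ remains Haar on the Stiefel manifold and is therefore invariant under right multiplication by $O(r)$. Choosing a rotation $\mtx{R}\in O(r)$ sending $\tilde u_i/\|\tilde u_i\|$ to $\vct{e}_1$ yields the distributional identity
\[
\<\tilde u_i,\tilde v_j\> \;\stackrel{d}{=}\; \|\tilde u_i\|\,V_{j1},
\]
where $V_{j1}$ is the $j$th entry of the first column of $\mtx{V}$, i.e., of a vector uniform on $S^{n-1}$. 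On the part~(1) event we have $\|\tilde u_i\|\le C\sqrt{\bar r/n}$, and the sub-Gaussian tail of a single coordinate of the uniform distribution on $S^{n-1}$ gives $|V_{j1}|\le C\sqrt{\log n/n}$ with failure probability $\le n^{-5}$. Combining,
\[
|\<\tilde u_i,\tilde v_j\>| \;\le\; C\sqrt{\bar r\log n}/n \;\le\; C\log n\,\sqrt{\bar r}/n,
\]
and a union bound over the $n^2$ entries closes the argument.

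The main point of care is propagating the dependences cleanly---between $\tilde g_i$ and $\mtx{G}^*\mtx{G}$ in part~(1), and between $\mtx{U}$ and $\mtx{V}$ in part~(2). Both are handled cheaply: by direct union bounds in part~(1), and by the rotational-invariance reduction in part~(2), which replaces a two-index concentration problem by the tail of a single scalar. The $\log n$ loss relative to the sharper $\sqrt{r\log n}/n$ bound quoted in the footnote for $r\gtrsim(\log n)^3$ is simply the cost of absorbing $\sqrt{\log n}$ into $\log n$ in the combined estimate.
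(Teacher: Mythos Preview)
Your argument is correct and takes a genuinely different route from the paper. For part~(1), the paper uses the exact distributional identity $\|\mtx{P}_U \vct{e}_i\|^2 \stackrel{d}{=} Y_r/Y_n$ with $Y_k=\sum_{j\le k} x_j^2$ for i.i.d.\ standard Gaussians, and then applies $\chi^2$ tail bounds to numerator and denominator; you instead bound the product $\|(\mtx{G}^*\mtx{G})^{-1}\|\cdot\|\tilde g_i\|^2$ via Wishart concentration. For part~(2), the paper symmetrizes with an independent Rademacher sequence---using $\sum_k \vct{u}_k\vct{v}_k^* \stackrel{d}{=} \sum_k \epsilon_k \vct{u}_k\vct{v}_k^*$---and applies Hoeffding entrywise with variance proxy $\sum_k u_{ik}^2 v_{jk}^2 \le (\max_{jk} v_{jk}^2)\,\|\mtx{P}_U\vct{e}_i\|^2$, which is where the extra $\sqrt{\log n}$ enters. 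Your rotational-invariance reduction is both cleaner and sharper: it collapses the entry to $\|\tilde u_i\|$ times a single coordinate of a uniform point on $S^{n-1}$ and yields $\sqrt{\bar r\log n}/n$ directly, i.e.\ the footnote bound, with no restriction on $r$.

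Three small points to tighten. First, the Wishart bound $\|(\mtx{G}^*\mtx{G})^{-1}\|\le 2/n$ with probability $1-e^{-cn}$ holds only when $r$ is bounded away from $n$ (the smallest singular value of an $n\times r$ Gaussian is about $\sqrt{n}-\sqrt{r}$); for $r\ge n/C$ the conclusion is trivial since $\|\mtx{P}_U\vct{e}_i\|^2\le 1\le C\bar r/n$, but this case split should be stated. Second, in the final union bound of part~(2) you need the single-index tail $\P(\|\tilde u_i\|>C\sqrt{\bar r/n})\le n^{-5}$ rather than the part-(1) event about the maximum over $i$: you cannot first condition on a $\mtx{U}$-measurable event and then invoke the unconditional law of $V_{j1}$, so the clean route is an entrywise bound $\P(|\langle\tilde u_i,\tilde v_j\rangle|>s)\le \P(\|\tilde u_i\|>C\sqrt{\bar r/n})+\P(|W|>C'\sqrt{\log n/n})$ followed by a union over $n^2$ pairs. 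Third, you assume $\mtx{U}$ and $\mtx{V}$ independent, whereas the paper's model formally allows coupling; the paper's own sign-symmetrization also leans on a joint symmetry of $(\mtx{U},\mtx{V})$, so this is not a limitation unique to your approach, but it is worth flagging.
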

We note that an argument similar to the following proof would give
that if $C$ of the form $K \beta$ where $K$ is a fixed numerical
constant, we can achieve a probability at least $1 - c n^{-\beta}$
provided that $n$ is sufficiently large. To establish these facts,
we make use of the standard result below \cite{LaurentMassart}.
\begin{lemma}
  Let $Y_d$ be distributed as a chi-squared random
  variable with $d$ degrees of freedom. Then for each $t > 0$
  \begin{equation}
    \label{eq:chi2}
    \P(Y_d - d \ge t \, \sqrt{2d}  + t^2)
\le e^{-t^2/2} \quad
\text{ and } \quad  \P(Y_d - d \le -t \,
\sqrt{2d}) \le e^{-t^2/2}.
  \end{equation}
\end{lemma}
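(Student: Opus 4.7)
The plan is to prove the Laurent--Massart concentration bounds by a Chernoff (moment generating function) argument applied to a sum of independent squared standard Gaussians.

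Write $Y_d = \sum_{i=1}^d X_i^2$ where $X_1,\dots,X_d$ are i.i.d.\ $N(0,1)$. A direct Gaussian integral gives the moment generating function $\E[e^{\lambda X_i^2}] = (1-2\lambda)^{-1/2}$ for $\lambda<1/2$, hence by independence
\[
\E[e^{\lambda Y_d}] = (1-2\lambda)^{-d/2}, \qquad \lambda<1/2.
\]
For the upper tail I would apply Markov's inequality to $e^{\lambda Y_d}$ with $\lambda \in (0,1/2)$:
\[
\log \P\bigl(Y_d \ge d + u\bigr) \;\le\; -\lambda(d+u) - \tfrac{d}{2}\log(1-2\lambda).
\]
Then expand the logarithm as a power series: since $(2\lambda)^k/k \le (2\lambda)^k/2$ for $k\ge 2$,
\[
-\log(1-2\lambda) \;=\; 2\lambda + \sum_{k\ge 2}\frac{(2\lambda)^k}{k} \;\le\; 2\lambda + \frac{2\lambda^2}{1-2\lambda}.
\]
Substituting into the Chernoff bound reduces the problem to showing
\[
-\lambda u + \frac{d\lambda^2}{1-2\lambda} \;\le\; -\frac{t^2}{2}
\]
with $u = t\sqrt{2d}+t^2$. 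The choice $\lambda = t/(\sqrt{2d}+2t) \in (0,1/2)$ makes the inequality sharp: one computes $1-2\lambda = \sqrt{2d}/(\sqrt{2d}+2t)$, and after simplification the left-hand side equals exactly $-t^2/2$. This yields the first inequality.

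For the lower tail I would instead write, for $\mu>0$,
\[
\P\bigl(Y_d \le d-v\bigr) \;\le\; e^{\mu(d-v)}\,\E[e^{-\mu Y_d}] \;=\; e^{\mu(d-v)}(1+2\mu)^{-d/2}.
\]
Using the elementary inequality $\log(1+x) \ge x - x^2/2$ for $x\ge 0$, we get $-\tfrac{d}{2}\log(1+2\mu) \le -d\mu + d\mu^2$, hence the log-probability is at most $-\mu v + d\mu^2$. Optimizing in $\mu$ gives $\mu = v/(2d)$ and the bound $-v^2/(4d)$. Setting $v = t\sqrt{2d}$ produces the claimed $e^{-t^2/2}$.

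The only potentially subtle step is the upper-tail optimization, where a careless power-series estimate would produce a strictly worse constant; the $(2\lambda)^k/k \le (2\lambda)^k/2$ estimate is what makes the exponent come out to $t^2/2$ rather than something weaker. Everything else is routine Chernoff bookkeeping.
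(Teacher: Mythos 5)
Your proof is correct. Note that the paper does not actually prove this lemma; it simply cites Laurent and Massart and uses the inequality as a black box. Your Chernoff-bound argument is the standard derivation and all the computations check out: the MGF $(1-2\lambda)^{-d/2}$, the power-series bound $-\log(1-2\lambda)\le 2\lambda + 2\lambda^2/(1-2\lambda)$ (which is exactly the estimate Laurent and Massart use), the choice $\lambda = t/(\sqrt{2d}+2t)$ making the exponent collapse to $-t^2/2$, and the lower-tail bound $\log(1+x)\ge x-x^2/2$ with $\mu=v/(2d)$ giving $-v^2/(4d)$. So you have supplied a complete, self-contained proof of a result the paper takes on faith.
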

We will use \eqref{eq:chi2} as follows: for each $\epsilon \in (0,1)$
we have
\begin{equation}
  \label{eq:chi2b}
  \P(Y_d \ge d \, (1-\epsilon)^{-1}) \le e^{-\epsilon^2 d/4}  \quad
  \text{ and } \quad  \P(Y_d \le d \, (1-\epsilon)) \le e^{-\epsilon^2 d/4}.
\end{equation}

We begin with the second assertion of Lemma \ref{teo:gaussian} since
it will imply the first as well. Observe that it follows from
\begin{equation}
  \label{eq:PU}
  \|\mtx{P}_U \vct{e}_i\|^2 = \sum_{1 \le k \le r} u_{ik}^2,
\end{equation}
that $Z_r \equiv \|\mtx{P}_U \vct{e}_i\|^2$ ($a$ is fixed) is the
squared Euclidean length of the first $r$ components of a unit vector
uniformly distributed on the unit sphere in $n$ dimensions.  Now
suppose that $x_1, x_2, \ldots, x_n$ are i.i.d.~$N(0,1)$. Then the
distribution of a unit vector uniformly distributed on the sphere is
that of $\vct{x}/\|\vct{x}\|$ and, therefore, the law of $Z_r$ is that
of $Y_r/Y_n$, where $Y_r = \sum_{k \le r} x_k^2$. Fix $\epsilon > 0$
and consider the event $A_{n,\epsilon} = \{Y_n/n \ge
1-\epsilon\}$. For each $\lambda > 0$, it follows from
\eqref{eq:chi2b} that
\begin{align*}
  \P(Z_r - r/n \ge \lambda \sqrt{2r}/n) & = \P(Y_r \ge
  [r+\lambda\sqrt{2r}] Y_n/n)\\
  & \le \P(Y_r \ge
  [r+\lambda\sqrt{2r}] Y_n/n \text{ and } A_{n,\epsilon}) + \P(A_{n,\epsilon}^c)\\
  & \le \P(Y_r \ge [r+\lambda\sqrt{2r}][1-\epsilon]) + e^{-\epsilon^2 n/4}\\
  & = \P(Y_r - r \ge \lambda\sqrt{2r}[1- \epsilon - \epsilon
  \sqrt{r/2\lambda^2}]) + e^{-\epsilon^2 n/4}.
\end{align*}
Now pick $\epsilon = 4 (n^{-1} \log n)^{1/2}$, $\lambda = 8\sqrt{2\log
  n}$ and assume that $n$ is sufficiently large so that
\[
\epsilon(1+ \sqrt{r/2\lambda^2}) \le 1/2.
\]
Then
\[
\P(Z_r - r/n \ge \lambda \sqrt{2r}/n) \le \P(Y_r - r \ge (\lambda/2)
\sqrt{2r}) + n^{-4}.
\]
Assume now that $r \ge 4\log n$ (which means that $\lambda \le 4
\sqrt{2r}$). Then it follows from \eqref{eq:chi2} that
\begin{equation*}
 \P(Y_r - r \ge (\lambda/2)
\sqrt{2r}) \le \P(Y_r - r \ge (\lambda/4)
\sqrt{2r} + (\lambda/4)^2) \le e^{-\lambda^2/32} = n^{-4}.
\end{equation*}
Hence
\[
\P(Z_r - r/n \ge 16 \sqrt{r \log n}/n) \le 2n^{-4}
\]
and, therefore,
\begin{equation}
\label{eq:conclusion1} \P(\max_i \|\mtx{P}_U \vct{e}_i\|^2 - r/n \ge
16 \sqrt{r \log n}/n) \le 2 n^{-3}
\end{equation}
by the union bound. Note that \eqref{eq:conclusion1} establishes the
first claim of the lemma (even for $r < 4 \log n$ since in this case
$Z_r \le Z_{\lceil 4\log n \rceil}$).

It remains to establish the second claim. Notice that by symmetry,
$\mtx{E} = \sum_{1 \le k \le r} \mtx{u}_k \mtx{v}_k^*$ has the same
distribution as
\[
\mtx{F} = \sum_{k = 1}^r \epsilon_k \mtx{u}_k \mtx{v}_k^*,
\]
where $\{\epsilon_k\}$ is an independent Rademacher sequence. It then
follows from Hoeffding's inequality that conditional on
$\{\mtx{u}_k\}$ and $\{\mtx{v}_k\}$ we have
\[
\P(|F_{ij}| > t) \le 2 e^{-t^2/2\sigma_{ij}^2}, \quad \sigma_{ij}^2
= \sum_{1 \le k \le r} u_{ik}^2 v_{ik}^2.
\]
Our previous results indicate that $\max_{ij} |v_{ij}|^2 \le (10\log
n)/n$ with large probability and thus
\[
\sigma^2_{ij} \le 10 \, \frac{\log n}{n} \, \|\mtx{P}_U
\vct{e}_i\|^2.
\]
Set $\bar r = \max(r,\log n)$. Since $\|\mtx{P}_U \vct{e}_i\|^2 \le
C \bar r/n$ with large probability, we have
\[
\sigma^2_{ij} \le C  (\log n) \, \bar r/n^2
\]
with large probability. Hence the marginal distribution of $F_{ij}$
obeys
\[
\P(|F_{ij}| > \lambda \sqrt{\bar r}/n) \le 2 e^{-\gamma
\lambda^2/\log n} + \P(\sigma_{ij}^2 \ge C (\log n) {\bar r}/n^2).
\]
for some numerical constant $\gamma$. Picking $\lambda = \gamma' \log
n$ where $\gamma'$ is a sufficiently large numerical constant gives
\[
\|\mtx{F}\|_\infty \le C\, (\log n) \, \sqrt{\bar r}/n
\]
with large probability. Since $\mtx{E}$ and $\mtx{F}$ have the same
distribution, the second claim follows.

The claim about the size of $\max_{ij} |v_{ij}|^2$ is straightforward
since our techniques show that for each $\lambda > 0$
\[
\P(Z_1 \ge \lambda (\log n)/n) \le\ \P(Y_1 \ge \lambda (1-\epsilon) \log
n) + e^{-\epsilon^2 n/4}.
\]
Moreover,
\[
\P(Y_1 \ge \lambda (1-\epsilon) \log n) = \P(|x_1| \ge
\sqrt{\lambda(1-\epsilon) \log n}) \le 2e^{-\frac{1}{2}\lambda
(1-\epsilon) \log
  n}.
\]
If $n$ is sufficiently large so that $\epsilon \le 1/5$, this gives
$\P(Z_1 \ge 10 (\log n)/n) \le 3n^{-4}$
and, therefore,
\[
\P(\max_{ij} |v_{ij}|^2 \ge 10 (\log n)/n) \le 12 n^{-3} \log n
\]
since the maximum is taken over at most $4 n \log n$ pairs.

\section{Duality}
\label{sec:duality}

Let ${\cal R}_\Omega:\R^{n_1\times n_2}\rightarrow \R^{|\Omega|}$ be
the sampling operator which extracts the observed entries, ${\cal
R}_\Omega(\mtx{X}) = (X_{ij})_{ij \in
  \Omega}$, so that the constraint in \eqref{eq:sdp} becomes ${\cal
  R}_\Omega(\mtx{X}) = {\cal R}_\Omega(\mtx{M})$. Standard convex
optimization theory asserts that $\mtx{X}$ is solution to
\eqref{eq:sdp} if there exists a dual vector (or Lagrange multiplier)
$\lambda\in\R^{|\Omega|}$ such that ${\cal R}_\Omega^* \, \lambda$ is
a {\em subgradient} of the nuclear norm at the point $\mtx{X}$, which
we denote by
\begin{equation}
  \label{eq:kkt}
  {\cal R}_\Omega^* \, \lambda \in \partial \|\mtx{X}\|_*
\end{equation}
(see, e.g.~\cite{BertsekasConvexBook}). Recall the definition of a
subgradient of a convex function $f : \R^{n_1
  \times n_2} \goto \R$. We say that $\mtx{Y}$ is a subgradient of $f$ at
$\mtx{X}_0$, denoted $\mtx{Y} \in \partial f(\mtx{X}_0)$, if
\begin{equation}
  \label{eq:subgradient}
  f(\mtx{X}) \ge f(\mtx{X}_0) + \<\mtx{Y}, \mtx{X} - \mtx{X}_0\>
\end{equation}
for all $\mtx{X}$.

Suppose $\mtx{X}_0 \in \R^{n_1 \times n_2}$ has rank $r$ with a
singular value decomposition given by
\begin{equation}
  \label{eq:near-svd}
  \mtx{X}_0 = \sum_{1 \le k \le r} \sigma_k \, \vct{u}_k \vct{v}_k^*,
\end{equation}
With these notations, $\mtx{Y}$ is a subgradient of the nuclear norm
at $\mtx{X}_0$ if and only if it is of the form
\begin{equation}
  \label{eq:sub-decomp}
  \mtx{Y} = \sum_{1 \le k \le r} \vct{u}_k \vct{v}_k^* + \mtx{W},
\end{equation}
where $\mtx{W}$ obeys the following two properties:
\begin{itemize}
\item[(i)] the column space of $\mtx{W}$ is orthogonal to $U \equiv
  \lspan{(\vct{u}_1, \ldots, \vct{u}_r)}$, and the row space of $\mtx{W}$ is orthogonal
  to $V \equiv \lspan{(\vct{v}_1, \ldots, \vct{v}_r)}$;
\item[(ii)] the spectral norm of $\mtx{W}$ is less than or equal to 1.
\end{itemize}
(see, e.g.,~\cite{Lewis03,Watson92}). To express these properties
concisely, it is convenient to introduce the orthogonal decomposition
$\R^{n_1 \times n_2} = T \oplus T^\perp$ where $T$ is the linear space
spanned by elements of the form $\vct{u}_k \vct{x}^*$ and $\vct{y}
\vct{v}_k^*$, $1 \le k \le r$, where $\vct{x}$ and $\vct{y}$ are
arbitrary, and $T^\perp$ is its orthogonal complement.  Note that
$\textrm{dim}(T) = r(n_1+n_2-r)$, precisely the number of degrees of
freedom in the set of $n_1 \times n_2$ matrices of rank $r$. $T^\perp$ is
the subspace of matrices spanned by the family $(\vct{x}\vct{y}^*)$,
where $\vct{x}$ (respectively $\vct{y}$) is any vector orthogonal to
$U$ (respectively $V$).

The orthogonal projection $\PT$ onto $T$ is given by
\begin{equation}
  \label{eq:PT}
  \PT(\mtx{X}) = \mtx{P}_U \mtx{X} + \mtx{X} \mtx{P}_V  - \mtx{P}_U \mtx{X} \mtx{P}_V,
\end{equation}
where $\mtx{P}_{U}$ and $\mtx{P}_V$ are the orthogonal projections
onto $U$ and $V$.  Note here that while $\mtx{P}_{U}$ and
$\mtx{P}_V$ are matrices, $\PT$ is a linear operator mapping
matrices to matrices.  We also have
\[
\PTc(\mtx{X}) = (\OpId - \PT)(\mtx{X}) = (\mtx{I}_{n_1} -
\mtx{P}_{U}) \mtx{X} (\mtx{I}_{n_2} - \mtx{P}_{V})
\]
where $\mtx{I}_d$ denotes the $d\times d$ identity matrix. With
these notations, $\mtx{Y} \in
\partial \|\mtx{X}_0\|_*$ if
\begin{itemize}
\item[(i')] ${\cal P}_T(\mtx{Y}) = \sum_{1 \le k \le r} \vct{u}_k \vct{v}_k^*$,
\item[(ii')] and $\|{\cal P}_{T^\perp} \mtx{Y}\| \le 1$.
\end{itemize}

Now that we have characterized the subgradient of the nuclear norm,
the lemma below gives sufficient conditions for the uniqueness of
the minimizer to \eqref{eq:sdp}.
\begin{lemma}
  \label{teo:unicity} Consider a matrix $\mtx{X}_0 = \sum_{k = 1}^r \sigma_k
  \, \vct{u}_k \vct{v}_k^*$ of rank $r$ which is feasible for the problem
  \eqref{eq:sdp}, and suppose that the following two conditions hold:
  \begin{enumerate}
  \item there exists a dual point $\lambda$ such that $\mtx{Y} = {\cal
      R}^*_\Omega \lambda$ obeys
\begin{equation}
\label{eq:key-decomp} {\cal P}_T(\mtx{Y}) = \sum_{k = 1}^r \vct{u}_k
\vct{v}_k^*,
    \qquad \|{\cal P}_{T^\perp}(\mtx{Y})\| < 1;
\end{equation}
\item the sampling operator ${\cal R}_\Omega$ restricted to elements
  in $T$ is injective.
  \end{enumerate}
Then $\mtx{X}_0$ is the unique minimizer.
\end{lemma}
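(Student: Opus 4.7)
The plan is to show that any feasible perturbation $\mtx{H}$ with $\mathcal{R}_\Omega(\mtx{H}) = 0$ and $\mtx{H} \neq 0$ satisfies $\|\mtx{X}_0 + \mtx{H}\|_* > \|\mtx{X}_0\|_*$. The approach is to build, for each such $\mtx{H}$, a well-chosen subgradient of the nuclear norm at $\mtx{X}_0$ that splits into (a) a component coming from the hypothesized dual certificate $\mtx{Y}$, and (b) a component on $T^\perp$ designed to extract $\|\mathcal{P}_{T^\perp}(\mtx{H})\|_*$. The feasibility $\mathcal{R}_\Omega(\mtx{H}) = 0$ will annihilate the action of $\mtx{Y}$, and strictness of the inequality $\|\mathcal{P}_{T^\perp}(\mtx{Y})\| < 1$ will give the required gap; the injectivity hypothesis on $\mathcal{R}_\Omega|_T$ will then kill the remaining $T$-component.

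Concretely, first I would invoke the duality between the spectral and nuclear norms to pick $\mtx{W}_0 \in T^\perp$ with $\|\mtx{W}_0\| \le 1$ and $\langle \mtx{W}_0, \mathcal{P}_{T^\perp}(\mtx{H})\rangle = \|\mathcal{P}_{T^\perp}(\mtx{H})\|_*$. Since $\mtx{W}_0$ has column space orthogonal to $U$ and row space orthogonal to $V$ with spectral norm at most one, the matrix $\sum_k \vct{u}_k \vct{v}_k^* + \mtx{W}_0$ is a subgradient of $\|\cdot\|_*$ at $\mtx{X}_0$ by the characterization (i)--(ii) recalled just before the lemma. Applying the subgradient inequality gives
\[
\|\mtx{X}_0 + \mtx{H}\|_* \ge \|\mtx{X}_0\|_* + \Bigl\langle \sum_{k=1}^r \vct{u}_k\vct{v}_k^* + \mtx{W}_0,\, \mtx{H}\Bigr\rangle
= \|\mtx{X}_0\|_* + \langle \mathcal{P}_T(\mtx{Y}), \mtx{H}\rangle + \|\mathcal{P}_{T^\perp}(\mtx{H})\|_*.
\]

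Next I would rewrite the middle term using self-adjointness of $\mathcal{P}_T$ and the fact that $\langle \mtx{Y}, \mtx{H}\rangle = \langle \lambda, \mathcal{R}_\Omega(\mtx{H})\rangle = 0$:
\[
\langle \mathcal{P}_T(\mtx{Y}), \mtx{H}\rangle = \langle \mtx{Y}, \mathcal{P}_T(\mtx{H})\rangle = -\langle \mathcal{P}_{T^\perp}(\mtx{Y}), \mathcal{P}_{T^\perp}(\mtx{H})\rangle.
\]
Combining, and using $|\langle \mathcal{P}_{T^\perp}(\mtx{Y}), \mathcal{P}_{T^\perp}(\mtx{H})\rangle| \le \|\mathcal{P}_{T^\perp}(\mtx{Y})\| \cdot \|\mathcal{P}_{T^\perp}(\mtx{H})\|_*$, one obtains
\[
\|\mtx{X}_0+\mtx{H}\|_* \ge \|\mtx{X}_0\|_* + \bigl(1 - \|\mathcal{P}_{T^\perp}(\mtx{Y})\|\bigr)\, \|\mathcal{P}_{T^\perp}(\mtx{H})\|_*.
\]
Since $\|\mathcal{P}_{T^\perp}(\mtx{Y})\| < 1$, this is a strict improvement whenever $\mathcal{P}_{T^\perp}(\mtx{H}) \neq 0$. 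If instead $\mathcal{P}_{T^\perp}(\mtx{H}) = 0$, then $\mtx{H} \in T$, and the injectivity of $\mathcal{R}_\Omega|_T$ together with $\mathcal{R}_\Omega(\mtx{H}) = 0$ forces $\mtx{H} = 0$. This establishes that $\mtx{X}_0$ is the unique minimizer.

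The only real subtlety is the production of the auxiliary matrix $\mtx{W}_0$; this uses the standard polar-decomposition duality $\|\mtx{Z}\|_* = \sup\{\langle \mtx{W}, \mtx{Z}\rangle : \|\mtx{W}\| \le 1\}$, with the extra observation that if we restrict the supremum to $\mtx{W}$ supported on $T^\perp$ (i.e., with left singular vectors orthogonal to $U$ and right singular vectors orthogonal to $V$), the dual pairing computes exactly $\|\mathcal{P}_{T^\perp}(\mtx{Z})\|_*$. Everything else is bookkeeping, and strictness propagates cleanly because the spectral bound on $\mathcal{P}_{T^\perp}(\mtx{Y})$ is strict.
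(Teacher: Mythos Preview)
Your proof is correct and follows essentially the same approach as the paper: both arguments pick a subgradient $\sum_k \vct{u}_k\vct{v}_k^* + \mtx{W}_0$ with $\mtx{W}_0 \in T^\perp$ attaining the nuclear--spectral duality for $\mathcal{P}_{T^\perp}(\mtx{H})$, use $\mathcal{R}_\Omega(\mtx{H})=0$ to kill the contribution of $\mtx{Y}$, and arrive at the same lower bound $(1-\|\mathcal{P}_{T^\perp}(\mtx{Y})\|)\,\|\mathcal{P}_{T^\perp}(\mtx{H})\|_*$. The only cosmetic difference is that the paper writes $\sum_k \vct{u}_k\vct{v}_k^* = \mathcal{R}_\Omega^*\lambda - \mathcal{P}_{T^\perp}(\mtx{Y})$ and cancels the first piece directly, whereas you use self-adjointness of $\mathcal{P}_T$ and the identity $\langle \mtx{Y},\mtx{H}\rangle=0$ to reach the same point.
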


Before proving this result, we would like to emphasize that this
lemma provides a clear strategy for proving our main result, namely,
Theorem \ref{teo:main2}. Letting $\mtx{M} = \sum_{k = 1}^r \sigma_k
\, \vct{u}_k \vct{v}_k^*$, $\mtx{M}$ is the unique solution to
\eqref{eq:sdp} if the injectivity condition holds and if one can
find a dual point $\lambda$ such that $\mtx{Y} = {\cal R}^*_\Omega
\lambda$ obeys \eqref{eq:key-decomp}.

The proof of Lemma \ref{teo:unicity} uses a standard fact which
states that the nuclear norm and the spectral norm are dual to one
another.
\begin{lemma}
\label{teo:dual}
  For each pair $\mtx{W}$ and $\mtx{H}$, we have
\[
\<\mtx{W}, \mtx{H}\> \le \|\mtx{W}\| \, \|\mtx{H}\|_{*}.
\]
In addition, for each $\mtx{H}$, there is a $\mtx{W}$ obeying
$\|\mtx{W}\| = 1$ which achieves the equality.
\end{lemma}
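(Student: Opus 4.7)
The plan is to deduce both the inequality and the existence of an equality-achieving $\mtx{W}$ directly from a singular value decomposition of $\mtx{H}$. Write a compact SVD $\mtx{H} = \sum_{k=1}^{\rho} \sigma_k(\mtx{H})\,\vct{u}_k\vct{v}_k^*$, where $\rho = \rank(\mtx{H})$ and $\{\vct{u}_k\}$, $\{\vct{v}_k\}$ are orthonormal families. Using linearity of the trace inner product,
\[
\langle \mtx{W}, \mtx{H}\rangle \;=\; \sum_{k=1}^{\rho} \sigma_k(\mtx{H})\,\langle \mtx{W}, \vct{u}_k\vct{v}_k^*\rangle \;=\; \sum_{k=1}^{\rho} \sigma_k(\mtx{H})\,\vct{u}_k^* \mtx{W}\vct{v}_k.
\]
This rewriting is the whole content of the argument; the remaining steps are bookkeeping.

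For the inequality, I would bound each term $|\vct{u}_k^* \mtx{W}\vct{v}_k|$ by $\|\mtx{W}\|$, which is immediate from the variational characterization of the spectral norm together with the fact that $\vct{u}_k$ and $\vct{v}_k$ are unit vectors. Summing and using the nonnegativity of the singular values yields
\[
\langle \mtx{W},\mtx{H}\rangle \;\le\; \|\mtx{W}\| \sum_{k=1}^{\rho}\sigma_k(\mtx{H}) \;=\; \|\mtx{W}\|\,\|\mtx{H}\|_*,
\]
which is the first claim.

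For the equality statement, the natural witness is $\mtx{W}_0 := \sum_{k=1}^{\rho} \vct{u}_k\vct{v}_k^*$ (with $\mtx{W}_0 = 0$ replaced by any unit-spectral-norm matrix when $\mtx{H}=0$, in which case the equality $0 = 0$ holds trivially). By construction $\mtx{W}_0$ has orthonormal left and right singular systems $\{\vct{u}_k\},\{\vct{v}_k\}$ with all nonzero singular values equal to $1$, so $\|\mtx{W}_0\| = 1$. Plugging into the identity above gives $\langle \mtx{W}_0,\mtx{H}\rangle = \sum_k \sigma_k(\mtx{H}) = \|\mtx{H}\|_*$, matching the upper bound.

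The only point that requires a moment of care is the bound $|\vct{u}_k^*\mtx{W}\vct{v}_k| \le \|\mtx{W}\|$ when $\vct{u}_k,\vct{v}_k$ live in different Euclidean spaces (for rectangular matrices); this follows at once from $|\vct{u}_k^*\mtx{W}\vct{v}_k| \le \|\vct{u}_k\|\,\|\mtx{W}\vct{v}_k\| \le \|\mtx{W}\|$ by Cauchy--Schwarz and the definition $\|\mtx{W}\| = \sup_{\|\vct{v}\|\le 1}\|\mtx{W}\vct{v}\|$. There is no genuine obstacle in this lemma; it is a standard Hölder-type duality that, once the SVD is written down, essentially proves itself.
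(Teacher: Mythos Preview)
Your argument is correct and complete. The paper does not actually supply its own proof of this lemma; it simply remarks that ``a variety of proofs are available'' and points to an elementary argument in~\cite{Recht07}. Your SVD-based computation is precisely the kind of elementary argument intended, so there is nothing to compare against and nothing to fix.
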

A variety of proofs are available for this Lemma, and an elementary
argument is sketched in~\cite{Recht07}.  We now turn to the proof of
Lemma \ref{teo:unicity}.

\begin{proof}[of Lemma~\ref{teo:unicity}]
Consider any perturbation $\mtx{X}_0 + \mtx{H}$ where ${\cal
R}_\Omega(\mtx{H}) = 0$. Then for any $\mtx{W}^0$ obeying (i)--(ii),
$\sum_{k = 1}^r \vct{u}_k \vct{v}_k^* + \mtx{W}^0$ is a subgradient
of the nuclear norm at $X_0$ and, therefore,
\[
\norm{\mtx{X}_0+\mtx{H}}_* \ge \norm{\mtx{X}_0}_* + \<\sum_{k=1}^r
\vct{u}_k \vct{v}_k^* + \mtx{W}^0,\mtx{H}\>.
\]
Letting $\mtx{W} = \PTc(\mtx{Y})$, we may write $\sum_{k=1}^r
\vct{u}_k \vct{v}_k^* = {\cal R}^*_\Omega \lambda - \mtx{W}$.  Since
$\|\mtx{W}\| < 1$ and ${\cal R}_\Omega(\mtx{H}) = 0$,  it then
follows that
\[
\norm{\mtx{X}_0+\mtx{H}}_* \ge \norm{\mtx{X}_0}_* +  \<\mtx{W}^0 -
\mtx{W},\mtx{H}\>.
\]
Now by construction
\[
\<\mtx{W}^0 - \mtx{W},\mtx{H}\> = \<{\cal P}_{T^\perp} (\mtx{W}^0 -
\mtx{W}), \mtx{H}\> = \<\mtx{W}^0 - \mtx{W}, {\cal
  P}_{T^\perp} (\mtx{H})\>.
\]

We use Lemma \ref{teo:dual} and set $\mtx{W}^0 = \PTc(\mtx{Z})$ where
$\mtx{Z}$ is any matrix obeying $\|\mtx{Z}\| \le 1$ and
$\<\mtx{Z},\PTc(\mtx{H})\> = \|\PTc(\mtx{H})\|_*$. Then $\mtx{W}^0 \in
T^\perp$, $\| \mtx{W}^0\| \le 1$, and
\[
\<\mtx{W}^0 - \mtx{W},\mtx{H}\> \ge (1 - \|\mtx{W}\|) \,
\|\PTc(\mtx{H})\|_*,
\]
which by assumption is strictly positive unless $\PTc(\mtx{H}) = 0$.
In other words, $\|\mtx{X}_0+\mtx{H}\|_*
> \|\mtx{X}_0\|_*$ unless $\PTc(\mtx{H}) = 0$. Assume then that
$\PTc(\mtx{H}) = 0$ or equivalently that $\mtx{H} \in T$. Then
${\cal R}_\Omega(\mtx{H}) = 0$ implies that $\mtx{H} = 0$ by the
injectivity assumption.  In conclusion, $\|\mtx{X}_0 + \mtx{H}\|_*
> \|\mtx{X}\|_*$ unless $\mtx{H} = 0$.
\end{proof}

\section{Architecture of the proof}
\label{sec:architecture}

Our strategy to prove that $\mtx{M} = \sum_{1 \le k \le r} \sigma_k
\vct{u}_k \vct{v}_k^*$ is the unique minimizer to \eqref{eq:sdp} is
to construct a matrix $\mtx{Y}$ which vanishes on $\Omega^c$ and obeys the
conditions of Lemma \ref{teo:unicity} (and show the injectivity of
the sampling operator restricted to matrices in $T$ along the way).
Set $\PO$ to be the orthogonal projector onto the indices in
$\Omega$ so that the $(i,j)$th component of $\PO(\mtx{X})$ is equal
to $X_{ij}$ if $(i,j) \in \Omega$ and zero otherwise.  Our
candidate $\mtx{Y}$ will be the solution to
\begin{equation}
\label{eq:frob2}
  \begin{array}{ll}
    \textrm{minimize}   & \quad \|\mtx{X}\|_F \\
    \textrm{subject to} & \quad  ({\cal P}_T {\cal P}_\Omega)(\mtx{X})  =  \sum_{k  = 1}^r \vct{u}_k \vct{v}_k^*.
  \end{array}
\end{equation}
The matrix $\mtx{Y}$ vanishes on $\Omega^c$ as otherwise it would not be an
optimal solution since $\PO(\mtx{Y})$ would obey the constraint and
have a smaller Frobenius norm. Hence $\mtx{Y} = \PO(\mtx{Y})$ and
$\PT(\mtx{Y}) = \sum_{k = 1}^r \vct{u}_k \vct{v}_k^*$. Since the
Pythagoras formula gives
\begin{align*}
  \|\mtx{Y}\|^2_{F} = \|\PT(\mtx{Y})\|_F^2 +
  \|\PTc(\mtx{Y})\|_F^2 & = \|\sum_{k = 1}^r \vct{u}_k \vct{v}_k^*\|_F^2 + \|\PTc(\mtx{Y})\|_F^2\\
  & = r + \|\PTc(\mtx{Y})\|_F^2,
\end{align*}
minimizing the Frobenius norm of $\mtx{X}$ amounts to minimizing the
Frobenius norm of $\PTc(\mtx{X})$ under the constraint $\PT(\mtx{X})
= \sum_{k = 1}^r \vct{u}_k \vct{v}_k^*$. Our motivation is twofold.
First, the solution to the least-squares problem \eqref{eq:frob2}
has a closed form that is amenable to analysis. Second, by forcing
$\PTc(\mtx{Y})$ to be small in the Frobenius norm, we hope that it
will be small in the spectral norm as well, and establishing that
$\|\PTc(\mtx{Y})\| < 1$ would prove that $\mtx{M}$ is the unique
solution to \eqref{eq:sdp}.

To compute the solution to \eqref{eq:frob2}, we introduce the
operator $\cAOT$ defined by
\[
\cAOT(\mtx{M}) = \PO \PT (\mtx{M}).
\]
Then, if $\cAOT^* \cAOT = \PT \PO \PT$ has full rank when restricted
to $T$, the minimizer to \eqref{eq:frob2} is given by
\begin{equation}
  \label{eq:Lambda1}
  \mtx{Y} = \cAOT (\cAOT^* \cAOT)^{-1}(\mtx{E}),
\qquad \mtx{E} \equiv  \sum_{k  = 1}^r \vct{u}_k \vct{v}_k^*.
\end{equation}
We clarify the meaning of \eqref{eq:Lambda1} to avoid any confusion.
$(\cAOT^* \cAOT)^{-1}(\mtx{E})$ is meant to be that element
$\mtx{F}$ in $T$ obeying $(\cAOT^* \cAOT)(\mtx{F}) = \mtx{E}$.

To summarize the aims of our proof strategy,
\begin{itemize}
\item We must first show that $\cAOT^* \cAOT = \PT \PO \PT$ is a
  one-to-one linear mapping from $T$ onto itself.  In this case,
  $\cAOT = \PO \PT$---as a mapping from $T$ to $\R^{n_1 \times
    n_2}$---is injective. This is the second sufficient condition of
  Lemma \ref{teo:unicity}.  Moreover, our ansatz for $\mtx{Y}$ given
  by \eqref{eq:Lambda1} is well-defined.

\item Having established that $\mtx{Y}$ is well-defined, we will show that
\[
\|{\cal P}_{T^\perp}(\mtx{Y})\| < 1,
\]
thus proving the first sufficient condition.
\end{itemize}

\subsection{The Bernoulli model}
\label{sec:bernoulli}

Instead of showing that the theorem holds when $\Omega$ is a set of
size $m$ sampled uniformly at random, we prove the theorem for a
subset $\Omega'$ sampled according to the {\em Bernoulli model}. Here
and below, $\{\delta_{ij}\}_{1\le i \le n_1, 1\le j \le n_2}$ is a
sequence of independent identically distributed $0/1$ Bernoulli random
variables with
\begin{equation}
\label{eq:delta}
\P(\delta_{ij} = 1) = p \equiv \frac{m}{n_1n_2},
\end{equation}
and define
\begin{equation}
\label{eq:Omega}
  \Omega' = \{ (i,j) : \delta_{ij} = 1\}.
\end{equation}
Note that $\E |\Omega'| = m$, so that the average cardinality of
$\Omega'$ is that of $\Omega$.  Then following the same reasoning as
the argument developed in Section II.C of~\cite{CRT06} shows that
the probability of `failure' under the uniform model is bounded by 2
times the probability of failure under the Bernoulli model; the
failure event is the event on which the solution to \eqref{eq:sdp}
is not exact.  Hence, we can restrict our attention to the Bernoulli
model and from now on, we will assume that $\Omega$ is given by
\eqref{eq:Omega}.  This is advantageous because the Bernoulli model
admits a simpler analysis than uniform sampling thanks to the
independence between the $\delta_{ij}$'s.

\subsection{The injectivity property}
\label{sec:injectivity}

We study the injectivity of $\cAOT$, which also shows that $\mtx{Y}$
is well-defined.  To prove this, we will show that the linear operator
$p^{-1} \PT (\PO - p \OpId) \PT$ has small operator norm, which we
recall is $\sup_{\|\mtx{X}\|_F \le 1} \, p^{-1}\|\PT (\PO - p \OpId)
\PT(\mtx{X})\|_F$.
\begin{theorem}
  \label{teo:rudelson}
  Suppose $\Omega$ is sampled according to the Bernoulli model
  \eqref{eq:delta}--\eqref{eq:Omega} and put $n =
  \max(n_1,n_2)$. Suppose that the coherences obey
  $\max(\mu(U),\mu(V)) \le \mu_0$.  Then, there is a numerical
  constants $C_R$ such that for all $\beta>1$,
  \begin{equation}
    \label{eq:near-isometry}
    p^{-1} \, \|\PT \PO \PT - p \PT\| \le C_R \,
\sqrt{\frac{\mu_0 \, nr (\beta \log n)}{m}}
  \end{equation}
  with probability at least $1-3n^{-\beta}$ provided that $C_R \,
  \sqrt{\frac{\mu_0 \, nr (\beta \log n)}{m}} < 1$.
\end{theorem}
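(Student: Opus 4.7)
The plan is to rewrite $\PT\PO\PT - p\PT$ as a sum of independent, mean-zero random rank-one operators on $\R^{n_1\times n_2}$, and then control its operator norm by a non-commutative Khintchine / Rudelson-type moment estimate. Writing $\PO(\mtx{X})=\sum_{a,b}\delta_{ab}\langle\vct{e}_a\vct{e}_b^*,\mtx{X}\rangle\,\vct{e}_a\vct{e}_b^*$ and using $\PT=\PT\OpId\PT=\sum_{a,b}\langle\PT(\vct{e}_a\vct{e}_b^*),\cdot\rangle\,\PT(\vct{e}_a\vct{e}_b^*)$, one gets
\[
\PT\PO\PT - p\PT \;=\; \sum_{(a,b)}(\delta_{ab}-p)\,\PT(\vct{e}_a\vct{e}_b^*)\otimes\PT(\vct{e}_a\vct{e}_b^*),
\]
where $u\otimes v$ denotes the operator $\mtx{X}\mapsto\langle v,\mtx{X}\rangle u$. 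From the explicit formula for $\PT$ in \eqref{eq:PT} and the coherence assumption ${\bf A0}$, I would then verify the pointwise bound
\[
\|\PT(\vct{e}_a\vct{e}_b^*)\|_F^2 \;=\; \|\mtx{P}_U\vct{e}_a\|^2+\|\mtx{P}_V\vct{e}_b\|^2-\|\mtx{P}_U\vct{e}_a\|^2\|\mtx{P}_V\vct{e}_b\|^2 \;\le\; \frac{2\mu_0 r}{n},
\]
which is the crucial way the coherence enters the variance proxy for the sum.

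Next I would symmetrize: replacing $\delta_{ab}-p$ by Rademacher variables $\epsilon_{ab}\delta_{ab}$ loses only a multiplicative constant in expectation, and reduces the problem to bounding the norm of a Rademacher sum of self-adjoint rank-one operators on matrix space. For this I would invoke a non-commutative Khintchine inequality (Rudelson's lemma / the operator Bernstein bound), which for an arbitrary Rademacher sum $\sum_{(a,b)\in\Omega}\epsilon_{ab}\,\mtx{T}_{ab}\otimes\mtx{T}_{ab}$ of positive semidefinite rank-one operators yields, conditionally on $\Omega$,
\[
\E_\epsilon\!\Bigl\|\sum_{(a,b)\in\Omega}\epsilon_{ab}\,\mtx{T}_{ab}\otimes\mtx{T}_{ab}\Bigr\| \;\lesssim\; \sqrt{\log n}\,\max_{(a,b)}\|\mtx{T}_{ab}\|_F\;\Bigl\|\sum_{(a,b)\in\Omega}\mtx{T}_{ab}\otimes\mtx{T}_{ab}\Bigr\|^{1/2}.
\]
Taking expectations over $\Omega$ and combining with the bound $\E\sum\mtx{T}_{ab}\otimes\mtx{T}_{ab}=p\,\PT$ (whose operator norm on $T$ is $p$) and the coherence estimate above gives $\E\|\PT\PO\PT-p\PT\|\lesssim\sqrt{p\,\mu_0 r\log n / n}$, so after dividing by $p=m/n^2$ we obtain precisely the target rate $\sqrt{\mu_0 n r\log n / m}$. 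The one subtlety here is that the right-hand side of the Khintchine estimate itself involves $\|\sum_\Omega\mtx{T}_{ab}\otimes\mtx{T}_{ab}\| = \|\PT\PO\PT\|$, which is the very quantity we are trying to control; I would handle this by a standard bootstrapping inequality $E\le a+b\sqrt{E}$ to close the loop.

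Finally, I would upgrade the moment/expectation bound to the claimed high-probability statement. Two routes are natural: either compute $p$-th moments with $p\asymp\beta\log n$ via iterated Khintchine and apply Markov, or invoke a Talagrand concentration inequality for the supremum of an empirical process indexed by the unit Frobenius ball in $T$, using the pointwise bounds already established. I expect the main technical obstacle to be the bootstrapping step and the careful tracking of the $\log n$ factors from the Khintchine inequality so as to obtain the tail $1-3n^{-\beta}$ with a single explicit constant $C_R$; everything else is a bookkeeping exercise once the decomposition and the coherence-driven pointwise bound are in place.
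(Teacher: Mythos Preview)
Your proposal is correct and follows essentially the same route as the paper: the same rank-one decomposition $\PT\PO\PT-p\PT=\sum_{ab}(\delta_{ab}-p)\,\PT(\eab)\otimes\PT(\eab)$, the same coherence-driven bound $\|\PT(\eab)\|_F^2\le 2\mu_0 r/\min(n_1,n_2)$, then symmetrization plus Rudelson's Khintchine-based lemma (with the bootstrapping inequality you flag) for the expectation, and finally Talagrand's concentration inequality to pass to the high-probability statement. The paper merely packages the symmetrization/Khintchine/bootstrap step as a citation to Rudelson's selection theorem and defers the Talagrand step to an appendix, but the substance is identical to what you outline.
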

\begin{proof}
Decompose any matrix $\mtx{X}$ as $\mtx{X} = \sum_{ab} \<\mtx{X},
\eab\> \eab$ so that
\[
\PT(\mtx{X}) = \sum_{ab} \<\PT(\mtx{X}), \eab\> \eab = \sum_{ab}
\<\mtx{X}, \PT(\eab)\> \eab.
\]
Hence, $\PO \PT(\mtx{X}) = \sum_{ab} \delta_{ab} \, \<\mtx{X},
\PT(\eab)\> \, \eab$ which gives
\[
(\PT \PO \PT)(\mtx{X}) = \sum_{ab} \delta_{ab} \, \<\mtx{X},
\PT(\eab)\> \,\PT(\eab).
\]
In other words,
\[
\PT \PO \PT = \sum_{ab} \delta_{ab} \, \PT(\eab) \otimes  \PT(\eab).
\]
It follows from the definition \eqref{eq:PT} of $\PT$ that
\begin{equation}
  \label{eq:PTeab}
  \PT(\eab) = (\mtx{P}_U \vct{e}_a) \vct{e}_b^* + \vct{e}_a (\mtx{P}_V \vct{e}_b)^* - (\mtx{P}_U \vct{e}_a)(\mtx{P}_V \vct{e}_b)^*.
\end{equation}
This gives
\begin{equation}
  \label{eq:PTeabF2} \|\PT(\eab)\|_F^2 = \<\PT(\eab), \eab\>   = \|\mtx{P}_U
  \vct{e}_a\|^2 + \|\mtx{P}_V
  \vct{e}_b\|^2 - \|\mtx{P}_U \vct{e}_a\|^2 \, \|\mtx{P}_V \vct{e}_b\|^2
\end{equation}
and since $\|\mtx{P}_U \vct{e}_a\|^2 \le \mu(U)r/n_1$ and $\|\mtx{P}_V
\vct{e}_b\|^2 \le \mu(U)r/n_2$,
\begin{equation}
  \|\PT(\eab)\|_F^2 \le  2\mu_0 r/\min(n_1,n_2).
 \label{eq:PTeabF}
\end{equation}
Now the fact that the operator $\PT \PO \PT$ does not deviate from its
expected value
\[
\E (\PT \PO \PT) = \PT  (\E \PO) \PT  = \PT (p \OpId) \PT = p\PT
\]
in the spectral norm is related to Rudelson's selection theorem
\cite{rudelson}. The first part of the theorem below may be found in
\cite{CR07} for example, see also \cite{RV_JACM} for a very similar
statement.
\begin{theorem}\cite{CR07}
\label{teo:rudelson2}
Let $\{\delta_{ab}\}$ be independent 0/1 Bernoulli variables with
$\P(\delta_{ab} = 1) = p = \frac{m}{n_1 n_2}$ and put $n = \max(n_1,
n_2)$. Suppose that $\|\PT(\eab)\|^2_F \le 2\mu_0 r/n$.
Set
\[
Z \equiv p^{-1} \|\sum_{ab}
(\delta_{ab} -p)\, \PT(\eab) \otimes  \PT(\eab)\| =
p^{-1} \|\PT \PO \PT - p \PT\|.
\]
\begin{enumerate}
\item There exists a constant $C'_R$ such that
  \begin{equation}
    \label{eq:ERudel}
    \E Z \le C'_R \, \sqrt{\frac{\mu_0 \, nr \, \log n}{m}}
  \end{equation}
provided that the right-hand side is smaller than 1.
\item Suppose $\E Z \le 1$. Then for each $\lambda > 0$, we have
  \begin{equation}
    \label{eq:largeRudel}
    \P\left(|Z - \E Z| > \lambda \, \sqrt{\frac{\mu_0 \, nr \,\log n}{m}}\right)
    \le 3 \exp\left(-\gamma'_0 \, \min \left\{\lambda^2 \log n, \lambda \sqrt{\frac{m \log n}{\mu_0\, nr}}\right\}\right)
  \end{equation}
for some positive constant $\gamma'_0$.
\end{enumerate}
\end{theorem}
As mentioned above, the first part, namely, \eqref{eq:ERudel} is an
application of an established result which states that if $\{y_i\}$ is a
family of vectors in $\R^d$ and $\{\delta_i\}$ is a 0/1 Bernoulli
sequence with $\P(\delta_i = 1) = p$, then
\[
p^{-1} \|\sum_i (\delta_i -p) y_i \otimes y_i\| \le C\,
\sqrt{\frac{\log d}{p}} \max_i \|y_i\|
\]
for some $C > 0$ provided that the right-hand side is less than 1.
The proof may be found in the cited literature, e.g.~in
\cite{CR07}. Hence, the first part follows from applying this result
to vectors of the form $\PT(\eab)$ and using the available bound on
$\|\PT(\eab)\|_F$. The second part follows from Talagrand's
concentration inequality and may be found in the Appendix.

Set $\lambda = \sqrt{\beta/\gamma'_0}$ and assume that $m >
(\beta/\gamma'_0)\mu_0 \, nr \log n$. Then the left-hand side of
\eqref{eq:largeRudel} is bounded by $3n^{-\beta}$ and thus, we
established that
\[
Z \le C'_R \sqrt{\frac{\mu_0 \, nr \, \log n}{m}} +
\frac{1}{\sqrt{\gamma'_0}} \sqrt{\frac{\mu_0 \, nr \, \beta \log
    n}{m}}
\]
with probability at least $1 - 3n^{-\beta}$. Setting $C_R = C'_R +
1/\sqrt{\gamma'_0}$ finishes the proof.
\end{proof}

Take $m$ large enough so that $C_R \, \sqrt{\mu_0 \, (nr/m) \log n}
\le 1/2$. Then it follows from \eqref{eq:near-isometry} that
\begin{equation}
\label{eq:near-isometry2} \frac{p}{2} \|\PT(\mtx{X})\|_F \le \|(\PT
\PO \PT)(\mtx{X})\|_F \le \frac{3p}{2} \|\PT(\mtx{X})\|_F
\end{equation}
for all $\mtx{X}$ with large probability. In particular, the operator
$\cAOT^* \cAOT = \PT \PO \PT$ mapping $T$ onto itself is
well-conditioned and hence invertible. An immediate consequence is the
following:
\begin{corollary}
\label{teo:POPT}
  Assume that $C_R \, \sqrt{\mu_0 nr (\log n)/m} \le 1/2$. With the same
  probability as in Theorem \ref{teo:rudelson}, we have
\begin{equation}
\label{eq:POPT} \|\PO \PT(\mtx{X})\|_F \le \sqrt{3p/2}
\|\PT(\mtx{X})\|_F.
\end{equation}
\end{corollary}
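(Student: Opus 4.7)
The plan is to derive the corollary from the right-hand inequality in \eqref{eq:near-isometry2}, which follows directly from Theorem~\ref{teo:rudelson} under the stated hypothesis $C_R\sqrt{\mu_0 n r (\log n)/m}\le 1/2$. The strategy is a standard self-adjointness + Cauchy-Schwarz manipulation that converts a bound on the composite operator $\PT\PO\PT$ acting on $T$ into a bound on $\PO\PT$ itself.

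More precisely, I would first note that both $\PO$ and $\PT$ are orthogonal projections, hence self-adjoint and idempotent with respect to the Frobenius inner product. Starting from
\[
\|\PO\PT(\mtx{X})\|_F^2 = \langle \PO\PT(\mtx{X}),\,\PO\PT(\mtx{X})\rangle,
\]
I would use $\PO^\adj\PO = \PO$ (idempotence and self-adjointness of $\PO$) and then $\PT^\adj = \PT$ to rewrite this as
\[
\|\PO\PT(\mtx{X})\|_F^2 = \langle \PT(\mtx{X}),\,\PO\PT(\mtx{X})\rangle = \langle \PT(\mtx{X}),\,\PT\PO\PT(\mtx{X})\rangle.
\]

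Then I would apply Cauchy-Schwarz in the Frobenius inner product together with the upper bound in \eqref{eq:near-isometry2}:
\[
\|\PO\PT(\mtx{X})\|_F^2 \le \|\PT(\mtx{X})\|_F\,\|\PT\PO\PT(\mtx{X})\|_F \le \tfrac{3p}{2}\,\|\PT(\mtx{X})\|_F^2,
\]
valid on the same event of probability at least $1-3n^{-\beta}$ on which Theorem~\ref{teo:rudelson} holds. Taking square roots yields \eqref{eq:POPT}.

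There is no real obstacle here: everything rests on Theorem~\ref{teo:rudelson} having already done the work. The only things to double-check are that $\PO$ is genuinely an orthogonal projection (it selects coordinates in the standard basis of $\R^{n_1\times n_2}$, so $\PO^\adj = \PO$ and $\PO^2 = \PO$ are immediate) and that $\PT$ is self-adjoint (clear from the explicit formula \eqref{eq:PT}, since $\mtx{P}_U$ and $\mtx{P}_V$ are symmetric). Both facts are routine, so the corollary follows with essentially no additional probabilistic content beyond Theorem~\ref{teo:rudelson}.
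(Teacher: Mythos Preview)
Your proposal is correct and matches the paper's own proof essentially step for step: both rewrite $\|\PO\PT(\mtx{X})\|_F^2$ as $\langle \PT(\mtx{X}),\,(\PT\PO\PT)(\mtx{X})\rangle$ via the self-adjointness and idempotence of $\PO$ and $\PT$, apply Cauchy--Schwarz, and then invoke the upper bound in \eqref{eq:near-isometry2}.
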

\begin{proof}
We have $\|\PO \PT(\mtx{X})\|_F^2 = \<\mtx{X}, (\PO \PT)^* (\PO \PT)
\mtx{X}\> = \<\mtx{X}, (\PT \PO \PT) \mtx{X}\>$ and thus
\[
\|\PO \PT(\mtx{X})\|_F^2 = \<\PT \mtx{X}, (\PT \PO \PT) \mtx{X}\>
\le \|\PT(\mtx{X})\|_F \, \|(\PT \PO \PT)(\mtx{X})\|_F,
\]
where the inequality is due to Cauchy-Schwarz. The conclusion
\eqref{eq:POPT} follows from \eqref{eq:near-isometry2}.
\end{proof}

\subsection{The size property}
\label{sec:size}

In this section, we explain how we will show that $\|\PTc(\mtx{Y})\|
< 1$. This result will follow from five lemmas that we will prove
in Section~\ref{sec:proofs}.  Introduce
\[
\mathcal{H} \equiv \PT - p^{-1} \PT \PO \PT,
\]
which obeys $\|\mathcal{H}(\mtx{X})\|_F \le C_R \, \sqrt{\mu_0
(nr/m) \, \beta \log n} \|\PT(\mtx{X})\|_F$ with large probability
because of Theorem \ref{teo:rudelson}.  For any matrix $\mtx{X}\in
T$, $(\PT \PO \PT)^{-1}(\mtx{X})$ can be expressed in terms of the
power series
\[
(\PT \PO \PT)^{-1}(\mtx{X}) = p^{-1}(\mtx{X} + \mathcal{H}(\mtx{X})
+ \mathcal{H}^2(\mtx{X}) + \ldots)
\]
for $\mathcal{H}$ is a contraction when $m$ is sufficiently large.
Since $\mtx{Y} = \PO \PT (\PT \PO \PT)^{-1}(\sum_{1\le k \le r}
\vct{u}_k \vct{v}_k^*)$, $\PTc(\mtx{Y})$ may be decomposed as
\begin{equation}
\label{eq:size} \PTc(\mtx{Y}) = p^{-1} (\mathcal{P}_{T^\perp} \PO \PT)
(\mtx{E} + \mathcal{H}(\mtx{E}) + {\cal
  H}^2(\mtx{E}) + \ldots), \qquad \mtx{E} = \sum_{1 \le k \le r} \vct{u}_k
\vct{v}_k^*.
\end{equation}
To bound the norm of the left-hand side, it is of course sufficient to
bound the norm of the summands in the right-hand side.  Taking the
following five lemmas together establishes Theorem \ref{teo:main2}.
\begin{lemma}
  \label{teo:order0}
  Fix $\beta \ge 2$ and $\lambda \ge 1$.  There is a numerical
  constant $C_0$ such that if $m \ge \lambda \, \mu_1^2 \, nr \beta
  \log n$, then
  \begin{equation}
  \label{eq:order0} p^{-1} \, \|(\PTc \PO \PT) \mtx{E}\| \le
C_0 \, \lambda^{-1/2}.
\end{equation}
with probability at least $1-n^{-\beta}$.
\end{lemma}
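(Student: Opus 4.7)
The plan is to write $(\PTc\PO\PT)(\mtx{E})$ as a sum of independent, zero-mean random matrices indexed by $(a,b)$, and bound its spectral norm by a matrix Bernstein-type concentration inequality (or, equivalently, via symmetrization plus noncommutative Khintchine, in the style already invoked in the paper).

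First, since $\mtx{E}=\sum_k \vct{u}_k\vct{v}_k^*$ lies in $T$, $\PTc\mtx{E}=0$; and since $\E\PO=p\,\OpId$,
\[
(\PTc\PO\PT)(\mtx{E})=\PTc(\PO-p\OpId)(\mtx{E})=\sum_{a,b}(\delta_{ab}-p)\,E_{ab}\,\PTc(\eab),
\]
where $E_{ab}=\<\mtx{E},\eab\>$. Set $\mtx{S}_{ab}=(\delta_{ab}-p)E_{ab}\PTc(\eab)$. The $\mtx{S}_{ab}$ are independent and mean zero, and assumption \textbf{A1} together with $\|\PTc(\eab)\|\le 1$ yields the uniform bound $\|\mtx{S}_{ab}\|\le \mu_1\sqrt{r/(n_1n_2)}=:B$.

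Next, I would compute the matrix variance. Using $\PTc(\eab)=(\Id-\mtx{P}_U)\vct{e}_a\vct{e}_b^*(\Id-\mtx{P}_V)$,
\[
\mtx{S}_{ab}\mtx{S}_{ab}^{*}=(\delta_{ab}-p)^{2}E_{ab}^{2}\,\|(\Id-\mtx{P}_V)\vct{e}_b\|^{2}\,(\Id-\mtx{P}_U)\vct{e}_a\vct{e}_a^{*}(\Id-\mtx{P}_U).
\]
Summing over $b$ via \textbf{A1} (so that $\sum_b E_{ab}^2\le \mu_1^2 r/n_1$), then over $a$ using $\sum_a(\Id-\mtx{P}_U)\vct{e}_a\vct{e}_a^{*}(\Id-\mtx{P}_U)\preceq \Id$, and a symmetric computation on the $\mtx{S}_{ab}^{*}\mtx{S}_{ab}$ side, gives the variance estimate $\sigma^{2}\le p\,\mu_1^{2}r/n$.

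Then matrix Bernstein (tuned so that the deviation parameter produces tail probability $n^{-\beta}$) delivers
\[
\|(\PTc\PO\PT)(\mtx{E})\|\le C\Big(\sqrt{p\mu_1^{2}r\beta\log n/n}+\mu_1\sqrt{r}\,\beta\log n/n\Big)
\]
with probability at least $1-n^{-\beta}$. Dividing by $p=m/(n_1n_2)$ and substituting the hypothesis $m\ge \lambda\mu_1^{2}nr\beta\log n$, which is $p\ge \lambda\mu_1^{2}r\beta\log n/n$, turns the first term into $O(\lambda^{-1/2})$, and the second is subdominant (it is bounded by the first whenever $\lambda\mu_1^{2}r\beta\log n\gtrsim 1$, which is automatic under the hypotheses). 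This yields $p^{-1}\|(\PTc\PO\PT)(\mtx{E})\|\le C_{0}\lambda^{-1/2}$, the desired bound.

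The main technical obstacle is the variance calculation: one must carry the $(\Id-\mtx{P}_U)$ and $(\Id-\mtx{P}_V)$ factors coming from $\PTc(\eab)$ through the sum without degrading the estimate, and one must use \textbf{A1} rather than only \textbf{A0} in order to land the $\mu_1^2$ factor that matches the stated threshold on $m$. Once those book-keeping points are in place, the concentration step is routine; the same plan works if one prefers to symmetrize $\sum_{ab}(\delta_{ab}-p)E_{ab}\PTc(\eab)$ by an independent Rademacher sequence and invoke the noncommutative Khintchine inequality, since that reduces to exactly the same variance estimates together with Talagrand-style concentration for the deviation around the mean.
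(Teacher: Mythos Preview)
Your proposal is correct, but the route differs from the paper's in two ways worth noting.

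First, the paper immediately discards $\PTc$ via $\|\PTc(\mtx{X})\|\le\|\mtx{X}\|$ and bounds the simpler object $p^{-1}\|(\PO-p\OpId)(\mtx{E})\| = p^{-1}\|\sum_{ab}(\delta_{ab}-p)E_{ab}\eab\|$, whereas you keep $\PTc$ and carry the $(\Id-\mtx{P}_U),(\Id-\mtx{P}_V)$ factors through the variance computation. This is precisely the book-keeping you flag as the ``main technical obstacle''; the paper avoids it entirely, at no cost in the final estimate.

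Second, the concentration machinery differs. The paper symmetrizes, applies the noncommutative Khintchine inequality, and is then left with controlling the $q$th moment of $\max_a \sum_b \delta_{ab}E_{ab}^2$, which requires a separate auxiliary lemma on moments of maxima of Bernoulli-weighted sums (Lemma~\ref{teo:E-max-indep2}); Markov's inequality closes the argument. Your matrix-Bernstein route is more direct and bypasses that auxiliary moment computation. On the other hand, the paper's detour pays a dividend: because it never uses any structure of $\mtx{E}$ beyond $\|\mtx{E}\|_\infty$, it yields the general-purpose estimate $p^{-1}\|(\PO-p\OpId)(\mtx{X})\|\le C_0\sqrt{\beta n\log n/p}\,\|\mtx{X}\|_\infty$ (Theorem~\ref{teo:PO}), which is invoked repeatedly in the proofs of Lemmas~\ref{teo:order1}--\ref{teo:order3}. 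Your argument, as written with $\PTc$ retained, does not directly produce that reusable tool, though of course dropping $\PTc$ in your setup would recover it.
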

\begin{lemma}
  \label{teo:order1}
  Fix $\beta \ge 2$ and $\lambda \ge 1$.  There are numerical
  constants $C_1$ and $c_1$ such that if $m \ge \lambda \, \mu_1
  \max(\sqrt{\mu_0}, \mu_1) \, nr \beta \log n$, then
\begin{equation}
  \label{eq:order1} p^{-1} \, \|(\PTc \PO \PT) \mathcal{H}(\mtx{E})\|
  \le C_1 \, \lambda^{-1}
\end{equation}
with probability at least $1-c_1 n^{-\beta}$.
\end{lemma}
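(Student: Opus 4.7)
The idea is to expand $p^{-1}(\PTc\PO\PT)\mathcal{H}(\mtx{E})$ as a bilinear form in the centered Bernoulli variables $\xi_{ab}:=\delta_{ab}-p$, and then bound the resulting second-order chaos via decoupling together with a noncommutative Khintchine inequality. Writing $\mathcal{R}:=\PO-p\OpId=\sum_{ab}\xi_{ab}\,\eab\otimes\eab$, we have $\mathcal{H}=-p^{-1}\PT\mathcal{R}\PT$, and since $\mtx{E}\in T$ and $\mathcal{H}(\mtx{E})\in T$, the identities $\PT\mathcal{H}(\mtx{E})=\mathcal{H}(\mtx{E})$ and $\PTc\mathcal{H}(\mtx{E})=0$ yield
$$p^{-1}(\PTc\PO\PT)\mathcal{H}(\mtx{E})\;=\;-p^{-2}\,\PTc\mathcal{R}\PT\mathcal{R}(\mtx{E})\;=\;-p^{-2}\sum_{a,b,c,d}\xi_{ab}\xi_{cd}\,E_{ab}\,\bigl\langle\PT(\eab),\vct{e}_c\vct{e}_d^*\bigr\rangle\,\PTc(\vct{e}_c\vct{e}_d^*).$$

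I would split this sum into the diagonal $(a,b)=(c,d)$ and the off-diagonal contributions. The diagonal simplifies to $p^{-2}\sum_{ab}\xi_{ab}^2\,E_{ab}\,\|\PT(\eab)\|_F^2\,\PTc(\eab)$, and since $\PTc(\eab)=(\mtx{I}_{n_1}-\mtx{P}_U)\vct{e}_a\vct{e}_b^*(\mtx{I}_{n_2}-\mtx{P}_V)$, this sum equals $(\mtx{I}_{n_1}-\mtx{P}_U)\mtx{B}(\mtx{I}_{n_2}-\mtx{P}_V)$ for the matrix $\mtx{B}$ with entries $B_{ab}=\xi_{ab}^2 E_{ab}\|\PT(\eab)\|_F^2$; its spectral norm can be controlled directly using $|\xi_{ab}|\le 1$, $|E_{ab}|\le\mu_1\sqrt{r/(n_1 n_2)}$ from {\bf A1}, $\|\PT(\eab)\|_F^2\le 2\mu_0 r/\min(n_1,n_2)$ from \eqref{eq:PTeabF}, and concentration of the row/column sums of $\{\xi_{ab}^2\}$, which live around $p n_2$ and $pn_1$. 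For the off-diagonal part, I would invoke a standard decoupling inequality (de la Pe\~{n}a--Montgomery-Smith) to replace the inner variables $\xi_{cd}$ by an independent copy $\xi'_{cd}$ at the cost of a universal constant, and then condition on $\xi'$ and apply the noncommutative Khintchine inequality of Lust-Piquard to the Rademacher-type sum $\sum_{ab}\xi_{ab}\mtx{A}_{ab}(\xi')$ with $\mtx{A}_{ab}(\xi'):=E_{ab}\sum_{cd}\xi'_{cd}\langle\PT(\eab),\vct{e}_c\vct{e}_d^*\rangle\PTc(\vct{e}_c\vct{e}_d^*)$. This produces
$$\E_\xi\bigl\|\textstyle\sum_{ab}\xi_{ab}\mtx{A}_{ab}(\xi')\bigr\|\;\le\;C\sqrt{\log n}\,\max\!\Bigl(\bigl\|\textstyle\sum_{ab}\mtx{A}_{ab}\mtx{A}_{ab}^*\bigr\|^{1/2},\,\bigl\|\textstyle\sum_{ab}\mtx{A}_{ab}^*\mtx{A}_{ab}\bigr\|^{1/2}\Bigr).$$
A second Khintchine application to the variables $\xi'$ living inside $\mtx{A}_{ab}(\xi')$, combined with $\E(\xi'_{cd})^2\le p$, the size bound on $\mtx{E}$, the bounds $\|\PT(\eab)\|_F\le\sqrt{2\mu_0 r/n}$, $\|\mtx{P}_U\vct{e}_a\|^2\le\mu_0 r/n_1$, and $\|\mtx{P}_V\vct{e}_b\|^2\le\mu_0 r/n_2$, then yields an overall expectation bound of order $\mu_1\max(\sqrt{\mu_0},\mu_1)\sqrt{nr\log n/m}$ once the prefactor $p^{-2}$ is folded in. This is at most $C_1\lambda^{-1}$ under the hypothesis $m\ge\lambda\mu_1\max(\sqrt{\mu_0},\mu_1)nr\beta\log n$, and the promised tail estimate $1-c_1 n^{-\beta}$ is obtained from Talagrand's concentration inequality, exactly as in the appendix invoked for Theorem~\ref{teo:rudelson}.

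The main obstacle I anticipate is the nested Khintchine step: the matrices $\mtx{A}_{ab}(\xi')$ are themselves random with an intricate dependence on $\xi'$, so the two conditional variance sums $\sum_{ab}\mtx{A}_{ab}\mtx{A}_{ab}^*$ and $\sum_{ab}\mtx{A}_{ab}^*\mtx{A}_{ab}$ must be estimated with care, carefully balancing the $\mu_0$-bound on the coherences of $U,V$ against the $\mu_1$-bound on the entries of $\mtx{E}$. The factor $\max(\sqrt{\mu_0},\mu_1)$ in the hypothesis on $m$ is precisely what falls out when one of these two variances is controlled by $\sqrt{\mu_0}$ and the other by $\mu_1$, so getting the exponents to match requires keeping track of which coherence constant appears in which position in the Hilbert-Schmidt/operator bounds coming out of the second Khintchine pass.
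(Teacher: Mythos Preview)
Your high-level structure---expand as a bilinear chaos in the centered Bernoulli variables, split into diagonal and off-diagonal, decouple the off-diagonal---matches the paper exactly. The execution diverges in two places, and one of those carries a real gap.

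First, the paper immediately discards $\PTc$ via $\|\PTc(\mtx{X})\|\le\|\mtx{X}\|$ and works with
\[
p^{-1}(\PO-p\OpId)\,\mathcal{H}(\mtx{E})
\;=\;
p^{-2}\sum_{ab,\,a'b'}\xi_{ab}\,\xi_{a'b'}\,E_{a'b'}\,
\bigl\langle\PT(\eabp),\eab\bigr\rangle\,\eab.
\]
The outer matrix is now the elementary $\eab$ rather than $\PTc(\vct{e}_c\vct{e}_d^*)$, which makes the eventual Khintchine variance a diagonal row/column-sum matrix and hence trivial to bound. Your choice to keep $\PTc$ forces you to carry a much more complicated matrix structure through the argument.

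Second, and this is the genuine gap: after decoupling, the paper does \emph{not} attempt a nested Khintchine. It writes the decoupled off-diagonal as $p^{-1}\sum_{ab}\xi_{ab}\,H_{ab}\,\eab$ with
\[
H_{ab}\;=\;p^{-1}\!\!\sum_{(a',b')\neq(a,b)}\!\!\xi'_{a'b'}\,E_{a'b'}\,\bigl\langle\PT(\eabp),\eab\bigr\rangle,
\]
applies Theorem~\ref{teo:PO} conditionally on $\xi'$ to get a bound proportional to $\sqrt{n^3\beta\log n/m}\,\|\mtx{H}\|_\infty$, and then controls the \emph{scalar} quantity $\|\mtx{H}\|_\infty$ entrywise by Bernstein's inequality (Lemma~\ref{teo:bernstein}). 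Your ``second Khintchine application to the variables $\xi'$'' does not apply as stated: after the first pass the variance proxies $\sum_{ab}\mtx{A}_{ab}(\xi')\mtx{A}_{ab}(\xi')^*$ are \emph{quadratic} in $\xi'$, not linear Rademacher sums, so Khintchine is simply not the right tool for them. To make this route work you would have to either decouple again inside the variance (a third-order argument) or bound these sums by some other means; the paper sidesteps the issue entirely by reducing to an $\ell_\infty$ bound and using scalar concentration. The factor $\sqrt{\mu_0}$ in the hypothesis on $m$ arises precisely from the Bernstein variance estimate $\sum_{a'b'}|\langle\PT(\eabp),\eab\rangle|^2\le 2\mu_0 r/n$.

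For the diagonal part, the paper also takes a cleaner route than your row/column-sum concentration: it uses the algebraic identity $\xi_{ab}^2=(1-2p)\,\xi_{ab}+p(1-p)$ to split the diagonal into a first-order random piece (handled again by Theorem~\ref{teo:PO}) and a purely deterministic piece (handled by the elementary Lemma~\ref{teo:lemma0p}). No separate concentration on the squares $\xi_{ab}^2$ is needed.
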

\begin{lemma}
  \label{teo:order2}
  Fix $\beta \ge 2$ and $\lambda \ge 1$. There are numerical constants
  $C_2$ and $c_2$ such that if $m \ge \lambda \, \mu_0^{4/3} \,
  nr^{4/3} \beta \log n$, then
\begin{equation}
\label{eq:order2} p^{-1} \, \|(\PTc \PO \PT) \mathcal{H}^2(\mtx{E})\|
\le C_2 \, \lambda^{-3/2}
\end{equation}
with probability at least $1-c_2 n^{-\beta}$.
\end{lemma}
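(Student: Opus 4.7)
The plan is as follows. Setting $\xiab = \delta_{ab} - p$, I would first observe that $\mathcal{H}^2(\mtx{E}) \in T$ and $\PTc\PT = 0$, so $\PTc\PO\PT\mathcal{H}^2(\mtx{E}) = \PTc(\PO - p\OpId)\mathcal{H}^2(\mtx{E})$. Expanding each copy of $\mathcal{H}$ via $\mathcal{H} = -p^{-1}\sum_{ab}\xiab\,\PT(\eab)\otimes \PT(\eab)$ and the outer $\PTc(\PO-p\OpId)$ similarly yields
\[
p^{-1}(\PTc\PO\PT)\mathcal{H}^2(\mtx{E}) = p^{-3}\sum_{\omega_1,\omega_2,\omega_3}\xi_{\omega_1}\xi_{\omega_2}\xi_{\omega_3}\, c_{\omega_1\omega_2\omega_3}\,\PTc(\eabone),
\]
where $\omega_i = (a_i,b_i)$ and the scalar coefficient
\[
c_{\omega_1\omega_2\omega_3} = \langle \mtx{E},\PT(\eabthree)\rangle\,\langle \PT(\eabthree),\PT(\eabtwo)\rangle\,\langle\PT(\eabtwo),\PT(\eabone)\rangle
\]
is a product of three Hilbert--Schmidt inner products. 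The goal is to bound the spectral norm of this triple Bernoulli sum.

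Second, I would apply a standard decoupling inequality (de la Pe\~na--Montgomery-Smith) to replace $\xione,\xitwo,\xi_{\omega_3}$ by three independent copies of the Bernoulli sequence, at the cost of an absolute multiplicative constant. Then I would estimate the $2q$-th moment of the spectral norm of the decoupled sum by applying the noncommutative Khintchine inequality three times, once for each independent sequence: each application contributes a factor of $C\sqrt{q}$ and reduces the problem to controlling the spectral norm of a deterministic ``variance'' operator assembled from the coefficients $c_{\omega_1\omega_2\omega_3}$ and the matrices $\PTc(\eabone)$. Choosing $q \asymp \beta \log n$ and using Markov's inequality then converts the resulting moment bound into a tail bound of the stated form $1 - c_2 n^{-\beta}$.

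The main obstacle is bounding the three nested variance operators. These can be reorganized in terms of $\PT\PO\PT$ and its powers, which by Theorem~\ref{teo:rudelson} deviate from $p\PT$ in operator norm by at most $C_R\sqrt{\mu_0 nr\log n/m}$, combined with the pointwise coherence bound $\|\PT(\eab)\|_F^2 \le 2\mu_0 r/n$ from \eqref{eq:PTeabF}. The delicate point is that a na\"ive composition of three spectral bounds from Theorem~\ref{teo:rudelson} produces only a bound of order $(\mu_0 nr\log n/m)^{3/2}$, which would require $m \gtrsim \mu_0 nr\log n$; to obtain the claimed decay $\lambda^{-3/2}$ under the weaker hypothesis $m \ge \lambda\mu_0^{4/3}\,nr^{4/3}\beta\log n$, one must exploit cancellation in the kernel $\langle \PT(\eab),\PT(\eabp)\rangle$ by trading a factor of $\sqrt{\mu_0 nr/m}$ for an entrywise $\mu_0 r/n$ bound on $\PT(\eab)$ at exactly one of the three summation levels. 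This balancing across the three levels is what produces the $r^{4/3}$ and $\mu_0^{4/3}$ exponents.
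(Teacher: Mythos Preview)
Your expansion of $p^{-1}\PTc\PO\PT\,\mathcal{H}^2(\mtx{E})$ into a triple Bernoulli chaos is correct and is exactly where the paper starts.  But the plan after that has two genuine gaps, and the route is quite different from the paper's.

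\textbf{Decoupling does not apply to the full sum.}  The de la Pe\~na--Montgomery-Smith inequality (Lemma~\ref{teo:decoupling} and its trilinear analogue) concerns sums over \emph{distinct} indices.  Your triple sum runs over all $(\omega_1,\omega_2,\omega_3)$, including the cases $\omega_1=\omega_2$, $\omega_2=\omega_3$, $\omega_1=\omega_3$, and $\omega_1=\omega_2=\omega_3$.  These ``diagonal'' pieces cannot be decoupled and must be treated separately; in fact several of them carry factors like $\xi_\omega^2$ or $\xi_\omega^3$ which, after the identity $\xi_\omega^2=(1-2p)\xi_\omega+p(1-p)$, produce deterministic lower--order chaos terms that have to be bounded by different means.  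The paper handles this by partitioning the sum into five pieces according to the coincidence pattern of $(\omega_1,\omega_2,\omega_3)$ and estimating each separately.

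\textbf{Iterating Khintchine is not what happens next.}  Even on the fully off-diagonal piece, the paper does not apply the noncommutative Khintchine inequality three times.  After one application of Khintchine (this is the content of Theorem~\ref{teo:PO}), the bound involves $\max_a \sum_b \delta_{ab} H_{ab}^2$ where $\mtx{H}$ still depends on the remaining random sequences; this quantity is \emph{not} a Rademacher series to which Khintchine applies again.  Instead the paper controls the remaining randomness through $\|\mtx{H}\|_\infty$ via scalar Bernstein (Lemma~\ref{teo:bernstein}) applied twice, once for each inner sequence.  The claim that the nested variance operators ``reorganize in terms of $\PT\PO\PT$ and its powers'' is not correct: the relevant objects are diagonal matrices with entries $\sum_a |H_{ab}|^2$, not operator products of $\PT\PO\PT$, and Theorem~\ref{teo:rudelson} plays no role in the proof of this lemma.

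\textbf{Where $r^{4/3}$ and $\mu_0^{4/3}$ actually come from.}  They are not produced by a ``balancing'' trick inside an iterated Khintchine argument.  They arise from specific diagonal pieces of the five-fold partition: for example, in the case $\omega_1=\omega_3\neq\omega_2$ one is led to bound $\|\mtx{H}\|_\infty\,\|\mtx{E}\|_F$ with $\|\mtx{H}\|_\infty \lesssim (\mu_0 nr/m)^{3/2}\sqrt{\beta\log n}$ and $\|\mtx{E}\|_F=\sqrt{r}$, which gives a term of order $\mu_0^{3/2}\sqrt{r}\,(nr/m)^{3/2}\sqrt{\beta\log n}$.  Setting $m=\lambda\,\mu_0^{4/3}nr^{4/3}\beta\log n$ makes this (and the analogous term from $\omega_1\neq\omega_2=\omega_3$) of order $\lambda^{-3/2}$.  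So the exponents come from extra factors of $\sqrt{r}$ and $\mu_0$ generated by the diagonal collisions, not from a refinement of the off-diagonal chaos bound.
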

\begin{lemma}
  \label{teo:order3}
  Fix $\beta \ge 2$ and $\lambda \ge 1$. There are numerical constants
  $C_3$ and $c_3$ such that if $m \ge \lambda \mu_0^2 \, nr^2 \beta
  \log n$, then
  \begin{equation}
\label{eq:order3} p^{-1} \, \|(\PTc \PO \PT)
\mathcal{H}^3(\mtx{E})\| \le C_3 \, \lambda^{-1/2}
\end{equation}
with probability at least $1-c_3 n^{-\beta}$.
\end{lemma}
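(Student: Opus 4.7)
The plan is to follow the same strategy as in Lemmas \ref{teo:order0}--\ref{teo:order2}, combining the deterministic operator-norm bound on $\mathcal{H}|_T$ from Theorem \ref{teo:rudelson} with a concentration argument for the outer factor $p^{-1}\PTc\PO\PT$. Writing $\mathcal{H} = -p^{-1}\PT(\PO - p\,\OpId)\PT$ and using $\mtx{E}\in T$, the quantity of interest unfolds as
\[
p^{-1}(\PTc\PO\PT)\mathcal{H}^3(\mtx{E}) = p^{-4}\,\PTc(\PO - p\OpId)\PT(\PO - p\OpId)\PT(\PO - p\OpId)\PT(\PO - p\OpId)\PT\mtx{E}
\]
(up to sign), i.e., a $4$-linear form in the centered Bernoulli variables $\xi_{ab} = \delta_{ab}-p$.

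First I would condition on the Rudelson event of Theorem \ref{teo:rudelson}, which holds with probability at least $1-3n^{-\beta}$ and provides $\|\mathcal{H}|_T\|\le C_R\sqrt{\mu_0 nr(\beta\log n)/m}$. Iterating three times and using $\|\mtx{E}\|_F=\sqrt{r}$ gives
\[
\|\mathcal{H}^3(\mtx{E})\|_F \le C_R^3\,(\mu_0 nr\beta\log n/m)^{3/2}\sqrt{r},
\]
which under the hypothesis $m\ge \lambda\mu_0^2 nr^2\beta\log n$ simplifies to $\|\mathcal{H}^3(\mtx{E})\|_F \le C_R^3\,\lambda^{-3/2}\mu_0^{-3/2}r^{-1}$. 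Now I would estimate, in spectral norm, the random map $p^{-1}\PTc\PO\PT$ applied to a matrix $\mtx{F}\in T$ of controlled Frobenius norm. Since $\PTc\PT=0$, one has $\PTc\PO\PT\mtx{F}=\sum_{ab}(\delta_{ab}-p)\<\mtx{F},\PT\eab\>\PTc\eab$, a mean-zero sum of rank-one matrices (since each $\PTc\eab=(\OpId-\mtx{P}_U)\eab(\OpId-\mtx{P}_V)$ has rank at most $1$ and spectral norm at most $1$). A standard decoupling inequality lets me replace one copy of $\{\delta_{ab}\}$ by an independent copy $\{\delta_{ab}'\}$, so that after conditioning, $\mtx{F}$ is independent of the outer Bernoullis. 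Then I would apply the non-commutative Khintchine inequality (or matrix Bernstein) to the rank-one sum, using $|\<\mtx{F},\PT\eab\>|\le\|\mtx{F}\|_F\|\PT\eab\|_F$ together with $\|\PT\eab\|_F^2\le 2\mu_0 r/n$ from \eqref{eq:PTeabF} to control the variance statistic. Combining with the Frobenius estimate on $\mathcal{H}^3(\mtx{E})$ and simplifying under $m\ge\lambda\mu_0^2 nr^2\beta\log n$ should produce the asserted bound $C_3\lambda^{-1/2}$, together with a tail of probability $c_3 n^{-\beta}$ from the Khintchine moment-to-tail conversion.

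The main obstacle is the spectral-norm estimate of $p^{-1}\PTc\PO\PT\mtx{F}$ in terms of $\|\mtx{F}\|_F$. The naive conversion $\|\cdot\|\le\|\cdot\|_F$ combined with Corollary \ref{teo:POPT} gives only $p^{-1}\|\PTc\PO\PT\mtx{F}\|\le\sqrt{3/(2p)}\|\mtx{F}\|_F$, which carries a spurious factor of order $\sqrt{n}$ and ruins the final estimate. To recover this factor one must exploit the rank-one structure of the summands $\PTc\eab$ through matrix concentration; this is the delicate step, and is what dictates both the sample complexity $m\ge\lambda\mu_0^2 nr^2\beta\log n$ (emerging from the variance term $\sum_{ab}\|\PT\eab\|_F^2\|\PTc\eab\|^2\lesssim\mu_0 nr$) and the modest $\lambda^{-1/2}$ decay (reflecting that one only extracts a single $\sqrt{\mu_0 nr\log n/m}$ from the outer factor, the three inner $\mathcal{H}$'s already having been absorbed into the $\|\mtx{F}\|_F$ bound).
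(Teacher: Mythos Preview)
Your plan has a genuine gap at exactly the point you flag as ``the main obstacle,'' and the fix you propose does not close it.

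You want to control $p^{-1}\|\PTc\PO\PT\mtx{F}\|$ for $\mtx{F}=\mathcal{H}^3(\mtx{E})$ using only the Frobenius bound $\|\mtx{F}\|_F\le C_R^3(\mu_0 nr\beta\log n/m)^{3/2}\sqrt r$ coming from Theorem~\ref{teo:rudelson}. But matrix concentration (Khintchine or Bernstein) applied to $p^{-1}\sum_{ab}\xi_{ab}F_{ab}\,\PTc(\eab)$ produces a variance proxy of the form $p^{-1}\max_a\sum_b F_{ab}^2$ (and the symmetric column quantity), i.e.\ the \emph{maximum row/column norm} of $\mtx{F}$, not $\|\mtx{F}\|_F$. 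Your suggested workaround $|\langle\mtx{F},\PT\eab\rangle|\le\|\mtx{F}\|_F\|\PT\eab\|_F$ is just Cauchy--Schwarz applied entrywise; it bounds every $|F_{ab}|$ by the same number $\sqrt{2\mu_0 r/n}\,\|\mtx{F}\|_F$, and then $\max_a\sum_b F_{ab}^2\le 2\mu_0 r\|\mtx{F}\|_F^2$. Plugging this in and simplifying under $m=\lambda\mu_0^2 nr^2\beta\log n$ leaves a factor proportional to $\sqrt n$ in the final spectral bound---precisely the spurious factor you were trying to kill. Exploiting the rank-one structure of $\PTc(\eab)$ does not help here: that structure is already what makes the variance collapse to row/column sums of $F_{ab}^2$, and it is those sums that are not controlled by $\|\mtx{F}\|_F$ alone. (Decoupling is a separate, lesser issue: $\mtx{F}$ is a degree-$3$ polynomial in the same $\xi$'s and one cannot simply ``replace one copy'' without first splitting off all coincident-index terms, as in Section~\ref{sec:order2}.)

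The paper's proof takes a different route precisely because Frobenius control on $\mathcal{H}^3(\mtx{E})$ is insufficient. It writes $p^{-1}(\PO-p\OpId)\mathcal{H}^3(\mtx{E})=p^{-1}\,\Xi\circ\mathcal{H}^3(\mtx{E})$ as a Hadamard product, invokes the Ando--Horn--Johnson inequality $\|\mtx{A}\circ\mtx{B}\|\le\|\mtx{A}\|\,\nu(\mtx{B})$, bounds $p^{-1}\|\Xi\|$ via Theorem~\ref{teo:PO}, and then proves $\nu(\mathcal{H}^3(\mtx{E}))\le C\mu_0 r/n$. Because $\mathcal{H}^3(\mtx{E})\in T$, the latter reduces to controlling the row norms of $[\mathcal{H}^3(\mtx{E})\vct{v}_1,\ldots,\mathcal{H}^3(\mtx{E})\vct{v}_r]$ (and the analogous $\vct{u}$-block), which is exactly the row-norm information that your Frobenius argument lacks. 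Establishing this requires a five-term partition and decoupling analysis parallel to Section~\ref{sec:order2}, carried out entrywise rather than in operator norm; this is where the sample requirement $m\gtrsim\mu_0^2 nr^2\beta\log n$ actually arises.
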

% BHR (5/27) moved the assumption on $m$ to parallel the previous lemmas
\begin{lemma}
  \label{teo:highorder}
  Under the assumptions of Theorem \ref{teo:rudelson}, there is a
  numerical constant $C_{k_0}$ such that if $m \ge
(2C_R)^2 \mu_0 nr \beta \log n$, then
  \begin{equation}
    \label{eq:highorder} p^{-1} \, \|(\PTc \PO \PT) \sum_{k \ge k_0}
    \mathcal{H}^{k}(\mtx{E})\| \le C_{k_0} \, \left(\frac{n^2 r}{m}\right)^{1/2} \, \left(\frac{\mu_0 nr \beta \log n}{m}\right)^{k_0/2}
\end{equation}
with probability at least $1-n^{-\beta}$.
\end{lemma}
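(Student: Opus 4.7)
The idea is that Lemma~\ref{teo:highorder} is essentially the ``geometric tail'' of the Neumann series expansion of $(\PT\PO\PT)^{-1}(\mtx{E})$: with both the spectral bound on $\mathcal{H}$ from Theorem~\ref{teo:rudelson} and the ``$\PO\PT$ is small'' estimate from Corollary~\ref{teo:POPT} already in hand, nothing more than the trivial inequality $\|\cdot\|\le\|\cdot\|_F$ and the sum of a geometric series is needed.

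First, I would record the two deterministic consequences of the event guaranteed by Theorem~\ref{teo:rudelson}. Under the hypothesis $m \ge (2C_R)^2\mu_0 nr\beta\log n$, the quantity $h := C_R\sqrt{\mu_0 nr\beta\log n/m}$ satisfies $h\le 1/2$, and on an event of probability at least $1-3n^{-\beta}$ (which I absorb into the claimed $1-n^{-\beta}$ by adjusting $\beta$) one has $\|\mathcal{H}\|\le h$ as an operator (with respect to the Frobenius norm). Since $\mathcal{H}=\PT-p^{-1}\PT\PO\PT$ takes values in $T$ and $\mtx{E}=\sum_{k}\vct{u}_k\vct{v}_k^*\in T$, iterating gives $\|\mathcal{H}^k(\mtx{E})\|_F \le h^k\|\mtx{E}\|_F = h^k\sqrt{r}$ for every $k\ge 0$, and each $\mathcal{H}^k(\mtx{E})$ lies in $T$.

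Next, for any $\mtx{Z}\in T$ I would bound the spectral norm by the Frobenius norm and invoke Corollary~\ref{teo:POPT}:
\[
\|(\PTc\PO\PT)(\mtx{Z})\|\;\le\;\|(\PTc\PO\PT)(\mtx{Z})\|_F\;\le\;\|\PO\PT(\mtx{Z})\|_F\;\le\;\sqrt{3p/2}\,\|\mtx{Z}\|_F,
\]
using that $\PTc$ is an orthogonal projection and that $\PT\mtx{Z}=\mtx{Z}$. Applied to $\mtx{Z}=\mathcal{H}^k(\mtx{E})$ this yields
\[
\|(\PTc\PO\PT)\mathcal{H}^k(\mtx{E})\|\;\le\;\sqrt{3p/2}\,h^k\sqrt{r}.
\]

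Finally, I would multiply by $p^{-1}$, sum the geometric series (using $h\le 1/2$ so that $\sum_{k\ge k_0}h^k\le 2h^{k_0}$), and use $n_1n_2\le n^2$ to get
\[
p^{-1}\,\Bigl\|(\PTc\PO\PT)\sum_{k\ge k_0}\mathcal{H}^k(\mtx{E})\Bigr\|\;\le\;\sqrt{\tfrac{3n_1n_2 r}{2m}}\cdot\frac{h^{k_0}}{1-h}\;\le\;\sqrt{6}\,C_R^{\,k_0}\sqrt{\tfrac{n^2r}{m}}\left(\tfrac{\mu_0 nr\beta\log n}{m}\right)^{k_0/2},
\]
which is the stated bound with $C_{k_0}=\sqrt{6}\,C_R^{k_0}$. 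The only ``obstacle'' worth mentioning is organizational: I must be careful to apply the Frobenius Corollary~\ref{teo:POPT} to pick up the factor $\sqrt{p}$ (rather than bounding $\|\PTc\PO\PT\mtx{Z}\|_F$ crudely by $\|\mtx{Z}\|_F$), since it is precisely this factor that converts the naive $p^{-1}\sim n^2/m$ prefactor into the much smaller $\sqrt{n^2r/m}$ appearing in the statement.
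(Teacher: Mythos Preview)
Your proposal is correct and follows essentially the same route as the paper: bound the spectral norm by the Frobenius norm, invoke Corollary~\ref{teo:POPT} to gain the crucial factor of $\sqrt{p}$, and then sum the geometric series using the contraction bound $\|\mathcal{H}\|\le C_R\sqrt{\mu_0 nr\beta\log n/m}\le 1/2$ from Theorem~\ref{teo:rudelson} together with $\|\mtx{E}\|_F=\sqrt{r}$. The only cosmetic difference is that the paper sums the series inside a single Frobenius norm before bounding, whereas you bound each term first and then sum; the resulting estimate and constants are the same.
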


Let us now show how we may combine these lemmas to prove our main
results. Under all of the assumptions of Theorem~\ref{teo:main2},
consider the four Lemmas \ref{teo:order0}, \ref{teo:order1},
\ref{teo:order2} and \ref{teo:highorder}, the latter applied with $k_0
= 3$. Together they imply that there are numerical constants
$c$ and $C$ such that $\|\PTc(\mtx{Y})\| < 1$ with
probability at least $1-c n^{-\beta}$ provided that the number
of samples obeys
\begin{equation}
  \label{eq:bound1}
  m \ge C \, \, \max(\mu_1^2, \mu_0^{1/2} \mu_1, \mu_0^{4/3} r^{1/3}, \mu_0 n^{1/4}) \, nr \beta \log n
\end{equation}
for some constant $C$. The four expressions in the maximum come
from Lemmas \ref{teo:order0}, \ref{teo:order1}, \ref{teo:order2} and
\ref{teo:highorder} in this order. Now the bound \eqref{eq:bound1} is
only interesting in the range when $\mu_0 n^{1/4} r$ is smaller than a
constant times $n$ as otherwise the right-hand side is greater than
$n^2$ (this would say that one would see all the entries in which case
our claim is trivial). When $\mu_0 r \le n^{3/4}$, $(\mu_0 r)^{4/3}
\le \mu_0 n^{5/4} r$ and thus the recovery is exact provided that $m$
obeys \eqref{eq:main3}.

For the case concerning small values of the rank, we consider all five
lemmas and apply Lemma \ref{teo:highorder}, the latter applied with
$k_0 = 4$. Together they imply that $\|\PTc(\mtx{Y})\| < 1$ with
probability at least $1-c n^{-\beta}$ provided that the number
of samples obeys
\begin{equation}
  \label{eq:bound2}
  m \ge C \max(\mu_0^2 r, \mu_0 n^{1/5}) \,  nr \beta \log n
\end{equation}
for some constant $C$. The two expressions in the maximum come
from  Lemmas \ref{teo:order3} and \ref{teo:highorder} in this
order. The reason for this simplified formulation is that the terms
$\mu_1^2$, $\mu_0^{1/2} \mu_1$ and $\mu_0^{4/3} r^{1/3}$ which come
from Lemmas \ref{teo:order0}, \ref{teo:order1} and \ref{teo:order2}
are bounded above by $\mu_0^2 r$ since $\mu_1 \le \mu_0 \sqrt{r}$.
When $\mu_0 r \le n^{1/5}$, the recovery is exact provided that $m$
obeys \eqref{eq:main4}.

\section{Connections with Random Graph Theory}
% with the Coupon Collector's and Other Problems}
\label{sec:coupon}

%In this section, we establish some connections with some well-knwon
\subsection{The injectivity property and the coupon collector's problem}

We argued in the Introduction that to have any hope of recovering an
unknown matrix of rank 1 by any method whatsoever, one needs at
least one observation per row and one observation per column. Sample
$m$ entries uniformly at random.  Viewing the row indices as bins,
assign the $k$th sampled entry to the bin corresponding to its row
index. Then to have any hope of recovering our matrix, all the bins
need to be occupied. Quantifying how many samples are required to
fill all of the bins is the famous \emph{coupon collector's
problem}.

Coupon collection is also connected to the injectivity of the
sampling operator $\PO$ restricted to elements in $T$. Suppose we
sample the entries of a rank 1 matrix equal to $\vct{x} \vct{y}^*$
with left and right singular vectors $\vct{u} = \vct{x}/\|\vct{x}\|$
and $\vct{v} = \vct{y}/\|\vct{y}\|$ respectively and have not seen
anything in the $i$th row. Then we claim that $\PO$ (restricted to
$T$) has a nontrivial null space and thus $\PT \PO \PT$ is not
invertible. Indeed, consider the matrix $\vct{e}_i \vct{v}^*$. This
matrix is in $T$ and
\[
\PO(\vct{e}_i \vct{v}^*) = 0
\] since $\vct{e}_i \vct{v}^*$ vanishes outside of the $i$th row. The same applies
to the columns as well. If we have not seen anything in column $j$,
then the rank-1 matrix $\vct{u} \vct{e}_j^* \in T$ and $\PO(\vct{u}
\vct{e}_j^*) = 0$. In conclusion, the invertibility of $\PT \PO \PT$
implies a complete collection.

When the entries are sampled uniformly at random, it is well known
that one needs on the order of $n \log n$ samples to sample all the
rows. What is interesting is that Theorem \ref{teo:rudelson} implies
that $\PT \PO \PT$ is invertible---a stronger property---when the
number of samples is also on the order of $n \log n$. A particular
implication of this discussion is that the logarithmic factors in
Theorem \ref{teo:rudelson} are unavoidable.
% Changed the above sentence to remove the "rather tight" comment

\subsection{The injectivity property and the connectivity problem}

To recover a matrix of rank 1, one needs much more than at
least one observation per row and column. Let $R$ be the set of row
indices, $1 \le i \le n$, and $C$ be the set of column indices, $1 \le
j \le n$, and consider the bipartite graph connecting vertices $i \in
R$ to vertices $j \in C$ if and only if $(i,j) \in \Omega$, i.e.~the
$(i,j)$th entry is observed. We claim that if this graph is not fully
connected, then one cannot hope to recover a matrix of rank 1.

To see this, we let $I$ be the set of row indices and $J$ be the set
of column indices in any connected component. We will assume that $I$
and $J$ are nonempty as otherwise, one is in the previously discussed
situation where some rows or columns are not sampled.  Consider a
rank 1 matrix equal to $\vct{x} \vct{y}^*$ as before with singular
vectors $\vct{u} = \vct{x}/\|\vct{x}\|$ and $\vct{v} =
\vct{y}/\|\vct{y}\|$. Then all the information about the values of the
$x_i$'s with $i \in I$ and of the $y_j$'s with $j \in J$ are given by
the sampled entries connecting $I$ to $J$ since all the other observed
entries connect vertices in $I^c$ to those in $J^c$.  Now even if one
observes all the entries $x_i y_j$ with $i \in I$ and $j \in J$, then
at least the signs of $x_i$, $i \in I$, and of $y_j$, $j \in J$, would
remain undetermined. Indeed, if the values $(x_i)_{i \in I}$,
$(y_j)_{j \in J}$ are consistent with the observed entries, so are the
values $(-x_i)_{i \in I}$, $(-y_j)_{j \in J}$. However, since the same
analysis holds for the sets $I^c$ and $J^c$, there are at least two
matrices consistent with the observed entries and exact matrix
completion is impossible.

The connectivity of the graph is also related to the injectivity of
the sampling operator $\PO$ restricted to elements in $T$. If the
graph is not fully connected, then we claim that $\PO$ (restricted to
$T$) has a nontrivial null space and thus $\PT \PO \PT$ is not
invertible. Indeed, consider the matrix
\[
\vct{M} = \vct{a} \vct{v}^* + \vct{u} \vct{b}^*,
\]
where $a_i = -u_i$ if $i \in I$ and $a_i = u_i$ otherwise, and $b_j =
v_j$ if $j \in J$ and $b_j = - v_j$ otherwise. Then this matrix is in
$T$ and obeys
\[
M_{ij} = 0
\]
if $(i,j) \in I \times J$ or $(i,j) \in I^c \times J^c$. Note that
on the complement, i.e. $(i,j) \in I \times J^c$ or $(i,j) \in I^c
\times J$, one has $M_{ij} = 2 u_i v_j$ and one can show that
$\mtx{M} \neq 0$ unless $\vct{u}\vct{v}^* = 0$.  Since $\Omega$ is
included in the union of $I \times J$ and $I^c \times J^c$, we have
that $\PO(\vct{M}) = 0$. In conclusion, the invertibility of $\PT
\PO \PT$ implies a fully connected graph.

When the entries are sampled uniformly at random, it is well known
that one needs on the order of $n \log n$ samples to obtain a fully
connected graph with large probability (see,
e.g.,~\cite{BollobasGraphsBook}). Remarkably, Theorem
\ref{teo:rudelson} implies that $\PT \PO \PT$ is invertible---a
stronger property---when the number of samples is also on the order
of $n \log n$.

\section{Proofs of the Critical Lemmas}
\label{sec:proofs}

In this section, we prove the five lemmas of Section
\ref{sec:size}. Before we begin, however, we develop a simple
estimate which we will use throughout. For each pair $(a,b)$ and
$(a',b')$, it follows from the expression of $\PT(\eab)$
\eqref{eq:PTeab} that
\begin{equation}
  \label{eq:PTeabp}
  \<\PT(\eabp), \eab\> = \<\vct{e}_a, \mtx{P}_U \vct{e}_{a'}\> \, 1_{\{b = b'\}} +
  \<\vct{e}_b, \mtx{P}_V \vct{e}_{b'}\> \, 1_{\{a = a'\}} - \<\vct{e}_a, \mtx{P}_U \vct{e}_{a'}\> \<\vct{e}_b, \mtx{P}_V \vct{e}_{b'}\>.
\end{equation}
Fix $\mu_0$ obeying $\mu(U) \le \mu_0$ and $\mu(V) \le \mu_0$ and
note that
\[
|\<\vct{e}_a, \mtx{P}_U \vct{e}_{a'}\>| = |\<\mtx{P}_U \vct{e}_a,
\mtx{P}_U \vct{e}_{a'}\>| \le \|\mtx{P}_U \vct{e}_a\| \, \|\mtx{P}_U
\vct{e}_{a'}\| \le \mu_0 r/n_1
\]
and similarly for $\<\vct{e}_b, \mtx{P}_V \vct{e}_{b'}\>$.  Suppose
that $b = b'$ and $a \neq a'$, then
\[
|\<\PT(\eabp), \eab\>| = |\<\vct{e}_a, \mtx{P}_U \vct{e}_{a'}\>| (1
- \|\mtx{P}_V \vct{e}_b\|^2) \le \mu_0 r/n_1.
\]
We have a similar bound when $a = a'$ and $b \neq b'$ whereas when $a
\neq a'$ and $b \neq b'$,
\[
|\<\PT(\eabp), \eab\>| \le (\mu_0 r)^2/(n_1 n_2).
\]
In short, it follows from this analysis (and from \eqref{eq:PTeabF} for
the case where $(a,b) = (a',b')$) that
\begin{equation}
  \label{eq:useful}
  \max_{ab, a'b'} |\<\PT(\eabp), \eab\>| \le 2\mu_0 r/\min(n_1, n_2).
\end{equation}
A consequence of \eqref{eq:PTeabF} is the estimate:
\begin{align}
  \nonumber \sum_{a'b'} |\<\PT(\eabp), \eab\>|^2 & = \sum_{a'b'}
  |\<\PT(\eab), \eabp\>|^2 \\
  \label{eq:useful2}
  & = \|\PT(\eab)\|_F^2 \le 2\mu_0 r/\min(n_1,n_2),
\end{align}
which we will apply several times. A related estimate is this:
\begin{equation}
  \label{eq:Cauchy}
  \max_a \sum_{b} |E_{ab}|^2 \le \mu_0r/\min(n_1,n_2),
\end{equation}
and the same is true by exchanging the role of $a$ and $b$. To see this, write
\[
\sum_{b} |E_{ab}|^2 = \|\vct{e}_a^* \mtx{E}\|^2 = \|\sum_{j \le r}
\vct{v}_j \<\vct{u}_j, e_a\>\|^2 = \sum_{j\le r} |\<\vct{u}_j,
e_a\>|^2 = \|\mtx{P}_U \vct{e}_a\|^2,
\]
and the conclusion follows from the coherence property.

We will prove the lemmas in the case where $n_1 = n_2 = n$ for
simplicity, i.e. in the case of square matrices of dimension $n$.  The
general case is treated in exactly the same way. In fact, the argument
only makes use of the bounds \eqref{eq:useful}, \eqref{eq:useful2}
(and sometimes \eqref{eq:Cauchy}), and the general case is obtained
by replacing $n$ with $\min(n_1,n_2)$.

Each of the following subsections computes the operator norm of some
random variable.  In each section, we denote $\mtx{S}$ as the
quantity whose norm we wish to analyze. We will also frequently use
the notation $\mtx{H}$ for some auxiliary matrix variable whose norm
we will need to bound.  Hence, we will reuse the same notation many
times rather than introducing a dozens new names---just like in
computer programming where one uses the same variable name in
distinct routines.

\subsection{Proof of Lemma \ref{teo:order0}}
\label{sec:order0}

In this section, we develop a bound on
\begin{align*}
p^{-1} \|\PTc \PO \PT(\mtx{E})\| & = p^{-1} \|\PTc (\PO - p\OpId)
\PT(\mtx{E})\|\\
& \le  p^{-1} \|(\PO - p\OpId)(\mtx{E})\|,
\end{align*}
where the equality follows from $\PTc \PT = 0$, and the inequality
from $\PT(\mtx{E}) = \mtx{E}$ together with $\|\PTc(\mtx{X})\| \le
\|\mtx{X}\|$ which is valid for any matrix $\mtx{X}$. Set
\begin{equation}
\label{eq:S} \mtx{S} \equiv p^{-1} (\PO - p\OpId) (\mtx{E}) = p^{-1}
\sum_{ab} (\delta_{ab} - p) E_{ab} \eab.
\end{equation}
We think of $\mtx{S}$ as a random variable since it depends on the
random $\delta_{ab}$'s, and note that $\E \mtx{S} = 0$.

The proof of Lemma \ref{teo:order0} operates by developing an
estimate on the size of $(\E \|\mtx{S}\|^q)^{1/q}$ for some $q \ge
1$ and by applying Markov inequality to bound the tail of the random
variable $\|\mtx{S}\|$.  To do this, we shall use a symmetrization
argument and the noncommutative Khintchine inequality. Since the
function $f(\mtx{S}) = \|\mtx{S}\|^q$ is convex, Jensen's inequality
gives that
\[
\E \|\mtx{S}\|^q \le \E\|\mtx{S} - \mtx{S}'\|^q,
\]
where $\mtx{S}' = p^{-1} \sum_{ab} (\delta'_{ab} - p) E_{ab} \eab$
is an independent copy of $\mtx{S}$. Since $(\delta_{ab} -
\delta'_{ab})$ is symmetric, $\mtx{S} - \mtx{S}'$ has the same
distribution as
\[
p^{-1} \, \sum_{ab} \epsilon_{ab} (\delta_{ab} - \delta'_{ab})
E_{ab} \eab \equiv  \mtx{S}_\epsilon - \mtx{S}_\epsilon',
\]
% 05/06 put curly brackets below
where $\{\epsilon_{ab}\}$ is an independent Rademacher sequence and
$\mtx{S}_\epsilon = p^{-1} \sum_{ab} \epsilon_{ab} \delta_{ab} E_{ab}
\vct{e}_a \vct{e}_b^*$. Further, the triangle inequality gives
\[
(\E \|\mtx{S}_\epsilon-\mtx{S}_\epsilon'\|^q)^{1/q} \le (\E
\|\mtx{S}_\epsilon\|^q)^{1/q} + (\E \|\mtx{S}_\epsilon'\|^q)^{1/q} =
2 (\E \|\mtx{S}_\epsilon\|^q)^{1/q}
\]
since $\mtx{S}_\epsilon$ and $\mtx{S}_\epsilon'$ have the same distribution and, therefore,
\[
(\E \|\mtx{S}\|^q)^{1/q} \le 2p^{-1} \left(\E_{\delta} \E_\epsilon
\|\sum_{ab}
    \epsilon_{ab} \delta_{ab} \, E_{ab} \vct{e}_a \vct{e}_b^*\|^q\right)^{1/q}.
\]

We are now in position to apply the noncommutative Khintchine
inequality which bounds the Schatten norm of a Rademacher series. For
$q\geq 1$, the \emph{Schatten q-norm} of a matrix is denoted by
\[
     \|\mtx{X}\|_{S_q} = \left(\sum_{i=1}^n \sigma_i(\mtx{X})^q
     \right)^{1/q}\,.
\]
Note that the nuclear norm is equal to the Schatten 1-norm and the
Frobenius norm is equal to the Schatten 2-norm.  The following
theorem was originally proven by
Lust-Picquard~\cite{LustPicquard86}, and was later sharpened by
Buchholz~\cite{Buchholz01}.

\begin{lemma}[Noncommutative Khintchine inequality] Let $(\mtx{X}_i)_{1 \le
    i \le r}$ be a finite sequence of matrices of the same dimension
  and let $\{\epsilon_i\}$ be a Rademacher sequence. For each $q \ge 2$
\[
\left[\E_\epsilon \norm{\sum_i \epsilon_i \mtx{X}_i}_{S_q}^q
\right]^{1/q} \le C_K \, \sqrt{q} \, \max\left[\norm{\left(\sum_{i}
\mtx{X}_i^*
      \mtx{X}_i\right)^{1/2}}_{S_q}, \norm{\left(\sum_{i} \mtx{X}_i
      \mtx{X}_i^*\right)^{1/2}}_{S_q}\right],
\]
where $C_K = 2^{-1/4} \sqrt{\pi/e}$.
\end{lemma}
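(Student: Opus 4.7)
The plan is to follow the standard route to noncommutative Khintchine, namely the moment method plus a combinatorial bookkeeping of pair partitions, and to recover general $q$ from the case of even integers by interpolation. I would not try to prove the sharp constant $C_K = 2^{-1/4}\sqrt{\pi/e}$ from scratch (that is the Buchholz improvement, which uses delicate combinatorics of non-crossing pairings in free probability); the goal of the sketch is the form of the inequality with an absolute constant times $\sqrt{q}$.

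\textbf{Reduction to even integers.} By H\"older and the monotonicity $\|\cdot\|_{S_q}\le\|\cdot\|_{S_{q'}}$ for $q\ge q'$, together with log-convexity of $q\mapsto\bigl(\E\|\sum\epsilon_iX_i\|_{S_q}^q\bigr)^{1/q}$, it suffices to establish the inequality for $q=2k$ with $k\in\mathbb{N}$. The passage to arbitrary real $q\ge 2$ then uses a standard three-line interpolation argument between two adjacent even integers.

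\textbf{Moment expansion.} Write $\mtx{S}=\sum_i\epsilon_i\mtx{X}_i$ and expand
\[
\E\,\|\mtx{S}\|_{S_{2k}}^{2k}=\E\,\trace\bigl((\mtx{S}\mtx{S}^*)^k\bigr)=\sum_{i_1,j_1,\ldots,i_k,j_k}\E\bigl[\epsilon_{i_1}\epsilon_{j_1}\cdots\epsilon_{i_k}\epsilon_{j_k}\bigr]\,\trace\bigl(\mtx{X}_{i_1}\mtx{X}_{j_1}^*\cdots\mtx{X}_{i_k}\mtx{X}_{j_k}^*\bigr).
\]
By independence of the Rademacher signs, only those multi-indices where every index appears an \emph{even} number of times survive, so the sum reduces to one indexed by pair partitions of the $2k$ positions. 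For each pair partition $\pi$, the contribution is a trace of an alternating product of the $\mtx{X}_i$'s and their adjoints, with identifications dictated by $\pi$.

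\textbf{Combinatorial bound via pair partitions.} Group the pair partitions by their crossing structure. For each $\pi$ one shows, by repeatedly applying $|\trace(\mtx{A}\mtx{B})|\le\|\mtx{A}\|_{S_p}\|\mtx{B}\|_{S_{p'}}$ along edges of $\pi$, that the trace is bounded by a product of Schatten $2k$-norms of either $\sum_i\mtx{X}_i^*\mtx{X}_i$ or $\sum_i\mtx{X}_i\mtx{X}_i^*$, raised to the power $k$. The number of pair partitions of $2k$ elements is $(2k)!/(2^k k!)$, and the subset consisting of non-crossing pairings gives the Catalan number $C_k$; crossing pairings can be handled by reducing them to non-crossing ones at the cost of replacing one factor by the maximum of the two Schatten norms. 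Combining,
\[
\E\,\|\mtx{S}\|_{S_{2k}}^{2k}\le\frac{(2k)!}{2^k\,k!}\cdot\Bigl(\max\bigl\{\|\textstyle\sum_i\mtx{X}_i^*\mtx{X}_i\|_{S_k}^{1/2},\|\sum_i\mtx{X}_i\mtx{X}_i^*\|_{S_k}^{1/2}\bigr\}\Bigr)^{2k}.
\]
Taking the $(2k)$-th root and using Stirling's approximation $((2k)!/(2^kk!))^{1/(2k)}\sim\sqrt{2k/e}$ produces the factor $C\sqrt{q}$ with $q=2k$, and the maxima of Schatten norms become those in the stated inequality (replacing the inner $S_k$-norm by $S_{2k}$-norm of the square root, which are the same quantity).

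\textbf{Main obstacle.} The key difficulty is the noncommutativity: unlike the scalar Khintchine inequality, one cannot commute matrices inside the trace, so different pair partitions truly give different expressions, and one must separate non-crossing from crossing pairings to avoid losing the correct order in $k$. Obtaining the Catalan asymptotics (and, a fortiori, the sharp Buchholz constant $C_K=2^{-1/4}\sqrt{\pi/e}$) is where the real work lies; for our purposes, any absolute constant suffices, since Theorem~\ref{teo:main2} only requires the $\sqrt{q}$ dependence to convert moment bounds into tail bounds via Markov's inequality as is done in Section~\ref{sec:order0}.
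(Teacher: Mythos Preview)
The paper does not prove this lemma; it quotes it as a known result, attributing it to Lust-Picquard with the sharp constant due to Buchholz, and then simply uses it as a black box in Section~\ref{sec:order0}. So there is no ``paper's proof'' to compare against---your sketch is supplying an argument where the paper offers only a citation.

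That said, your outline is the classical moment-method route and is essentially the Lust-Picquard strategy. One technical slip: for a Rademacher sequence the expectation $\E[\epsilon_{i_1}\cdots\epsilon_{i_{2k}}]$ is $1$ whenever each distinct index appears an \emph{even} number of times, not only when the multi-index is a pair partition. Thus the expansion is indexed by partitions of $\{1,\ldots,2k\}$ into even blocks, not merely into pairs. One fixes this either by first passing to Gaussians (for which Wick's formula gives exactly pair partitions, and the Rademacher case then follows by the standard contraction/comparison $\E\|\sum\epsilon_i\mtx{X}_i\|^q\le\E\|\sum g_i\mtx{X}_i\|^q$ up to an absolute constant), or by dominating each even-block partition by a sum over its refinements into pairs. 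Either repair preserves the $\sqrt{q}$ growth, which, as you correctly note, is all that Section~\ref{sec:order0} needs.
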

For reference, if $\mtx{X}$ is an $n \times n$ matrix and $q \ge
\log n$, we have
\[
\|\mtx{X}\| \le \|\mtx{X}\|_{S_q} \le e \|\mtx{X}\|,
\]
so that the Schatten $q$-norm is within a multiplicative constant from
the operator norm. Observe now that with $q' \ge q$
\[
\left(\E_{\delta} \E_\epsilon \|\mtx{S}_\epsilon\|^q\right)^{1/q}
\le \left(\E_{\delta} \E_\epsilon
\|\mtx{S}_\epsilon\|_{S_{q'}}^q\right)^{1/q} \le \left(\E_{\delta}
\E_\epsilon \|\mtx{S}_\epsilon\|_{S_{q'}}^{q'}\right)^{1/q'}.
\]
We apply the noncommutative Khintchine inequality with $q' \ge \log
n$, and after a little algebra, obtain
\[
\left(\E_{\delta} \E_\epsilon
\|\mtx{S}_\epsilon\|_{S_{q'}}^{q'}\right)^{1/q'} \le C_K \, \frac{e \,
\sqrt{q'}}{p} \, \left( \E_\delta \max \left[\|\sum_{ab}
    \delta_{ab} E^2_{ab} \vct{e}_a \vct{e}_a^*\|^{q'/2}, \|\sum_{ab} \delta_{ab}
    E^2_{ab} \vct{e}_b \vct{e}_b^*\|^{q'/2}\right]\right)^{1/q'}.
\]
% \begin{align*}
%   \left(\E_{\delta} \E_\epsilon \|\sum_{ab} \epsilon_{ab} \delta_{ab}
%     \, E_{ab} e_a e_b^*\|^q\right)^{1/q} & \le \left(\E_{\delta}
%     \E_\epsilon \|\sum_{ab}
%     \epsilon_{ab} \delta_{ab} \, E_{ab} e_a e_b^*\|_{S_p}^q\right)^{1/q}\\
%   & \le \left(\E_{\delta} \E_\epsilon \|\sum_{ab}
%     \epsilon_{ab} \delta_{ab} \, E_{ab} e_a e_b^*\|_{S_p}^p\right)^{1/p}\\
%   & \le C_0 \, \sqrt{ep} \, \left( \E_\delta \max \left[\|\sum_{ab} \delta_{ab}
%   E^2_{ab} e_a e_a^*\|^{p/2}, \|\sum_{ab} \delta_{ab}
%   E^2_{ab} e_b e_b^*\|^{p/2}\right]\right)^{1/p},
% \end{align*}
The two terms in the right-hand side are essentially the same and if
we can bound any one of them, the same technique will apply to the
other. We consider the first and since $\sum_{ab}
    \delta_{ab} E^2_{ab} \vct{e}_a \vct{e}_a^*$ is a diagonal matrix,
\[
\|\sum_{ab} \delta_{ab} E^2_{ab} \vct{e}_a \vct{e}_a^*\| = \max_{a}
\sum_{b} \delta_{ab} E^2_{ab}.
\]
The following lemma bounds the $q$th moment of this quantity.
\begin{lemma}
  \label{teo:E-max-indep2} Suppose that $q$ is an integer obeying $1
  \le q \le np$ and assume $np \ge 2\log n$. Then
  \begin{equation}
    \label{eq:E-max-indep2}
    \E_\delta \left(\max_{a} \sum_{b}
      \delta_{ab} E^2_{ab}\right)^q \le 2 \, \left(2np \, \|\mtx{E}\|^2_\infty \right)^q.
  \end{equation}
\end{lemma}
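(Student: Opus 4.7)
The plan is as follows. Let $M := \|\mtx{E}\|_\infty^2$ and $Y_a := \sum_b \delta_{ab} E_{ab}^2$, so the claim is that $\E(\max_a Y_a)^q \le 2(2npM)^q$. The starting observation is that, for each fixed $a$, $Y_a$ is a sum of $n$ independent non-negative random variables, each bounded by $M$, with mean $\mu_a = p\sum_b E_{ab}^2 \le npM$ (using that $\sum_b E_{ab}^2 \le nM$) and variance at most $M\mu_a \le npM^2$.

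Bernstein's inequality then yields a subexponential tail: for $s \ge 2npM$ (which is at least twice $\mu_a$),
\[
\P(Y_a \ge s) \le \exp\!\Bigl(-\frac{(s-\mu_a)^2}{2M\mu_a + \tfrac{2}{3}M(s-\mu_a)}\Bigr) \le \exp(-c\,s/M)
\]
for an absolute constant $c>0$, since in this regime the linear term in the denominator dominates. Union-bounding over $a$ gives $\P(\max_a Y_a \ge s) \le n\exp(-cs/M)$ in the same range.

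To convert this into the $q$-th moment bound, I would use the identity $\E(\max_a Y_a)^q = \int_0^\infty q t^{q-1} \P(\max_a Y_a \ge t)\, dt$ and split the integral at the threshold $T := 2npM$. The part on $[0,T]$ contributes at most $T^q = (2npM)^q$. The tail part is bounded, via the Bernstein estimate and a change of variables $u = ct/M$, by $nq(M/c)^q\,\Gamma(q,\,cT/M)$, an incomplete-gamma expression. Here is where the two hypotheses come into play: $q \le np$ keeps this quantity on the correct (decaying) side of the gamma function's peak, while $np \ge 2\log n$ supplies the exponential factor $e^{-cT/M} = e^{-2cnp}$ that swallows the union-bound factor $n$. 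With the constants chosen carefully, the tail contributes at most $T^q$, giving $\E(\max_a Y_a)^q \le 2T^q$ as desired.

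The main obstacle is aligning the constants so that the tail integral genuinely contributes no more than $T^q$ to the final bound. Either hypothesis alone is insufficient: $q\le np$ is needed so that the integrand $t^{q-1}e^{-ct/M}$ is already past its peak at $t = T$, and $np \ge 2\log n$ is needed so that $e^{-2cnp}$ beats the factor $n$ from the union bound. A careful (but elementary) calibration of Bernstein's constant $c$, together with a standard bound on the upper incomplete gamma function in terms of its endpoint value, should close the argument and yield the factor $2$ in the final estimate.
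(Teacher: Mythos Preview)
Your approach is essentially the paper's: a Bernstein-type tail bound on each row sum, a union bound over $a$, and the moment identity $\E Y_*^q=\int_0^\infty qt^{q-1}\P(Y_*>t)\,dt$ split at the threshold $T=2npM$, with the two hypotheses playing exactly the roles you describe. The one refinement that makes the constants land cleanly is that the paper first bounds $\sum_b \delta_{ab}E_{ab}^2 \le M\sum_b \delta_{ab}$ and applies the tail bound to the \emph{unweighted} binomial sum; this yields the explicit tail $\P(\sum_b\delta_{ab}>t)\le e^{-t/2}$ for $t\ge 2np$, so that at the threshold one gets $e^{-np}\le n^{-2}$, which absorbs the factor $n\cdot q\le n\cdot np$ under the stated hypotheses. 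If instead you apply Bernstein directly to the weighted sum as you propose, the constant $c$ in $\exp(-cs/M)$ comes out around $3/20$ in the worst case, and then $np\ge 2\log n$ is no longer sufficient to beat the union-bound factor $n$ and deliver the exact constant $2$. Factoring out $M$ first dissolves your ``main obstacle'' with no additional work.
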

The proof of this lemma is in the Appendix. The same estimate applies
to $\E \left(\max_{b} \sum_{a} \delta_{ab} E^2_{ab}\right)^q$ and thus
for each $q \ge 1$
\[
\E_\delta \max \left[\|\sum_{ab} \delta_{ab} E^2_{ab} \vct{e}_a
  \vct{e}_a^*\|^{q}, \|\sum_{ab} \delta_{ab} E^2_{ab} \vct{e}_b
  \vct{e}_b^*\|^{q}\right] \le 4\, \left(2np \, \|\mtx{E}\|^2_\infty
\right)^q.
\]
(In the rectangular case, the same estimate holds with $n = \max(n_1,
n_2)$.)

Take $q = \beta \log n$ for some $\beta \ge 1$, and set $q' = q$. Then
since $\|\mtx{E}\|_\infty \le \mu_1 \sqrt{r}/n$, we established that
\[
\left(\E \|\mtx{S}\|^q\right)^{1/q} \le C \, \frac{1}{p} \,
\sqrt{\beta \log n} \,\sqrt{np} \, \|\mtx{E}\|_\infty  = C \, \mu_1
\, \sqrt{\frac{n r \, \beta \log n}{m}} \equiv K_0.
\]
Then by Markov's inequality, for each $t > 0$,
\[
\P(\|\mtx{S}\| > t K_0) \le t^{-q},
\]
and for $t = e$, we conclude that
\[
\P\left(\|\mtx{S}\| > C e \, \mu_1 \, \sqrt{\frac{nr\, \beta \log
n}{m}}\right) \le n^{-\beta}
\]
with the proviso that $m \ge \max(\beta, 2) \, n \log n$ so that Lemma
\ref{teo:E-max-indep2} holds.

We have not made any assumption in this section about the matrix
$\mtx{E}$ (except that we have a bound on the maximum entry) and,
therefore, have proved the theorem below, which shall be used many
times in the sequel.
\begin{theorem}
  \label{teo:PO} Let $\mtx{X}$ be a fixed $n\times n$ matrix. There is
  a constant $C_0$ such that for each $\beta > 2$
\begin{equation}
  \label{eq:PObound} p^{-1} \|(\PO - p\OpId)(\mtx{X})\|
\le C_0\, \left(\frac{\beta n \log n}{p}\right)^{1/2} \,
\|\mtx{X}\|_\infty
\end{equation}
with probability at least $1 - n^{-\beta}$ provided that
$np \ge \beta \log n$.
\end{theorem}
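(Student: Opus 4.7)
The plan is to observe that Theorem \ref{teo:PO} is essentially a restatement of what has already been proved in Section \ref{sec:order0}, since that argument never used any structural property of $\mtx{E}$ beyond the fact that it is a fixed matrix with a bound on $\|\mtx{E}\|_\infty$. The goal is to recap the same chain of inequalities with $\mtx{X}$ in place of $\mtx{E}$ and then track constants carefully.

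First I would set $\mtx{S} = p^{-1}(\PO - p\OpId)(\mtx{X}) = p^{-1}\sum_{ab}(\delta_{ab}-p)X_{ab}\eab$, which has mean zero, and aim to control $(\E\|\mtx{S}\|^q)^{1/q}$ for a well-chosen $q$. The standard symmetrization trick (via Jensen, an independent copy $\mtx{S}'$, and the symmetry of $\delta_{ab}-\delta'_{ab}$) reduces the problem to bounding
\[
2\bigl(\E_\delta\E_\epsilon \|p^{-1}\sum_{ab}\epsilon_{ab}\delta_{ab}X_{ab}\eab\|^q\bigr)^{1/q},
\]
where $\{\epsilon_{ab}\}$ is an independent Rademacher sequence.

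Next I would pass from the operator norm to the Schatten-$q'$ norm for some $q' \ge \max(q,\log n)$ (at a cost of a factor of $e$) and apply the noncommutative Khintchine inequality conditionally on $\{\delta_{ab}\}$. Because $\sum_{ab}\delta_{ab}X_{ab}^2\vct{e}_a\vct{e}_a^*$ and $\sum_{ab}\delta_{ab}X_{ab}^2\vct{e}_b\vct{e}_b^*$ are diagonal, their operator norms equal $\max_a\sum_b\delta_{ab}X_{ab}^2$ and $\max_b\sum_a\delta_{ab}X_{ab}^2$ respectively. Lemma \ref{teo:E-max-indep2} (which depends on $\mtx{X}$ only through $\|\mtx{X}\|_\infty$) gives, for any integer $1 \le q \le np$ with $np\ge 2\log n$,
\[
\E_\delta\Bigl(\max_a\sum_b\delta_{ab}X_{ab}^2\Bigr)^q \le 2\bigl(2np\,\|\mtx{X}\|_\infty^2\bigr)^q,
\]
and symmetrically for the transpose sum.

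Combining these pieces with the choice $q = q' = \beta\log n$ yields
\[
(\E\|\mtx{S}\|^q)^{1/q} \le C\, p^{-1}\sqrt{\beta\log n}\,\sqrt{np}\,\|\mtx{X}\|_\infty = C\,\|\mtx{X}\|_\infty\sqrt{\frac{\beta n\log n}{p}} \equiv K_0.
\]
Finally I would apply Markov's inequality $\P(\|\mtx{S}\| > eK_0) \le e^{-q} = n^{-\beta}$, which delivers the tail bound of Theorem \ref{teo:PO}. The hypothesis $np \ge \beta\log n$ guarantees that the choice $q = \beta\log n$ satisfies $q \le np$, making Lemma \ref{teo:E-max-indep2} applicable. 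There is no real obstacle here beyond careful bookkeeping: the entire machinery (symmetrization, Khintchine, the diagonal estimate) has already been built for the specific matrix $\mtx{E}$, and only the bound $\|\mtx{E}\|_\infty \le \mu_1\sqrt{r}/n$ was used; replacing $\|\mtx{E}\|_\infty$ by the generic $\|\mtx{X}\|_\infty$ gives precisely the stated inequality.
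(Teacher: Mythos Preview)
Your proposal is correct and follows essentially the same approach as the paper: Theorem~\ref{teo:PO} is stated immediately after the argument of Section~\ref{sec:order0} precisely because that argument used nothing about $\mtx{E}$ except a bound on $\|\mtx{E}\|_\infty$, and the paper simply remarks this. Your recap---symmetrization, noncommutative Khintchine, the diagonal moment bound of Lemma~\ref{teo:E-max-indep2}, and Markov at $t=e$ with $q=\beta\log n$---matches the paper's proof line for line.
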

Note that this is the same $C_0$ described in
Lemma~\ref{teo:order0}.

\subsection{Proof of Lemma \ref{teo:order1}}
\label{sec:order1}

We now need to bound the spectral norm of $\PTc \PO \PT \,
\mathcal{H}(\mtx{E})$ and will use some of the ideas developed in
the previous section. Just as before,
\[
p^{-1} \|\PTc \PO \PT \, \mathcal{H}(\mtx{E})\| \le p^{-1} \|(\PO -
p\OpId) \, \mathcal{H}(\mtx{E})\|,
\]
and put
\[
\mtx{S} \equiv p^{-1} (\PO - p\OpId) \, \mathcal{H}(E) = p^{-2}
\sum_{ab,
  a'b'} \xi_{ab} \xi_{a'b'} \, E_{a'b'} \<\PT\eabp,
\eab\> \eab,
\]
where here and below, $\xi_{ab} \equiv \delta_{ab} - p$.  Decompose
$\mtx{S}$ as \begin{equation}
\label{eq:diagoff} \mtx{S} \,\,\, = \,\,\, p^{-2} \sum_{(a,b) = (a',b')} \,\,\, +
\,\,\, p^{-2} \sum_{(a,b) \, \neq \, (a',b')} \,\,\, \equiv \,\,\,
\mtx{S}_0 + \mtx{S}_1.
\end{equation}
We bound the spectral norm of the diagonal and off-diagonal
contributions separately.

We begin with $\mtx{S}_0$ and decompose $(\xi_{ab})^2$ as
\[
\xi_{ab}^2 = (\dab - p)^2 = (1-2p)(\dab - p) + p(1-p) =
(1-2p)\xi_{ab} + p(1-p),
\]
which allows us to express $\mtx{S}_0$ as
\begin{equation}
\label{eq:S0} \mtx{S}_0 = \frac{1-2p}{p} \sum_{ab} \xiab \, H_{ab}
\eab + (1-p) \sum_{ab} H_{ab} \eab, \quad H_{ab} \equiv p^{-1} \,
E_{ab} \<\PT\eab, \eab\>.
\end{equation}
Theorem \ref{teo:PO} bounds the spectral norm of the first term of the
right-hand side and we have
\[
p^{-1} \|\sum_{ab} \xiab\, H_{ab} \eab\| \le C_0 \, \sqrt{\frac{n^3
\beta \log n}{m}} \, \|\mtx{H}\|_\infty
\]
with probability at least $1 - n^{-\beta}$.  Now since
$\|\mtx{E}\|_\infty \le \mu_1\sqrt{r}/n$ and $|\<\PT\eab, \eab\>| \le
2\mu_0 r/n$ by \eqref{eq:useful}, $\|\mtx{H}\|_\infty \le \mu_0\mu_1(2r/np)
\, \sqrt{r}/n$, and
\[
p^{-1} \| \sum_{ab} \xiab \, H_{ab} \eab\| \le C \mu_0 \mu_1 \,
\frac{nr}{m} \, \sqrt{\frac{ n r \beta \log n}{m}}
\]
with the same probability.  The second term of the right-hand side in
\eqref{eq:S0} is deterministic and we develop an argument that we will
reuse several times. We record a useful lemma.
\begin{lemma}
  \label{teo:lemma0p}
  Let $\mtx{X}$ be a fixed matrix and set $\mtx{Z} \equiv \sum_{ab}
  X_{ab} \<\PT(\eab), \eab\> \eab$. Then
\[
\|\mtx{Z}\| \le \frac{2\mu_0 r}{n} \|\mtx{X}\|.
\]
\end{lemma}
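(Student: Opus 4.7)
The plan is to exploit the explicit formula for $\<\PT(\eab), \eab\>$ provided in \eqref{eq:PTeabF2} to recognize $\mtx{Z}$ as a sum of matrices of the form $\mtx{D}\mtx{X}$ and $\mtx{X}\mtx{D}'$, where $\mtx{D}$ and $\mtx{D}'$ are diagonal matrices whose entries are controlled by the coherence parameter $\mu_0$. Once this structural observation is in place, the bound reduces to applying the triangle inequality together with submultiplicativity of the spectral norm.

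Concretely, set $\alpha_a \equiv \|\mtx{P}_U \vct{e}_a\|^2$ and $\beta_b \equiv \|\mtx{P}_V \vct{e}_b\|^2$. By \eqref{eq:PTeabF2},
\[
\<\PT(\eab), \eab\> = \alpha_a + \beta_b - \alpha_a \beta_b = \alpha_a + (1-\alpha_a)\beta_b,
\]
so $Z_{ab} = \alpha_a X_{ab} + (1-\alpha_a)\beta_b X_{ab}$. Introducing the diagonal matrices $\mtx{D}_U = \mathrm{diag}(\alpha_a)$ and $\mtx{D}_V = \mathrm{diag}(\beta_b)$, this becomes the operator identity
\[
\mtx{Z} = \mtx{D}_U \mtx{X} + (\mtx{I} - \mtx{D}_U)\, \mtx{X}\, \mtx{D}_V.
\]
I would then invoke the coherence hypothesis $\mu(U), \mu(V) \le \mu_0$, which yields $\max_a \alpha_a \le \mu_0 r/n$ and $\max_b \beta_b \le \mu_0 r/n$, hence $\|\mtx{D}_U\|, \|\mtx{D}_V\| \le \mu_0 r/n$. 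Since each $\alpha_a \in [0,1]$ (as the squared norm of an orthogonal projection of a unit vector), we also have $\|\mtx{I} - \mtx{D}_U\| \le 1$. Combining these via the triangle inequality and submultiplicativity gives
\[
\|\mtx{Z}\| \le \|\mtx{D}_U\|\,\|\mtx{X}\| + \|\mtx{I} - \mtx{D}_U\|\,\|\mtx{X}\|\,\|\mtx{D}_V\| \le \frac{\mu_0 r}{n}\|\mtx{X}\| + \frac{\mu_0 r}{n}\|\mtx{X}\| = \frac{2\mu_0 r}{n}\|\mtx{X}\|.
\]

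There is no real obstacle here: the only subtlety is choosing the asymmetric rearrangement $\alpha_a + (1-\alpha_a)\beta_b$ rather than the symmetric $\alpha_a + \beta_b - \alpha_a\beta_b$, so as to bound the cross term by a single factor of $\mu_0 r/n$ rather than the naive $(\mu_0 r/n)^2 + 2\mu_0 r/n$. The rectangular case $n_1 \neq n_2$ is identical upon replacing $\mu_0 r/n$ by $\mu_0 r/\min(n_1,n_2)$, consistent with the convention announced at the beginning of Section~\ref{sec:proofs}.
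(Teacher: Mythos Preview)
Your proof is correct and essentially identical to the paper's own argument: the paper introduces the same diagonal matrices (there written $\mtx{\Lambda}_U$ and $\mtx{\Lambda}_V$), uses the same decomposition from \eqref{eq:PTeabF2}, and rearranges the three terms in the symmetric form --- the paper groups as $\mtx{\Lambda}_U \mtx{X}(\mtx{I}-\mtx{\Lambda}_V) + \mtx{X}\mtx{\Lambda}_V$ whereas you group as $\mtx{D}_U\mtx{X} + (\mtx{I}-\mtx{D}_U)\mtx{X}\mtx{D}_V$, but both are equivalent and yield the identical bound via submultiplicativity.
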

\begin{proof}
  Let $\mtx{\Lambda}_U$ and $\mtx{\Lambda}_V$ be the diagonal matrices
  with entries $\|\mtx{P}_U e_a\|^2$ and $\|\mtx{P}_V e_b\|^2$
  respectively,
\begin{equation}
  \label{eq:Lambda}
  \mtx{\Lambda}_U = \textrm{diag}(\|\mtx{P}_U e_a\|^2), \quad
  \mtx{\Lambda}_V = \textrm{diag}(\|\mtx{P}_V e_b\|^2).
\end{equation}
To bound  the spectral norm of $\mtx{Z}$, observe that it follows
from \eqref{eq:PTeabF2} that
\begin{equation}
  \label{eq:LambdaH}
  \mtx{Z} =  \mtx{\Lambda}_U \mtx{X} + \mtx{X} \mtx{\Lambda}_V -
  \mtx{\Lambda}_U \mtx{X} \mtx{\Lambda}_V =  \mtx{\Lambda}_U \mtx{X} (\mtx{I} - \mtx{\Lambda}_V) + \mtx{X} \mtx{\Lambda}_V.
\end{equation}
Hence, since $\|\mtx{\Lambda}_U\|$ and $\|\mtx{\Lambda}_V\|$ are
bounded by $\min(\mu_0 r/n,1)$ and $\|\mtx{I} - \mtx{\Lambda_V}\| \le
1$, we have
\[
\|\mtx{Z}\| \le \|\mtx{\Lambda}_U\| \|\mtx{X}\| \|\mtx{I} -
\mtx{\Lambda}_V\| + \|\mtx{X}\| \|\mtx{\Lambda}_V\| \le (2\mu_0 r/n)
\|\mtx{X}\|.
\]
\end{proof}
Clearly, this lemma and $\|\mtx{E}\| = 1$ give that $\mtx{H}$ defined
in \eqref{eq:S0} obeys $\|\mtx{H}\| \le 2\mu_0 r/np$.  In summary,
\[
\|\mtx{S}_0\| \le C \frac{nr}{m}\left(\mu_0 \mu_1 \sqrt{\frac{\beta nr\log
      n}{m}} + \mu_0\right)
\]
for some $C > 0$ with the same probability as in Lemma
\ref{teo:order0}.

It remains to bound the off-diagonal term. To this end, we use a
useful decoupling lemma:
\begin{lemma}\cite{delaPena2}
\label{teo:decoupling}
  Let $\{\eta_i\}_{1\le i \le n}$ be a sequence of independent random
  variables, and $\{x_{ij}\}_{i \neq j}$ be elements taken from a Banach
  space. Then
\begin{equation}
  \label{eq:decoupling}
  \P(\|\sum_{i \neq j} \eta_i \eta_j x_{ij}\| \ge t) \le C_D \P(\|\sum_{i \neq j} \eta_i \eta'_j x_{ij}\| > t/C_D),
\end{equation}
where $\{\eta'_i\}$ is an independent copy of $\{\eta_i\}$.
\end{lemma}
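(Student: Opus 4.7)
The plan is to follow de la Pe\~na's original argument, which proves the inequality by combining a Bernoulli randomization with a distributional identity. The key device is an auxiliary sequence $\{\delta_i\}_{1 \le i \le n}$ of i.i.d.\ Bernoulli$(1/2)$ variables, independent of both $\{\eta_i\}$ and $\{\eta_i'\}$, used to randomly split the index set into two disjoint groups.

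The first step is to establish the exact identity
$$\sum_{i \neq j} \eta_i\eta_j x_{ij} \;=\; 4\,\E_\delta\!\left[\sum_{i \neq j} \delta_i(1-\delta_j)\,\eta_i\eta_j x_{ij}\right],$$
which is immediate from $\E[\delta_i(1-\delta_j)] = 1/4$ for $i \neq j$. Convexity of the norm together with Jensen's inequality then yields $\|\sum_{i \neq j}\eta_i \eta_j x_{ij}\| \le 4\,\E_\delta \|\sum_{i \neq j}\delta_i(1-\delta_j)\eta_i\eta_j x_{ij}\|$. A Markov argument applied to the $\delta$-average, followed by Fubini, converts this expectation bound into the tail bound
$$\P\!\left(\left\|\sum_{i \neq j}\eta_i\eta_j x_{ij}\right\| > t\right) \;\le\; C_1\,\P\!\left(\left\|\sum_{i \neq j}\delta_i(1-\delta_j)\eta_i\eta_j x_{ij}\right\| > t/C_1\right)$$
for a universal constant $C_1$, where the right-hand probability now involves the joint randomness of $\eta$ and $\delta$.

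The second step carries out the actual decoupling. Condition on a realization of $\delta$ and set $I = \{i : \delta_i = 1\}$, $J = \{j : \delta_j = 0\}$. The partial sum $\sum_{i \in I, j \in J}\eta_i\eta_j x_{ij}$ depends on $\eta$ only through the disjoint families $(\eta_i)_{i \in I}$ and $(\eta_j)_{j \in J}$, which are therefore independent. Substituting the ghost copies $(\eta_j')_{j \in J}$ for $(\eta_j)_{j \in J}$ preserves the conditional joint distribution, and integrating over $\delta$ shows that the tail on the right of the previous display equals $\P(\|\sum_{i \neq j}\delta_i(1-\delta_j)\eta_i\eta_j' x_{ij}\| > t/C_1)$.

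The third and most delicate step is the transfer from this partial decoupled sum back to the full decoupled sum $\sum_{i \neq j}\eta_i\eta_j' x_{ij}$ appearing in the statement. Writing $\delta_i = (1+\varepsilon_i)/2$ with $\varepsilon_i$ i.i.d.\ Rademacher, the partial sum expands algebraically into four pieces: the full decoupled sum plus Rademacher-randomized versions of the form $\sum_{i \neq j}\varepsilon_i \eta_i\eta_j' x_{ij}$, $\sum_{i\neq j}\varepsilon_j \eta_i\eta_j' x_{ij}$, and $\sum_{i\neq j}\varepsilon_i\varepsilon_j\eta_i\eta_j' x_{ij}$. Applied conditionally on $(\eta,\eta')$, the Kahane contraction principle for Banach-valued Rademacher series bounds the tail of each of these pieces by a constant times the tail of the full decoupled sum (in the $\varepsilon_i\varepsilon_j$ case, one conditions first on $\varepsilon_j$ and then on $\varepsilon_i$, applying the principle twice). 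Chaining with the bound from the previous step produces the desired inequality with some absolute $C_D$. The main obstacle is precisely this last step, since the coefficients $\delta_i(1-\delta_j) \in \{0,1\}$ are not contractions of the identity in any algebraic sense; the nonobvious ingredient is to pass to Rademacher variables and apply the contraction principle piece by piece, which is where the Banach-space structure of the range is genuinely used.
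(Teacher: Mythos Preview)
The paper does not prove this lemma; it is quoted from de la Pe\~na's work and used as a black box. So there is no paper proof to compare against, and the relevant question is simply whether your sketch reconstructs de la Pe\~na's argument correctly. The Bernoulli--randomization skeleton is right, and Step~2 (the distributional swap on disjoint index sets) is fine. But Steps~1 and~3 both contain genuine gaps.

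\textbf{Step 1.} From $\|T\|\le 4\,\E_\delta\|T_\delta\|$ you cannot get a tail bound of the form $\P(\|T\|>t)\le C_1\,\P_{\eta,\delta}(\|T_\delta\|>t/C_1)$ by ``a Markov argument.'' Markov's inequality goes the wrong way: knowing $\E_\delta\|T_\delta\|>t/4$ for a given $\eta$ does not lower--bound $\P_\delta(\|T_\delta\|>t/8)$ without additional structure. What de la Pe\~na actually uses here is a Paley--Zygmund / hypercontractivity input: for fixed $\eta$, $T_\delta$ is a Banach--valued polynomial of degree at most two in the Rademacher variables $\varepsilon_i=2\delta_i-1$, and moment comparison for such chaoses gives a universal lower bound $\P_\varepsilon(\|T_\delta\|\ge c\,\E_\varepsilon\|T_\delta\|)\ge c'$. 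This is the missing ingredient, and it is substantially more than Markov.

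\textbf{Step 3.} The Kahane contraction principle bounds $\P(\|\sum_i a_i\varepsilon_i y_i\|>s)$ by $2\,\P(\|\sum_i\varepsilon_i y_i\|>s)$ when $|a_i|\le 1$; it compares one Rademacher series to another. It does \emph{not} compare a Rademacher series $\sum_i\varepsilon_i y_i$ to the un--randomized sum $\sum_i y_i$. Conditionally on $(\eta,\eta')$, your target $T^{\mathrm{dec}}=\sum_i y'_i$ is a fixed vector, so the claimed bound $\P_\varepsilon(\|\sum_i\varepsilon_i y'_i\|>s)\le C\,\P(\|T^{\mathrm{dec}}\|>s/C)$ would force the Rademacher series to vanish whenever $\|T^{\mathrm{dec}}\|$ is small, which is false. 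If the $\eta_i$ were symmetric one could absorb $\varepsilon_i$ into $\eta_i$ and the pieces would be equidistributed with $T^{\mathrm{dec}}$, but the lemma makes no such assumption. De la Pe\~na's route back to the full decoupled sum is different and again leans on the chaos moment comparison rather than on contraction.
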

This lemma asserts that it is sufficient to estimate
$\P(\|\mtx{S}_1'\| \ge t)$ where $\mtx{S}'_1$ is given by
\begin{equation}
  \label{eq:Sp1}
  \mtx{S}'_1 \equiv p^{-2} \sum_{ab \neq
    a'b'} \xiab \xiabpp \, E_{a'b'} \<\PT\eabp,
  \eab\> \eab
\end{equation}
in which $\{\xi'_{ab}\}$ is an independent copy of $\{\xiab\}$. We write
$\mtx{S}'_1$ as
\begin{equation}
\label{eq:Sp1b} \mtx{S}'_1 = p^{-1} \sum_{ab} \xiab\, H_{ab} \eab,
\qquad H_{ab} \equiv p^{-1} \sum_{a'b': (a',b') \neq (a,b)} \xiabpp\,
E_{a'b'} \<\PT\eabp, \eab\>.
\end{equation}
To bound the tail of $\|\mtx{S}'_1\|$, observe that
\[
\P(\|\mtx{S}'_1\| \ge t) \le \P(\|\mtx{S}'_1\| \ge t \,\, | \,\,
\|\mtx{H}\|_{\infty} \le K) + \P(\|\mtx{H}\|_{\infty} > K).
\]
By independence, the first term of the right-hand side is bounded by
Theorem \ref{teo:PO}. On the event $\{\|\mtx{H}\|_\infty \le K\}$,
we have
\[
p^{-1} \|\sum_{ab} \xiab \, H_{ab} \eab\| \le C \, \sqrt{\frac{n^3
    \beta \log n}{m}} \, K.
\]
with probability at least $1 - n^{-\beta}$.  To bound
$\|\mtx{H}\|_\infty$, we use Bernstein's inequality.
\begin{lemma}
  \label{teo:bernstein}
  Let $\mtx{X}$ be a fixed matrix and define $\mathcal{Q}(\mtx{X})$ as
  the matrix whose $(a,b)$th entry is
\[
[\mathcal{Q}(\mtx{X})]_{ab} = p^{-1} \sum_{a'b' : (a',b') \neq
  (a,b)} (\delta_{a'b'} - p) \, X_{a'b'} \<\PT\eabp, \eab\>,
  \]
  where $\{\delta_{ab}\}$ is an independent Bernoulli sequence obeying
  $\P(\delta_{ab} = 1) = p$. Then
      \begin{equation}
  \label{eq:Hinf}
  \P\left(\|\mathcal{Q}(\mtx{X})\|_\infty > \lambda \, \sqrt{\frac{\mu_0 r}{np}} \, \|\mtx{X}\|_\infty\right) \le
  2 n^2\, \exp\left(-\frac{\lambda^2}{2 + \frac{2}{3}  \sqrt{\frac{\mu_0 r}{np}} \lambda}\right).
\end{equation}
With $\lambda = \sqrt{3\beta \log n}$, the right-hand side is bounded
by $2n^{2-\beta}$ provided that $np \ge \frac{4\beta}{3} \mu_0 r\log
n$. In particular, for $\lambda = \sqrt{6\beta \log n}$ with $\beta >
2$, the bound is less than $2n^{-\beta}$ provided that $np \ge
\frac{8\beta}{3} \mu_0 r\log n$.
\end{lemma}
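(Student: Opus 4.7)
The plan is to prove the lemma entrywise, applying a scalar Bernstein inequality to each $(a,b)$-entry of $\mathcal{Q}(\mtx{X})$ and then taking a union bound over the at most $n^2$ entries. For a fixed pair $(a,b)$, write
\[
[\mathcal{Q}(\mtx{X})]_{ab} = \sum_{(a',b') \neq (a,b)} Y_{a'b'}, \qquad
Y_{a'b'} \equiv p^{-1}(\delta_{a'b'} - p)\, X_{a'b'} \<\PT(\eabp), \eab\>.
\]
These summands are independent (different Bernoulli variables) and have mean zero. So the task reduces to controlling their uniform bound $M$ and total variance $\sigma^2$.

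To produce $M$, I would use $|\delta_{a'b'}-p|\le 1$, $|X_{a'b'}|\le \|\mtx{X}\|_\infty$, and the incoherence estimate \eqref{eq:useful}, giving
\[
|Y_{a'b'}| \le p^{-1}\,\|\mtx{X}\|_\infty \cdot \frac{2\mu_0 r}{n} \equiv M.
\]
For the variance, since $\mathrm{Var}(\delta_{a'b'}-p)=p(1-p)\le p$, I would sum and then invoke the Frobenius identity \eqref{eq:useful2}:
\[
\sigma^2 \;=\;\sum_{a'b'\neq ab} \mathrm{Var}(Y_{a'b'})
\;\le\; p^{-1}\|\mtx{X}\|_\infty^2 \sum_{a'b'} |\<\PT(\eabp),\eab\>|^2
\;\le\; \frac{2\mu_0 r}{np}\,\|\mtx{X}\|_\infty^2.
\]
Now Bernstein's inequality yields $\P(|[\mathcal{Q}(\mtx{X})]_{ab}|>t)\le 2\exp\!\big(-\tfrac{t^2/2}{\sigma^2+Mt/3}\big)$. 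Plugging in $t = \lambda \sqrt{\mu_0 r/(np)}\,\|\mtx{X}\|_\infty$, dividing numerator and denominator by $(\mu_0 r/np)\|\mtx{X}\|_\infty^2$, and absorbing the constant factors delivers precisely the exponent $\lambda^2/(2+\tfrac{2}{3}\sqrt{\mu_0 r/(np)}\,\lambda)$ appearing in \eqref{eq:Hinf}. A union bound over the $n^2$ index pairs produces the factor $2n^2$ on the right-hand side.

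The final checks are routine calibrations. For $\lambda = \sqrt{3\beta\log n}$ and $np \ge \tfrac{4\beta}{3}\mu_0 r\log n$, the quantity $\sqrt{\mu_0 r/(np)}\,\lambda$ is bounded by a small constant, so the denominator in the Bernstein exponent is at most $3$, giving an exponent at least $\lambda^2/3=\beta\log n$ and a tail bound $2n^{2-\beta}$; strengthening to $\lambda = \sqrt{6\beta\log n}$ with the hypothesis $np \ge \tfrac{8\beta}{3}\mu_0 r\log n$ yields the advertised $2n^{-\beta}$ bound. I do not anticipate a real obstacle here: the only subtlety is verifying that the constants in the Bernstein exponent come out exactly as stated, which hinges on simultaneously using the pointwise bound \eqref{eq:useful} and the $\ell_2$-aggregate bound \eqref{eq:useful2} on $\<\PT(\eabp),\eab\>$—not a weaker common bound for both roles.
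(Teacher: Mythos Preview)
Your proposal is correct and essentially identical to the paper's own proof: both fix an entry $(a,b)$, apply the scalar Bernstein inequality \eqref{eq:bern2} with the uniform bound coming from \eqref{eq:useful} and the variance bound from \eqref{eq:useful2}, substitute $t=\lambda\sqrt{\mu_0 r/(np)}\,\|\mtx{X}\|_\infty$, and finish with a union bound over the $n^2$ index pairs. The only remark is that ``precisely'' is slightly generous---both your computation and the paper's quietly absorb a harmless factor of~$2$ when passing from the variance and uniform bounds (each carrying a $2\mu_0 r/n$) to the displayed exponent---but this does not affect the argument or the stated calibrations.
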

\begin{proof}
  The inequality \eqref{eq:Hinf} is an application of Bernstein's
  inequality, which states that for a sum of uniformly bounded
  independent zero-mean random variables obeying $|Y_k| \le c$,
\begin{equation}
  \label{eq:bern2}
  \P\left(|\sum_{k = 1}^n Y_k| > t\right) \le 2e^{-t^2/(2\sigma^2 + 2ct/3)},
\end{equation}
where $\sigma^2$ is the sum of the variances, $\sigma^2 \equiv \sum_{k
  = 1}^n \textrm{Var}(Y_k)$. We have
\begin{align*}
  \textrm{Var}([\mathcal{Q}(\mtx{X})]_{ab}) & = \frac{1-p}{p} \sum_{a'b': (a',b')
    \neq (a,b)} |X_{a'b'}|^2 |\<\PT\eabp, \eab\>|^2\\
  & \le \frac{1-p}{p} \, \|\mtx{X}\|_\infty^2 \sum_{a'b' : (a',b') \neq (a,b)}
  |\<\PT\eab, \eabp\>|^2 \le \frac{1-p}{p} \, \|\mtx{X}\|_\infty^2 \, 2\mu_0 r/n
\end{align*}
by \eqref{eq:useful2}.  Also,
\[
p^{-1} \, \left|(\delta_{a'b'} - p) X_{a'b'} \<\PT\eabp, \eab\>
\right| \le p^{-1} \, \|\mtx{X}\|_\infty \, 2\mu_0 r/n
\]
and hence, for each $t > 0$,
\eqref{eq:bern2} gives
\begin{equation}
  \label{eq:bern3}
  \P(|[\mathcal{Q}(\mtx{X})]_{ab}| > t) \le 2 \exp\left(-\frac{t^2}{2\frac{\mu_0 r}{np}
      \|\mtx{X}\|_\infty^2 + \frac{2}{3} \frac{\mu_0 r}{np} \|\mtx{X}\|_\infty t}\right).
\end{equation}
Putting $t = \lambda \sqrt{\mu_0 r/np} \|\mtx{X}\|_\infty$ for some
$\lambda > 0$ and applying the union bound gives \eqref{eq:Hinf}.
\end{proof}

Since $\|\mtx{E}\|_\infty \le \mu_1 \sqrt{r}/n$ it follows that
$\mtx{H}=\mathcal{Q}(\mtx{E})$ introduced in \eqref{eq:Sp1b} obeys
\[
\|\mtx{H}\|_\infty \le C \, \frac{\mu_1 \sqrt{r}}{n} \,
\sqrt{\frac{\mu_0 nr \beta \log n}{m}}
\]
with probability at least $1 - 2n^{-\beta}$ for each $\beta > 2$ and,
therefore,
\[
\|\mtx{S}'_1\| \le  C \, \sqrt{\mu_0} \mu_1 \frac{nr \beta\log n}{m}
\]
with probability at least $1 - 3n^{-\beta}$. In conclusion, we have
\begin{equation}
  \label{eq:important2}
  p^{-1} \|(\PO - p\OpId) \, \mathcal{H}(\mtx{E})\| \le C
  \frac{nr}{m}\left(\sqrt{\mu_0} \mu_1 \left(\sqrt{\frac{\mu_0 nr\beta \log n}{m}} + \beta \log n\right) + \mu_0 \right)
\end{equation}
with probability at least $1 - (1+3C_D)n^{-\beta}$. A simple algebraic
manipulation concludes the proof of Lemma \ref{teo:order1}. Note that
we have not made any assumption about the matrix $\mtx{E}$ and,
therefore, established the following:
\begin{lemma}
  \label{teo:PO2} Let $\mtx{X}$ be a fixed $n\times n$ matrix. There is
  a constant $C'_0$ such that 
\begin{equation}
  \label{eq:PObound2} p^{-2} \|\sum_{(a,b) \neq (a',b')} \xiab \xiabp X_{ab} \<\PT(\eabp), \eab\> \eab\| \le C'_0 \frac{\sqrt{\mu_0 r} \, \beta \log n}{p} \|\mtx{X}\|_\infty
\end{equation}
with probability at least $1 - O(n^{-\beta})$ for all $\beta > 2$
provided that $np \ge 3 \mu_0 r \beta \log n$.
\end{lemma}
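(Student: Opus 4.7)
The plan is to mimic, almost verbatim, the treatment of the off-diagonal sum $\mtx{S}_1'$ in the proof of Lemma~\ref{teo:order1} (Section~\ref{sec:order1}), but with $\mtx{E}$ replaced by a general matrix $\mtx{X}$. Writing the matrix in question as
\[
\mtx{S} = p^{-1} \sum_{ab} X_{ab}\, \xiab\, G_{ab}\, \eab, \qquad G_{ab} = p^{-1} \sum_{(a',b') \neq (a,b)} \xiabp \<\PT(\eabp), \eab\>,
\]
one sees that $\xiab$ and $G_{ab}$ both involve the index $(a,b)$, so a direct conditioning argument is not available. The first step is therefore to invoke the decoupling Lemma~\ref{teo:decoupling}, taking the index set to be $(a,b) \in \{1,\dots,n\}^2$ and the matrix-valued coefficients to be $y_{(ab),(a'b')} = p^{-2}\, X_{ab} \<\PT(\eabp), \eab\>\, \eab$. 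Up to the absolute constant $C_D$, this reduces bounding $\|\mtx{S}\|$ to bounding the decoupled quantity
\[
\widetilde{\mtx{S}} = p^{-1} \sum_{ab} X_{ab}\, H_{ab}\, \xiab\, \eab, \qquad H_{ab} = p^{-1} \sum_{(a',b') \neq (a,b)} \xiabpp \<\PT(\eabp), \eab\>,
\]
where $\{\xiabpp\}$ is an independent copy of $\{\xiab\}$.

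Next I would condition on $\{\xiabpp\}$, so that the matrix with $(a,b)$-entry $X_{ab} H_{ab}$ is deterministic, and apply Theorem~\ref{teo:PO} (the scalar-coefficient bound on $p^{-1}\sum \xiab Z_{ab} \eab$). This yields, with probability at least $1 - n^{-\beta}$,
\[
\|\widetilde{\mtx{S}}\| \le C_0 \sqrt{\frac{\beta n \log n}{p}}\, \max_{ab}|X_{ab} H_{ab}| \le C_0 \sqrt{\frac{\beta n \log n}{p}}\, \|\mtx{X}\|_\infty\, \|\mtx{H}\|_\infty,
\]
provided $np \ge \beta \log n$.

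The remaining step is to control $\|\mtx{H}\|_\infty$, but this is exactly the quantity estimated by Lemma~\ref{teo:bernstein}: $\mtx{H}$ has the same distribution as $\mathcal{Q}(\mtx{J})$ for the all-ones matrix $\mtx{J}$ (with $\|\mtx{J}\|_\infty = 1$), since replacing $\delta_{a'b'} - p$ by the independent copy $\xiabpp$ does not alter the marginal distribution. Applying that lemma with $\lambda = \sqrt{6\beta \log n}$ gives, with probability at least $1 - 2n^{-\beta}$,
\[
\|\mtx{H}\|_\infty \le \sqrt{6\beta \log n}\,\sqrt{\frac{\mu_0 r}{np}},
\]
valid under $np \ge (8\beta/3)\mu_0 r \log n$, which is implied by the hypothesis $np \ge 3\mu_0 r \beta \log n$ for $\beta > 2$. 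Multiplying the two estimates collapses the $\sqrt{n}$ and $\sqrt{1/n}$ factors and produces $C \sqrt{\mu_0 r}\,(\beta \log n)/p\, \|\mtx{X}\|_\infty$, and a final union bound over the decoupling, Theorem~\ref{teo:PO}, and Bernstein events keeps the total failure probability at $O(n^{-\beta})$. I do not anticipate a genuine obstacle: the argument is a direct transcription of the off-diagonal calculation in Section~\ref{sec:order1}, and the only point requiring care is that all three probabilistic events must be simultaneously controlled after conditioning.
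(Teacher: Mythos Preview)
Your proposal is correct and follows essentially the same route as the paper: decouple via Lemma~\ref{teo:decoupling}, condition on the independent copy $\{\xiabpp\}$, apply Theorem~\ref{teo:PO} to the outer sum, and control the inner coefficients $H_{ab}$ by the Bernstein bound of Lemma~\ref{teo:bernstein}. The only cosmetic difference is that, because the stated lemma places $X_{ab}$ on the outer index rather than the inner one (as in the computation for $\mtx{S}'_1$ in Section~\ref{sec:order1}), you feed the all-ones matrix $\mtx{J}$ into Lemma~\ref{teo:bernstein} and pull $\|\mtx{X}\|_\infty$ out in front, whereas the paper's preceding argument applies Lemma~\ref{teo:bernstein} directly to $\mtx{X}$; the resulting bound is identical.
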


\subsection{Proof of Lemma \ref{teo:order2}}
\label{sec:order2}

To prove Lemma \ref{teo:order2}, we need to bound the spectral norm
of $p^{-1} \, (\PO - p\OpId) \, \mathcal{H}^2(\mtx{E})$, a matrix
given by
\[
p^{-3} \sum_{a_1 b_1, a_2 b_2, a_3 b_3} \xi_{a_1 b_1} \xi_{a_2 b_2}
\xi _{a_3 b_3} E_{a_3 b_3} \<\PT \eabthree, \eabtwo\> \<\PT \eabtwo,
\eabone\> \eabone,
\]
where $\xi_{ab} = \delta_{ab} - p$ as before. It is convenient to
introduce notations to compress this expression. Set $\omega =
(a,b)$ (and $\omega_i = (a_i, b_i)$ for $i = 1, 2, 3$),
$\mtx{F}_\omega = \eab$, and $P_{\omega' \omega} = \<\PT \eabp,
\eab\>$ so that
\[
p^{-1} \, (\PO - p\OpId) \, \mathcal{H}^2(\mtx{E}) = p^{-3}
\sum_{\omega_1, \omega_2, \omega_3} \xi_{\omega_1} \xi_{\omega_2}
\xi _{\omega_3} \, E_{\omega_3} P_{\omega_3 \omega_2} P_{\omega_2
  \omega_1} \mtx{F}_{\omega_1}.
\]
Partition the sum depending on whether some of the $\omega_i$'s
are the same or not
\begin{equation}
  \label{eq:equiv}
  \frac{1}{p}(\PO - p\OpId)\mathcal{H}^2(\mtx{E}) = \frac{1}{p^3}
\left[\sum_{\omega_1 = \omega_2 = \omega_3} +
    \sum_{\omega_1 \neq \omega_2 = \omega_3}  +  \sum_{\omega_1 = \omega_3
      \neq \omega_2}  +  \sum_{\omega_1 = \omega_2 \neq \omega_3}
     +  \sum_{\omega_1 \neq \omega_2 \neq \omega_3}\right].
\end{equation}
The meaning should be clear; for instance, the sum $\sum_{\omega_1
  \neq \omega_2 = \omega_3}$ is the sum over the $\omega$'s such that
$\omega_2 = \omega_3$ and $\omega_1 \neq \omega_2$. Similarly,
$\sum_{\omega_1 \neq \omega_2 \neq \omega_3}$ is the sum over the
$\omega$'s such that they are all distinct. The idea is now to use a
decoupling argument to bound each sum in the right-hand side of
\eqref{eq:equiv} (except for the first which does not need to be
decoupled) and show that all terms are appropriately small in the
spectral norm.

We begin with the first term which is equal to
\begin{equation}
  \label{eq:onetwothree}
  \frac{1}{p^{3}} \, \sum_\omega (\xi_{\omega})^3 \, E_{\omega} P_{\omega
    \omega}^2 \mtx{F}_{\omega} = \frac{1-3p+3p^2}{p^3} \, \sum_\omega
  \xi_{\omega} \, E_{\omega} P_{\omega \omega}^2 \mtx{F}_{\omega} +
  \frac{1-3p+2p^2}{p^2} \sum_\omega E_{\omega} P_{\omega \omega}^2
  \mtx{F}_{\omega},
\end{equation}
where we have used the identity
\[
(\xi_\omega)^3 = (1-3p+3p^2)\xi_\omega + p(1-3p+2p^2).
\]
Set $H_{\omega} = E_\omega (p^{-1} P_{\omega \omega})^2$.  For the
first term in the right-hand side of \eqref{eq:onetwothree}, we need
to control $\| \sum_\omega \xi_{\omega} \, H_\omega
\mtx{F}_{\omega}\|$.  This is easily bounded by Theorem
\ref{teo:PO}. Indeed, it follows from
\[
|H_{\omega}| \le \left(\frac{2\mu_0 r}{np}\right)^2 \,
\|\mtx{E}\|_\infty
\]
that for each $\beta > 0$,
\[
p^{-1} \| \sum_\omega \xi_{\omega} \, H_{\omega} \, \mtx{F}_{\omega}\|
\le C \, \left(\frac{\mu_0 nr}{m}\right)^2 \, \mu_1 \sqrt{\frac{ nr
    \beta \log n}{m}} = C \, \mu_0^2 \mu_1 \sqrt{\beta \log n} \,
\left(\frac{nr}{m}\right)^{5/2}
\]
with probably at least $1 - n^{-\beta}$.  For the second term in the
right-hand side of \eqref{eq:onetwothree}, we apply Lemma
\ref{teo:lemma0p} which gives
\[
\|\sum_{\omega} E_\omega P_{\omega \omega}^2 \mtx{F}_\omega\| \le (2\mu_0 r/n)^2
\]
so that $\|\mtx{H}\| \le (2\mu_0r/np)^2$. In conclusion, the first
term in \eqref{eq:equiv} has a spectral norm which is bounded by
\[
C \, \left(\frac{nr}{m}\right)^2 \, \left(\mu_0^2 \mu_1 \left(\frac{nr
      \beta \log n}{m}\right)^{1/2} + \mu_0^2\right)
\]
with probability at least $1 - n^{-\beta}$.

We now turn our attention to the second term which can be written as
\begin{multline*}
 p^{-3}   \sum_{\omega_1 \neq \omega_2}
\xi_{\omega_1} (\xi_{\omega_2})^2 \, E_{\omega_2} P_{\omega_2
\omega_2} P_{\omega_2 \omega_1} \mtx{F}_{\omega_1} =
  \frac{1-2p}{p^3} \sum_{\omega_1 \neq \omega_2} \xi_{\omega_1}
  \xi_{\omega_2} \, E_{\omega_2} P_{\omega_2 \omega_2} P_{\omega_2
    \omega_1} \mtx{F}_{\omega_1} \\
+ \frac{1-p}{p^2} \sum_{\omega_1 \neq
    \omega_2} \xi_{\omega_1}  \, E_{\omega_2}
  P_{\omega_2 \omega_2} P_{\omega_2 \omega_1} \mtx{F}_{\omega_1}.
\end{multline*}
% For the first term of the right-hand side, we use our decoupling lemma
% so that it suffices to bound the tail
% of the spectral norm of
% \[
% \mtx{S}_1 \equiv p^{-1} \sum_{\omega_1} \xi^{(1)}_{\omega_1}
% H_{\omega_1} \mtx{F}_{\omega_1}, \qquad H_{\omega_1} \equiv p^{-2}
% \sum_{\omega_2 : \omega_2 \neq
%   \omega_1} \xi^{(2)}_{\omega_2} \, E_{\omega_2} P_{\omega_2 \omega_2}
% P_{\omega_2 \omega_1},
% \]
% where the sequences $\{\xi_\omega^{(1)}\}$ and $\{\xi_\omega^{(2)}\}$
% are independent copies of $\{\xi_\omega\}$. By Lemma
% \ref{teo:bernstein},
% we have that for each $\beta > 2$,
% \[
% \|\mtx{H}\|_\infty \le C \, \sqrt{\frac{\mu_0 nr \beta \log n}{m}} \,
% \sup_\omega |E_\omega| \, (p^{-1} P_{\omega \omega}) \le C \,
% \left(\frac{\mu_0 nr }{m}\right)^{3/2} \, \sqrt{\beta \log n} \,
% \|\mtx{E}\|_\infty
% \]
% with probability at least $1 - 2n^{-\beta}$. As a consequence, Theorem
% \ref{teo:PO} gives
% \[
% \|\mtx{S}_1\| \le C\, \mu_0^{3/2} \mu_1 \, (\beta \log n) \,
% \left(\frac{nr}{m}\right)^{2}
% \]
% with probability at least $1 - 3n^{-\beta}$.
Put $\mtx{S}_1$ for the first term; bounding $\|\mtx{S}_1\|$ is a
simple application of Lemma \ref{teo:PO2} with $X_\omega = p^{-1}
E_\omega P_{\omega \omega}$, which gives
\[
\|\mtx{S}_1\| \le C\, \mu_0^{3/2} \mu_1 \, (\beta \log n) \,
\left(\frac{nr}{m}\right)^{2}
\]
since $\|\mtx{E}\|_\infty \le \mu_1 \sqrt{r}/n$.  For the second term,
we need to bound the spectral norm of $\mtx{S}_2$ where
\[
\mtx{S}_2 \equiv  p^{-1} \sum_{\omega_1} \xi_{\omega_1}
H_{\omega_1} \mtx{F}_{\omega_1}, \qquad H_{\omega_1} = p^{-1}
\sum_{\omega_2 : \omega_2 \neq \omega_1} E_{\omega_2}
  P_{\omega_2 \omega_2} P_{\omega_2 \omega_1}.
\]
Note that $\mtx{H}$ is deterministic. The lemma below provides an
estimate about $\|\mtx{H}\|_\infty$.
\begin{lemma}
\label{teo:Hp} The matrix $\mtx{H}$ obeys
\begin{equation}
  \label{eq:Hp}
  \|\mtx{H}\|_\infty \le
  \frac{\mu_0 r}{np}\left(3\|\mtx{E}\|_\infty + 2\frac{\mu_0 r}{n}\right).
\end{equation}
\end{lemma}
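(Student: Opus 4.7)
\smallskip
\noindent\textit{Proof plan for Lemma \ref{teo:Hp}.} The first move is to convert the restricted sum into an unrestricted sum so that matrix identities can be used. Write
\[
p\,H_{\omega_1} = \sum_{\omega_2} E_{\omega_2} P_{\omega_2\omega_2} P_{\omega_2\omega_1} \;-\; E_{\omega_1} P_{\omega_1\omega_1}^{\,2} \;=\; [\mathcal{P}_T(\widetilde{\mtx{E}})]_{\omega_1} \;-\; E_{\omega_1} P_{\omega_1\omega_1}^{\,2},
\]
where $\widetilde{E}_{\omega}\equiv E_{\omega} P_{\omega\omega}$; the identification with $\mathcal{P}_T(\widetilde{\mtx{E}})$ follows from the self-adjointness of $\mathcal{P}_T$ and the definition $P_{\omega_2\omega_1}=\<\mathcal{P}_T(\vct{e}_{\omega_2}),\vct{e}_{\omega_1}\>=\<\vct{e}_{\omega_2},\mathcal{P}_T(\vct{e}_{\omega_1})\>$.

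Next, I would use \eqref{eq:PTeabF2} to express $P_{\omega\omega}=\alpha_a+\beta_b-\alpha_a\beta_b$ with $\alpha_a=\|\mtx{P}_U\vct{e}_a\|^2$, $\beta_b=\|\mtx{P}_V\vct{e}_b\|^2$, so that in terms of the diagonal matrices $\mtx{\Lambda}_U,\mtx{\Lambda}_V$ of \eqref{eq:Lambda},
\[
\widetilde{\mtx{E}} \;=\; \mtx{\Lambda}_U\mtx{E} + \mtx{E}\mtx{\Lambda}_V - \mtx{\Lambda}_U\mtx{E}\mtx{\Lambda}_V.
\]
Applying $\mathcal{P}_T(\mtx{X})=\mtx{P}_U\mtx{X}+\mtx{X}\mtx{P}_V-\mtx{P}_U\mtx{X}\mtx{P}_V$, expanding, and systematically collapsing every occurrence of $\mtx{P}_U\mtx{E}$, $\mtx{E}\mtx{P}_V$, or $\mtx{P}_U\mtx{E}\mtx{P}_V$ to $\mtx{E}$ (since $U$ and $V$ are the column and row spaces of $\mtx{E}$), I expect the combination to reduce to a small number of surviving terms: two ``main'' terms $\mtx{E}\mtx{\Lambda}_V$ and $\mtx{\Lambda}_U\mtx{E}$, and a handful of ``cross'' terms of the form $\mtx{P}_U\mtx{\Lambda}_U\mtx{E}\mtx{\Lambda}_V$, $\mtx{\Lambda}_U\mtx{E}\mtx{\Lambda}_V\mtx{P}_V$, and $\mtx{P}_U\mtx{\Lambda}_U\mtx{E}\mtx{\Lambda}_V\mtx{P}_V$.

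The main terms contribute entry-wise at most $(\mu_0 r/n)\|\mtx{E}\|_\infty$ each, directly from the coherence bound $\alpha_a,\beta_b\le\mu_0 r/n$. For the cross terms the key estimate is the Cauchy--Schwarz bound
\[
\sum_{a'}|(\mtx{P}_U)_{aa'}||E_{a'b}| \;\le\; \sqrt{\alpha_a}\,\sqrt{\sum_{a'} E_{a'b}^{\,2}} \;=\; \sqrt{\alpha_a\,\beta_b} \;\le\; \mu_0 r/n,
\]
since $\sum_{a'} E_{a'b}^{\,2}=\|\mtx{P}_V\vct{e}_b\|^2$ (and analogously for the row-index versions). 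Combined with the pointwise bound $\alpha_{a'},\beta_{b'}\le\mu_0 r/n$ pulled out of each cross term, this yields entry-wise contributions of order $(\mu_0 r/n)^2$. Finally the diagonal correction is bounded directly by $|E_{\omega_1}|\,P_{\omega_1\omega_1}^{\,2}\le\|\mtx{E}\|_\infty (2\mu_0 r/n)^2$.

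Summing the two main terms, the cross terms, and the diagonal correction and multiplying by $p^{-1}$ gives the stated form $\frac{\mu_0 r}{np}\bigl(3\|\mtx{E}\|_\infty + 2\mu_0 r/n\bigr)$. I expect the main obstacle to be the bookkeeping for the triple-product cross term $\mtx{P}_U\mtx{\Lambda}_U\mtx{E}\mtx{\Lambda}_V\mtx{P}_V$: a careless triangle-inequality bound here picks up a spurious $\sqrt{r}$ factor, so one must apply Cauchy--Schwarz in the right order (first against $\mtx{\Lambda}_V\mtx{P}_V$, using $\sum_{b'} E_{a'b'}^{\,2}=\alpha_{a'}$, then against $\mtx{P}_U\mtx{\Lambda}_U$) to keep the bound of order $(\mu_0 r/n)^2$ and secure the precise constants in the statement.
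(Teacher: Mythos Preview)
Your approach is essentially the same as the paper's: both split off the diagonal correction, identify the unrestricted sum with $\PT(\mtx{\Lambda}_U\mtx{E}+\mtx{E}\mtx{\Lambda}_V-\mtx{\Lambda}_U\mtx{E}\mtx{\Lambda}_V)$, observe that $\mtx{\Lambda}_U\mtx{E}$ and $\mtx{E}\mtx{\Lambda}_V$ already lie in $T$, and then bound the remaining piece $\PT(\mtx{\Lambda}_U\mtx{E}\mtx{\Lambda}_V)$.

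The one genuine difference is how that last piece is handled. You expand $\PT$ and attack each of the three resulting terms entry-wise with Cauchy--Schwarz, worrying (rightly) about the ordering in the triple-product term. The paper instead uses the one-line spectral estimate
\[
\|\PT(\mtx{\Lambda}_U\mtx{E}\mtx{\Lambda}_V)\|_\infty \le \|\PT(\mtx{\Lambda}_U\mtx{E}\mtx{\Lambda}_V)\| \le 2\|\mtx{\Lambda}_U\mtx{E}\mtx{\Lambda}_V\| \le 2\|\mtx{\Lambda}_U\|\,\|\mtx{E}\|\,\|\mtx{\Lambda}_V\| \le 2(\mu_0 r/n)^2,
\]
using $\|\mtx{E}\|=1$ and $\|\PT(\mtx{X})\|\le 2\|\mtx{X}\|$. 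This bypasses all of the entry-wise bookkeeping and the delicate ordering issue you flag; your route still works, but the spectral bound is the cleaner way to close the argument.
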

\begin{proof}
We begin by rewriting $\mtx{H}$ as
\[
pH_\omega = \sum_{\omega'} E_{\omega'} P_{\omega' \omega'}
P_{\omega'
  \omega} - E_\omega P^2_{\omega \omega}.
\]
Clearly, $|E_\omega P^2_{\omega \omega}| \le (\mu_0 r/n)^2
\|\mtx{E}\|_\infty$ so that it suffices to bound the first term,
which is the $\omega$th entry of the matrix
\[
\sum_{\omega, \omega'} E_{\omega'} P_{\omega' \omega'} P_{\omega'
  \omega} \mtx{F}_\omega = \PT(\mtx{\Lambda}_U \mtx{E} + \mtx{E}
\mtx{\Lambda}_V - \mtx{\Lambda}_U \mtx{E} \mtx{\Lambda}_V).
\]
Now it is immediate to see that $\mtx{\Lambda}_U \mtx{E} \in T$ and
likewise for $\mtx{E} \mtx{\Lambda}_V$. Hence,
\begin{align*}
  \| \PT(\mtx{\Lambda}_U \mtx{E} + \mtx{E} \mtx{\Lambda}_V -
  \mtx{\Lambda}_U \mtx{E} \mtx{\Lambda}_V)\|_\infty & \le \|\mtx{\Lambda}_U
  \mtx{E}\|_\infty + \|\mtx{E} \mtx{\Lambda}_V\|_\infty + \|\PT(\mtx{\Lambda}_U
  \mtx{E} \mtx{\Lambda}_V)\|_\infty \\ & \le 2 \|\mtx{E}\|_\infty
  \mu_0r/n + \|\PT(\mtx{\Lambda}_U \mtx{E} \mtx{\Lambda}_V)\|_\infty.
\end{align*}
We finally use the crude estimate
\[
\|\PT(\mtx{\Lambda}_U \mtx{E} \mtx{\Lambda}_V)\|_\infty \le
\|\PT(\mtx{\Lambda}_U \mtx{E} \mtx{\Lambda}_V)\| \le 2
\|\mtx{\Lambda}_U \mtx{E} \mtx{\Lambda}_V\| \le 2(\mu_0 r/n)^2
\]
to complete the proof of the lemma.
\end{proof}

As a consequence of this lemma, Theorem \ref{teo:PO} gives
\[
\|\mtx{S}_2\| \le C \, \sqrt{\beta \log n} \,
\left(\frac{nr}{m}\right)^{3/2}(\mu_0 \mu_1 + \mu_0^2 \sqrt{r})
\]
with probability at least $1 - n^{-\beta}$.  In conclusion, the second
term in \eqref{eq:equiv} has spectral norm bounded by
\[
C \, \sqrt{\beta \log n} \left(\frac{nr}{m}\right)^{3/2}\left(\mu_0
  \mu_1 \sqrt{\frac{\mu_0 nr \beta \log n}{m}} + \mu_0 \mu_1 + \mu_0^2
  \sqrt{r}\right)
\]
with probability at least $1 - O(n^{-\beta})$.

We now examine the third term which can be written as
\begin{multline*}
p^{-3}   \sum_{\omega_1 \neq \omega_2} (\xi_{\omega_1})^2
\xi_{\omega_2} \, E_{\omega_1} P_{\omega_1 \omega_2} P_{\omega_2
\omega_1} \mtx{F}_{\omega_1} = \frac{1-2p}{p^3} \sum_{\omega_1 \neq
\omega_2} \xi_{\omega_1}
\xi_{\omega_2} \, E_{\omega_1} P^2_{\omega_2 \omega_1} \mtx{F}_{\omega_1} \\
+ \frac{1-p}{p^2} \sum_{\omega_1 \neq \omega_2} \xi_{\omega_2} \,
E_{\omega_1} P^2_{\omega_2 \omega_1} \mtx{F}_{\omega_1}.
\end{multline*}
We use the decoupling argument once more so that for the first term of
the right-hand side, it suffices to estimate the tail of the norm of
\[
\mtx{S}_1 \equiv p^{-1} \sum_{\omega_1} \xi^{(1)}_{\omega_1}
E_{\omega_1} H_{\omega_1} \mtx{F}_{\omega_1}, \qquad H_{\omega_1}
\equiv p^{-2} \sum_{\omega_2
  : \omega_2 \neq \omega_1} \xi^{(2)}_{\omega_2} \, P^2_{\omega_2
  \omega_1},
\]
where $\{\xi^{(1)}_\omega\}$ and $\{\xi^{(2)}_\omega\}$ are
independent copies of $\{\xi_\omega\}$. It follows from Bernstein's
inequality and the estimates
\[
|P_{\omega_2 \omega_1}| \le 2\mu_0 r/n
\]
and
\[
\sum_{\omega_2 : \omega_2 \neq \omega_1} |P_{\omega_2 \omega_1}|^4 \le
\max_{\omega_2 : \omega_2 \neq \omega_1} |P_{\omega_2 \omega_1}|^2 \,
\sum_{\omega_2 : \omega_2 \neq \omega_1} |P_{\omega_2 \omega_1}|^2 \le
\left(\frac{2\mu_0 r}{n}\right)^2 \,\frac{2\mu_0 r}{n}
\]
that for each $\lambda > 0$,\footnote{We would like to remark that
one can often get better estimates; when $\omega_1 \neq \omega_2$,
the bound $|P_{\omega_2 \omega_1}| \le 2\mu_0 r/n$ may be rather
crude. Indeed, one can derive better estimates for the random
orthogonal model, for example.}
\[
\P\left(|H_{\omega_1}| > \lambda \left(\frac{2\mu_0
      r}{np}\right)^{3/2}\right) \le 2 \exp\left(-\frac{\lambda^2}{2
      + \frac{2}{3} \lambda \left(\frac{2\mu_0
      r}{np}\right)^{1/2}}\right).
\]
It is now not hard to see that this inequality implies that
\[
P\left(\|\mtx{H}\|_\infty > \sqrt{8\beta \log n} \,
  \left(\frac{2 \mu_0 nr}{m}\right)^{3/2}  \right) \le 2 \,
n^{-2\beta + 2}
\]
provided that $m \ge \frac{16}{9} \mu_0 nr \, \beta\log n$.  As a
consequence, for each $\beta > 2$, Theorem \ref{teo:PO} gives
\[
\|\mtx{S}_1\| \le C \, \mu_0^{3/2} \mu_1 \, \beta \log n \left(\frac{nr}{m}\right)^{2}
\]
with probability at least $1-3n^{-\beta}$.  The other term is equal to
$(1-p)$ times $\sum_{\omega_1} E_{\omega_1} H_{\omega_1}
\mtx{F}_{\omega_1}$, and
\begin{align*}
  \|\sum_{\omega_1} E_{\omega_1} H_{\omega_1} \mtx{F}_{\omega_1}\| &
  \le \|\sum_{\omega_1} E_{\omega_1} H_{\omega_1}
  \mtx{F}_{\omega_1}\|_F \\ & \le \|\mtx{H}\|_\infty \|\mtx{E}\|_F \le C \,
  {\sqrt{\beta \log n}} \left(\frac{\mu_0 nr}{m}\right)^{3/2} \, \sqrt{r}.
\end{align*}
In conclusion, the third term in \eqref{eq:equiv} has spectral norm
bounded by
\[
C \, \mu_0\, \sqrt{\beta \log n} \left(\frac{nr}{m}\right)^{3/2}
\left(\mu_1 \sqrt{\frac{\mu_0 n r \beta \log n}{m}} +
  \sqrt{\mu_0 r}\right)
\]
with probability at least $1-O(n^{-\beta})$.

We proceed to the fourth term which can be written as
\begin{multline*}
  p^{-3}   \sum_{\omega_1 \neq \omega_3}
(\xi_{\omega_1})^2 \xi_{\omega_3} \, E_{\omega_3} P_{\omega_3
\omega_1} P_{\omega_1 \omega_1} \mtx{F}_{\omega_1} =
\frac{1-2p}{p^3} \sum_{\omega_1 \neq \omega_3} \xi_{\omega_1}
\xi_{\omega_3} \, E_{\omega_3} P_{\omega_3 \omega_1} P_{\omega_1 \omega_1} \mtx{F}_{\omega_1} \\
+ \frac{1-p}{p^2} \sum_{\omega_1 \neq \omega_3} \xi_{\omega_3} \,
E_{\omega_3} P_{\omega_3 \omega_1} P_{\omega_1 \omega_1}
\mtx{F}_{\omega_1}.
\end{multline*}
% We use the decoupling argument for the first term and bound the tail
% of
% % 05/06: added ``the norm of''
% the norm of
% \[
% \mtx{S}_1 \equiv p^{-1} \sum_{\omega_1} \xi^{(1)}_{\omega_1}
% H_{\omega_1} (p^{-1} P_{\omega_1 \omega_1}) \, \mtx{F}_{\omega_1},
% \qquad H_{\omega_1} \equiv p^{-1} \sum_{\omega_3
%   : \omega_3 \neq \omega_1} \xi^{(3)}_{\omega_3} \, E_{\omega_3}  P_{\omega_3
%   \omega_1},
% \]
% where $\{\xi^{(1)}_\omega\}$ and $\{\xi^{(3)}_\omega\}$ are
% independent.  We know by Lemma \ref{teo:bernstein} that for each
% $\beta > 2$
% \[
% \|\mtx{H}\|_\infty \le C \, \sqrt{\frac{\mu_0 nr \beta \log n}{m}} \,
% \|\mtx{E}\|_\infty
% \]
% with probability at least $1-2n^{-\beta}$ and since $p^{-1}
% P_{\omega_1 \omega_1} \le 2\mu_0 r/np$, Theorem \ref{teo:PO} gives
% \[
% \|\mtx{S}_1\| \le C \, \mu_0^{3/2} \mu_1\, (\beta \log n) \,
% \left(\frac{nr}{m}\right)^{2}
% \]
% with probability at least $1-3n^{-\beta}$.
Let $\mtx{S}_1$ be the first term and set $H_{\omega_1} = p^{-2}
\sum_{\omega_1 \neq \omega_3} \xi_{\omega_1} \xi_{\omega_3} \,
E_{\omega_3} P_{\omega_3 \omega_1} \mtx{F}_{\omega_1}$. Then Lemma
\ref{teo:lemma0p} gives
\[
\|\mtx{S}_1\| \le \frac{2\mu_0 r}{np} \|\mtx{H}\| \le C \, \mu_0^{3/2}
\mu_1\, (\beta \log n) \, \left(\frac{nr}{m}\right)^{2}
\]
where the last inequality is given by Lemma \ref{teo:PO2}.  For the
other term---call it $\mtx{S}_2$---set $H_{\omega_1} = p^{-1}
\sum_{\omega_3 : \omega_3 \neq \omega_1} \xi_{\omega_3} \,
E_{\omega_3} P_{\omega_3 \omega_1}$. Then Lemma \ref{teo:lemma0p}
gives
\[
\|\mtx{S}_2\| \le \frac{2\mu_0 r}{np} \|\mtx{H}\|.
\]
Notice that $H_{\omega_1} = p^{-1} \sum_{\omega_3} \xi_{\omega_3} \,
E_{\omega_3} P_{\omega_3 \omega_1} - p^{-1} \xi_{\omega_1}
E_{\omega_1} P_{\omega_1 \omega_1}$ so that with $G_{\omega_1} =
E_{\omega_1} P_{\omega_1 \omega_1}$
\[
\mtx{H} = p^{-1} [\PT(\PO - p\OpId)(\mtx{E}) - (\PO -
p\OpId)(\mtx{G})].
\]
Now for any matrix $\mtx{X}$, $\|\PT(\mtx{X})\| = \|\mtx{X} -
\PTc(\mtx{X})\| \le 2 \|\mtx{X}\|$ and, therefore,
\[
\|\mtx{H}\| \le 2p^{-1} \|(\PO - p\OpId)(\mtx{E})\| + p^{-1} \|(\PO -
p\OpId)(\mtx{G})\|.
\]
As a consequence and since $\|\mtx{G}\|_\infty \le
\|\mtx{E}\|_\infty$, Theorem \ref{teo:PO} gives for each $\beta > 2$,
\[
\|\mtx{H}\| \le C \mu_1 \sqrt{\frac{nr \beta \log n}{m}}
\]
with probability at least $1-n^{-\beta}$. In conclusion, the fourth
term in \eqref{eq:equiv} has spectral norm bounded by
\[
C \, \mu_0 \mu_1 \sqrt{\beta \log n} \,
\left(\frac{nr}{m}\right)^{3/2} \left(\sqrt{\frac{\mu_0 nr \beta \log
      n}{m}} + 1\right)
\]
with probability at least $1-O(n^{-\beta})$.

We finally examine the last term
\[
p^{-3} \sum_{\omega_1 \neq \omega_2 \neq \omega_3} \xi_{\omega_1}
\xi_{\omega_2} \xi _{\omega_3} \, E_{\omega_3} P_{\omega_3
\omega_2} P_{\omega_2 \omega_1} \mtx{F}_{\omega_1}.
\]
Now just as one has a decoupling inequality for pairs of variables, we
have a decoupling inequality for triples as well and we thus simply
need to bound the tail of
\[
\mtx{S}_1 \equiv p^{-3} \sum_{\omega_1 \neq \omega_2 \neq \omega_3}
\xi^{(1)}_{\omega_1} \xi^{(2)}_{\omega_2} \xi^{(3)}_{\omega_3} \,
E_{\omega_3} P_{\omega_3 \omega_2} P_{\omega_2 \omega_1}
\mtx{F}_{\omega_1}
\]
in which the sequences $\{\xi_\omega^{(1)}\}$, $\{\xi_\omega^{(2)}\}$
and $\{\xi_\omega^{(3)}\}$ are independent copies of $\{\xi_\omega\}$.
We refer to \cite{delaPena2} for details. We now argue as in Section
\ref{sec:order1} and write $\mtx{S}_1$ as
\[
\mtx{S}_1 = p^{-1} \sum_{\omega_1} \xi^{(1)}_{\omega_1} H_{\omega_1}
\mtx{F}_{\omega_1},
\]
where
\begin{equation}
\label{eq:forlater} H_{\omega_1} \equiv p^{-1} \sum_{\omega_2 :
\omega_2 \neq \omega_1} \xi^{(2)}_{\omega_2} \, G_{\omega_2} \,
P_{\omega_2 \omega_1}, \qquad G_{\omega_2} \equiv p^{-1}
\sum_{\omega_3 : \omega_3 \neq \omega_1,
  \omega_3 \neq \omega_2} \xi^{(3)}_{\omega_3} \,
E_{\omega_3} \, P_{\omega_3 \omega_2}.
\end{equation}
By Lemma \ref{teo:bernstein}, we have for each $\beta > 2$
\[
\|\mtx{G}\|_\infty \le C \, \sqrt{\frac{\mu_0 nr \beta \log n}{m}} \,
\|\mtx{E}\|_\infty
\]
with large probability and the same argument then gives
\[
\|\mtx{H}\|_\infty \le C\, \sqrt{\frac{\mu_0 nr \beta \log n}{m}} \,
\|\mtx{G}\|_\infty \le C \, \frac{\mu_0 nr \beta \log n}{m} \, \|\mtx{E}\|_\infty
\]
with probability at least $1 - 4n^{-\beta}$. As a consequence, Theorem
\ref{teo:PO} gives
\[
\| \mtx{S} \| \le C \, \mu_0 \mu_1 \, \left(\frac{nr \beta \log
    n}{m}\right)^{3/2}
\]
with probability at least $1 - O(n^{-\beta})$.

To summarize the calculations of this section and using the fact that
$\mu_0 \ge 1$ and $\mu_1 \le \mu_0 \sqrt{r}$, we have established that
if $m \ge \mu_0 \, nr (\beta \log n)$,
\begin{multline*}
  p^{-1} \|(\PO -
  p\OpId) \, \mathcal{H}^2(\mtx{E})\| \le C \left(\frac{nr}{m}\right)^2\left(\mu_0^2 \mu_1 \sqrt{\frac{nr\beta \log n}{m}} + \mu_0^2\right)\\
  + C \sqrt{\beta \log n}
  \left(\frac{nr}{m}\right)^{3/2} \mu_0^2 \sqrt{r}
  + C \left(\frac{nr\beta \log n}{m}\right)^{3/2} \mu_0 \mu_1
\end{multline*}
with probability at least $1 - O(n^{-\beta})$.  One can check that if
$m = \lambda \, \mu_0^{4/3} n r^{4/3} \beta \log n$ for a fixed $\beta
\ge 2$ and $\lambda \ge 1$, then there is a constant $C$ such that
\[
\|p^{-1} \, (\PO - p\OpId) \, \mathcal{H}^2(\mtx{E})\| \le C
\lambda^{-3/2}
\]
with probability at least $1 - O(n^{-\beta})$. This is the content of
Lemma \ref{teo:order2}.

\subsection{Proof of Lemma \ref{teo:order3}}
\label{sec:improved}
% BHR (5/27): all a^{(ij)}_\omega changed to \<\vct{e}_i, \mtx{F}_\omega \vct{v}_j\>

%\newcommand{\aij}{{a_\omega^{(ij)}}}
\newcommand{\aij}{{\<\vct{e}_i, \mtx{F}_\omega \vct{v}_j\>}}
\newcommand{\aijone}{{\<\vct{e}_i, \mtx{F}_{\omega_1} \vct{v}_j\>}}

Clearly, one could continue on the same path and estimate the
spectral norm of $p^{-1} (\PO - p\OpId) \, \mathcal{H}^3(\mtx{E})$
by the same technique as in the previous sections. That is to say,
we would write
\[
p^{-1} (\PO - p\OpId) \, \mathcal{H}^3(\mtx{E}) = p^{-4}
\sum_{\omega_1,
  \omega_2, \omega_3, \omega_4} \, \left[ \prod_{i = 1}^4 \xi_{\omega_i} \right] \,
E_{\omega_4} \, \left[\prod_{i = 1}^3 P_{\omega_{i+1}
\omega_i}\right] \, \mtx{F}_{\omega_1}
\]
with the same notations as before, and partition the sum depending
on whether some of the $\omega_i$'s are the same or not. Then we
would use the decoupling argument to bound each term in the sum.
Although this is a clear possibility, one would need to consider 18
cases and the calculations would become a little laborious. In this
section, we propose to bound the term $p^{-1} (\PO - p\OpId) \,
\mathcal{H}^3(\mtx{E})$ with a different argument which has two main
advantages: first, it is much shorter and second, it uses much of
what we have already established. The downside is that it is not as
sharp.

The starting point is to note that
\[
p^{-1} (\PO - p\OpId) \, \mathcal{H}^3(\mtx{E}) = p^{-1} (\Xi \circ
{\cal
  H}^3(\mtx{E})),
\]
where $\Xi$ is the matrix with i.i.d.~entries equal to $\xi_{ab} =
\delta_{ab} - p$ and $\circ$ denotes the Hadamard product
(componentwise multiplication).  To bound the spectral norm of this
Hadamard product,  we apply an inequality due to Ando, Horn, and
Johnson~\cite{Ando87}.  An elementary proof can be found in \S 5.6
of ~\cite{HJ2}.
\begin{lemma}\cite{HJ2}
  \label{teo:HJ} Let $\mtx{A}$ and $\mtx{B}$ be two $n_1 \times n_2$
  matrices. Then
\begin{equation}
\label{eq:HJ} \|\mtx{A} \circ \mtx{B}\| \le \|\mtx{A}\| \,
\nu(\mtx{B}),
\end{equation}
where $\nu$ is the function
\[
\nu(\mtx{B}) = \inf \{c(\mtx{X}) c(\mtx{Y}) : \mtx{X}\mtx{Y}^* =
\mtx{B}\},
\]
and $c(\mtx{X})$ is the maximum Euclidean norm of the rows
\[
c(\mtx{X})^2 = \max_{1 \le i \le n} \sum_{j} X_{ij}^2.
\]
\end{lemma}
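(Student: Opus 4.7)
The plan is to establish the stronger, factorization‑specific bound
$\|\mtx{A}\circ\mtx{B}\|\le\|\mtx{A}\|\,c(\mtx{X})\,c(\mtx{Y})$
for \emph{every} factorization $\mtx{B}=\mtx{X}\mtx{Y}^*$, and then take the infimum over such factorizations to recover the stated inequality with $\nu(\mtx{B})$. So the entire argument reduces to proving the bound for a single fixed factorization.

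To do that, I will use the variational characterization
$\|\mtx{A}\circ\mtx{B}\| = \sup_{\|\vct{u}\|=\|\vct{v}\|=1}|\vct{u}^*(\mtx{A}\circ\mtx{B})\vct{v}|$
and expand the bilinear form using $B_{ij}=\sum_k X_{ik}\overline{Y_{jk}}$. Letting $\vct{x}^{(k)}$ and $\vct{y}^{(k)}$ denote the $k$th columns of $\mtx{X}$ and $\mtx{Y}$ and $\odot$ the componentwise product of vectors, the key identity is
\[
\vct{u}^*(\mtx{A}\circ\mtx{B})\vct{v} \;=\; \sum_{i,j}\overline{u_i}\,A_{ij}\,v_j \sum_k X_{ik}\overline{Y_{jk}} \;=\; \sum_k \bigl\langle \vct{u}\odot\overline{\vct{x}^{(k)}},\; \mtx{A}\,(\vct{v}\odot\overline{\vct{y}^{(k)}})\bigr\rangle.
\]
Now I bound each summand by $\|\mtx{A}\|$ times the product of the two Euclidean norms and apply Cauchy--Schwarz across the index $k$:
\[
|\vct{u}^*(\mtx{A}\circ\mtx{B})\vct{v}| \;\le\; \|\mtx{A}\|\!\left(\sum_k\|\vct{u}\odot\vct{x}^{(k)}\|^2\right)^{\!1/2}\!\left(\sum_k\|\vct{v}\odot\vct{y}^{(k)}\|^2\right)^{\!1/2}.
\]

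The last step is to exchange the order of summation in each factor and read off the definition of $c$. Specifically,
\[
\sum_k\|\vct{u}\odot\vct{x}^{(k)}\|^2 \;=\; \sum_i |u_i|^2 \sum_k |X_{ik}|^2 \;\le\; c(\mtx{X})^2\sum_i|u_i|^2 \;=\; c(\mtx{X})^2,
\]
and analogously for the $\mtx{Y}$ factor. Combining the three displays gives $|\vct{u}^*(\mtx{A}\circ\mtx{B})\vct{v}|\le\|\mtx{A}\|\,c(\mtx{X})\,c(\mtx{Y})$ for all unit $\vct{u},\vct{v}$, hence $\|\mtx{A}\circ\mtx{B}\|\le\|\mtx{A}\|\,c(\mtx{X})\,c(\mtx{Y})$. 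Taking the infimum over factorizations yields~\eqref{eq:HJ}.

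There is no serious obstacle here; the whole proof is two applications of Cauchy--Schwarz plus a rearrangement of sums. The only place one must be careful is in the bookkeeping of complex conjugates when expanding the bilinear form, and in checking that the right Hadamard products end up paired so that the outer Cauchy--Schwarz produces exactly the row‑norm quantity $c(\cdot)^2$ rather than some other function of $\mtx{X}$ and $\mtx{Y}$.
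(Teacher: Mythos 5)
Your proof is correct. Note, however, that the paper itself gives \emph{no} proof of this lemma: it simply cites it as a known result (originally due to Ando, Horn, and Johnson) and points to Section 5.6 of~\cite{HJ2} for an elementary argument, so there is no ``paper's proof'' to compare against. What you have written is in fact the standard textbook argument: for a fixed factorization $\mtx{B}=\mtx{X}\mtx{Y}^*$, expand the sesquilinear form $\vct{u}^*(\mtx{A}\circ\mtx{B})\vct{v}$ over the factorization index $k$, recognize each summand as $\bigl\langle\vct{u}\odot\overline{\vct{x}^{(k)}},\,\mtx{A}(\vct{v}\odot\overline{\vct{y}^{(k)}})\bigr\rangle$, bound it by $\|\mtx{A}\|$ times the two Euclidean norms, apply Cauchy--Schwarz in $k$, and then exchange the order of summation so the inner $k$-sums become row norms of $\mtx{X}$ and $\mtx{Y}$. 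Your bookkeeping of the conjugates is right, the final exchange $\sum_k\|\vct{u}\odot\vct{x}^{(k)}\|^2=\sum_i|u_i|^2\sum_k|X_{ik}|^2\le c(\mtx{X})^2$ is exactly the step that produces the row-norm quantity, and passing to the infimum over factorizations is legitimate because the bound holds for every one of them. One minor remark on dimensions: since $\mtx{A},\mtx{B}$ are $n_1\times n_2$, a factorization has $\mtx{X}$ of size $n_1\times d$ and $\mtx{Y}$ of size $n_2\times d$, so $c(\mtx{X})$ maximizes over the $n_1$ rows and $c(\mtx{Y})$ over the $n_2$ rows; the lemma's statement writes ``$1\le i\le n$'' generically, and your argument handles the rectangular case correctly.
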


To apply \eqref{eq:HJ}, we first notice that one can estimate the
norm of $\Xi$ via Theorem \ref{teo:PO}. Indeed, let $\mtx{Z} = {\bf
1} {\bf 1}^*$ be the matrix with all entries equal to one. Then
$p^{-1} \Xi = p^{-1} (\PO - p\OpId)(\mtx{Z})$ and thus
\begin{equation}
  \label{eq:Xi}
  p^{-1} \|\Xi\| \le C\, \left(\frac{n^3\beta \log n}{m}\right)^{1/2}
\end{equation}
with probability at least $1-n^{-\beta}$. One could obtain a similar
result by appealing to the recent literature on random matrix theory
and on concentration of measure. Potentially this could allow to
derive an upper bound without the logarithmic term but we will not
consider these refinements here. (It is interesting to note in
passing, however, that the two page proof of Theorem \ref{teo:PO}
gives a large deviation result about the largest singular value of a
matrix with i.i.d.~entries which is sharp up to a multiplicative
factor proportional to at most $\sqrt{\log n}$.)

Second, we bound the second factor in \eqref{eq:HJ} via the following
estimate:
\begin{lemma}
  \label{teo:HE}
  There are numerical constants $C$ and $c$ so that for each
  $\beta > 2$, $\mathcal{H}^3(\mtx{E})$ obeys
\begin{equation}
  \label{eq:HE} \nu(\mathcal{H}^3(\mtx{E})) \le C \mu_0 r/n
\end{equation}
with probability at least $1-O(n^{-\beta})$ provided that $m \ge c
\, \mu_0^{4/3} \, nr^{5/3} (\beta \log n)$.
\end{lemma}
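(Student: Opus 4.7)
The plan is to exploit the fact that $\mathcal{H}^3(\mtx{E})\in T$ to produce an explicit low-rank factorization, so that the $\nu$ estimate reduces to a bound on row and column $\ell_2$ norms. Any $\mtx{Z}\in T$ satisfies $\mtx{Z}=\mtx{U}\mtx{U}^{*}\mtx{Z}+(\mtx{I}-\mtx{P}_U)\mtx{Z}\mtx{V}\mtx{V}^{*}$, so I would write
\[
\mathcal{H}^3(\mtx{E})=[\mtx{U}\,|\,\mtx{B}]\,[\mtx{A}^{*}\,|\,\mtx{V}]^{*},\qquad \mtx{A}=\mtx{U}^{*}\mathcal{H}^3(\mtx{E}),\ \mtx{B}=(\mtx{I}-\mtx{P}_U)\mathcal{H}^3(\mtx{E})\mtx{V},
\]
and combine this factorization with the coherence bounds $\|\mtx{P}_U\vct{e}_i\|^2,\|\mtx{P}_V\vct{e}_j\|^2\le\mu_0r/n$ for the leading blocks. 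The definition of $\nu$ then yields
\[
\nu(\mathcal{H}^3(\mtx{E}))\le\sqrt{\mu_0r/n+\max_i\|\vct{e}_i^{*}\mtx{B}\|^2}\;\sqrt{\mu_0r/n+\max_j\|\mtx{A}\vct{e}_j\|^2},
\]
so it is enough to show that the extra row norms of $\mtx{B}$ and column norms of $\mtx{A}$ are $O(\mu_0r/n)$.

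Because $\mtx{U},\mtx{V}$ are isometries, $\|\vct{e}_i^{*}\mtx{B}\|\le\|\vct{e}_i^{*}\mathcal{H}^3(\mtx{E})\|+\|\mtx{P}_U\vct{e}_i\|\,\|\mathcal{H}^3(\mtx{E})\|$ and $\|\mtx{A}\vct{e}_j\|\le\|\mathcal{H}^3(\mtx{E})\vct{e}_j\|$. The operator-norm piece is benign: Theorem~\ref{teo:rudelson} gives $\|\mathcal{H}\|_{T\to T}\le C\sqrt{\mu_0nr\log n/m}$, so under the sample-size hypothesis $\|\mathcal{H}^3(\mtx{E})\|\le C$. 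The real task is therefore to prove, with probability $1-O(n^{-\beta})$,
\[
\max_i\|\vct{e}_i^{*}\mathcal{H}^3(\mtx{E})\|^2\le C\mu_0r/n,\qquad \max_j\|\mathcal{H}^3(\mtx{E})\vct{e}_j\|^2\le C\mu_0r/n.
\]

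To prove these I would iterate the identity $\mathcal{H}(\mtx{Y})=-p^{-1}[\mtx{P}_U\tilde{\mtx{Y}}+(\mtx{I}-\mtx{P}_U)\tilde{\mtx{Y}}\mtx{P}_V]$ for $\mtx{Y}\in T$, where $\tilde{Y}_{ab}=(\delta_{ab}-p)Y_{ab}$, applied with $\mtx{Y}=\mathcal{H}^2(\mtx{E})$. Each coordinate of a row or column of $\mathcal{H}^3(\mtx{E})$ then becomes a sum of independent zero-mean random variables amenable to Bernstein's inequality (cf.\ Lemma~\ref{teo:bernstein}). Two devices are essential: (i) the left and right projections inside $\mathcal{H}$ let us replace a union bound over $n$ coordinates by a union bound over the $r$-dimensional ranges of $\mtx{P}_U$ and $\mtx{P}_V$, eliminating a spurious $\sqrt{n}$; and (ii) the Bernoulli sequences from the three iterates of $\mathcal{H}$ are separated using decoupling (Lemma~\ref{teo:decoupling} and its triple analogue used in Section~\ref{sec:order2}) so that the innermost variables can be replaced by independent copies conditional on the outer ones. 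The iterated sup-norm bound $\|\mathcal{H}^k(\mtx{E})\|_\infty\le C(\mu_0nr\beta\log n/m)^{k/2}\mu_1\sqrt{r}/n$ (obtained by induction from Lemma~\ref{teo:bernstein}) supplies the ``input data'' for the Bernstein estimate at each level.

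The main obstacle is this last step: three nested Bernstein inequalities must be balanced while their random variables are disentangled by decoupling. Optimizing this balance is what forces the asymmetric rank exponent $r^{5/3}$ in the hypothesis $m\ge c\mu_0^{4/3}nr^{5/3}\beta\log n$: a naive scalar Bernstein at each stage would give $r^2$ or worse, while vector Bernstein on $r$-dimensional projections achieves the stated exponent. Once the row and column norm estimates are in hand, plugging them into the factorization inequality above immediately yields $\nu(\mathcal{H}^3(\mtx{E}))\le C\mu_0r/n$.
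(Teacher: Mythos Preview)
Your high-level architecture matches the paper exactly: write $\mathcal{H}^3(\mtx{E})\in T$ in the block factorization
$[\mtx{U}\,|\,\mtx{P}_{U^\perp}\mathcal{H}^3(\mtx{E})\mtx{V}]\,[\mtx{U}^*\mathcal{H}^3(\mtx{E})\,|\,\mtx{V}]^{*}$
and reduce $\nu(\mathcal{H}^3(\mtx{E}))\le C\mu_0 r/n$ to the row-norm estimate
$\max_i\sum_{j\le r}|\langle\vct{e}_i,\mathcal{H}^3(\mtx{E})\vct{v}_j\rangle|^2\le C\mu_0 r/n$
(and its transpose).  This is precisely the paper's Lemma~\ref{teo:toshow}.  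Your detour through the full row norm $\|\vct{e}_i^*\mathcal{H}^3(\mtx{E})\|$ is harmless because $\mathcal{H}^3(\mtx{E})\in T$ forces that norm to equal the projected norm plus the $\|\mtx{P}_U\vct{e}_i\|\,\|\mathcal{H}^3(\mtx{E})\|$ correction you already isolated.

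The gap is in the core estimate.  Your plan is to decouple the three layers of $\mathcal{H}$ and run ``three nested Bernstein inequalities'', with the claim that a vector Bernstein on the $r$-dimensional ranges is what produces the exponent $r^{5/3}$.  But decoupling (Lemma~\ref{teo:decoupling} and its triple analogue) applies only to the \emph{off-diagonal} part of the triple sum, i.e.\ to $\sum_{\omega_1\neq\omega_2\neq\omega_3}$.  Before you can decouple at all you must split
\[
\alpha_{ij}=\langle\vct{e}_i,\mathcal{H}^3(\mtx{E})\vct{v}_j\rangle
= p^{-3}\Bigl[\sum_{\omega_1=\omega_2=\omega_3}+\sum_{\omega_1\neq\omega_2=\omega_3}
+\sum_{\omega_1=\omega_3\neq\omega_2}+\sum_{\omega_1=\omega_2\neq\omega_3}
+\sum_{\omega_1\neq\omega_2\neq\omega_3}\Bigr],
\]
exactly as in \eqref{eq:equiv}, because $\xi_\omega^2$ and $\xi_\omega^3$ are not mean zero and your iterated Bernstein cannot be applied to them.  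Four of the five pieces are not of the ``nested Bernstein'' form you describe: they contain deterministic residues coming from $\xi_\omega^2=(1-2p)\xi_\omega+p(1-p)$, and these residues have to be estimated by Cauchy--Schwarz against the coherence bounds.  In the paper it is the fourth case, $\omega_1=\omega_2\neq\omega_3$, whose deterministic piece
$p^{-2}\sum_{\omega_1}H_{\omega_1}P_{\omega_1\omega_1}\langle\vct{e}_i,\mtx{F}_{\omega_1}\vct{v}_j\rangle$
forces the requirement $m\ge c\,\mu_0^{4/3} n r^{5/3}\beta\log n$.  The fully decoupled term (the only one your sketch actually addresses) needs only $m\gtrsim\mu_0 n r^{4/3}\beta\log n$.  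So the $r^{5/3}$ does not come from a vector Bernstein; it comes from a diagonal term that your plan leaves out entirely.  To complete the proof you must carry out the five-case partition and handle the four non-distinct cases separately, as the paper does in its proof of Lemma~\ref{teo:toshow}.
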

The two inequalities \eqref{eq:Xi} and \eqref{eq:HE} give
\[
p^{-1} \|\Xi \circ \mathcal{H}^3(\mtx{E})\| \le C \,
\sqrt{\frac{\mu_0^2 \, nr^2 \, \beta \log
    n}{m}},
\]
with large probability. Hence, when $m$ is substantially larger than a
constant times $\mu_0^2 n r^{2} (\beta \log n)$, we have that the spectral
norm of $p^{-1}(\PO - p\OpId) \, \mathcal{H}^3(\mtx{E})$ is much less
than $1$. This is the content of Lemma \ref{teo:order3}.

The remainder of this section proves Lemma \ref{teo:HE}. Set $\mtx{S}
\equiv \mathcal{H}^3(\mtx{E})$ for short. Because $\mtx{S}$ is in $T$,
$\mtx{S} = \PT(\mtx{S}) = \mtx{P}_U \mtx{S} + \mtx{S} \mtx{P}_V -
\mtx{P}_U \mtx{S} \mtx{P}_V$. Writing $\mtx{P}_U = \sum_{j = 1}^r
\vct{u}_j \vct{u}_j^*$ and similarly for $\mtx{P}_V$ gives
\[
\mtx{S} = \sum_{j = 1}^r \vct{u}_j (\vct{u}_j^* \mtx{S}) + \sum_{j =
1}^r ((I - \mtx{P}_{U}) \mtx{S} \vct{v}_j) \vct{v}_j^*.
\]
For each $1 \le j \le r$, let $\vct{\alpha}_j \equiv \mtx{S} \vct{v}_j$
and $\vct{\beta}_j^* \equiv \vct{u}_j^* \mtx{S}$. Then the decomposition
\[
\mtx{S} = \sum_{j = 1}^r \vct{u}_j \vct{\beta}_j^* + \sum_{j = 1}^r
(\mtx{P}_{U^\perp} \vct{\alpha}_j) \vct{v}_j^*,
\]
where $\mtx{P}_{U^\perp} = I - \mtx{P}_U$, provides a factorization of the form
\[
\mtx{S} = \mtx{X}\mtx{Y}^*, \quad \begin{cases}
  \mtx{X}  = [\vct{u}_1, \ldots, \vct{u}_r,  \mtx{P}_{U^\perp} \vct{\alpha}_1, \ldots, \mtx{P}_{U^\perp} \vct{\alpha}_r],\\
  \mtx{Y}  = [\vct{v}_1, \ldots, \vct{v}_r, \vct{\beta}_1, \ldots, \vct{\beta}_r].
\end{cases}
\]
It follows from our assumption that
\[
c^2([\vct{u}_1, \ldots, \vct{u}_r]) = \max_{1 \le i \le n} \sum_{1
\le
  j \le r} u_{ij}^2 = \max_{1 \le i \le n} \|\mtx{P}_U \vct{e}_i\|^2 \le \mu_0 r/n,
\]
and similarly for $[\vct{v}_1, \ldots, \vct{v}_r]$. Hence, to prove
Lemma \ref{teo:HE}, it suffices to prove that the maximum row norm
obeys $c([\vct{\beta}_1, \ldots, \vct{\beta}_r]) \le C \sqrt{\mu_0 r/n}$
for some constant $C > 0$, and similarly for the matrix
$[\mtx{P}_{U^\perp} \vct{\alpha}_1, \ldots, \mtx{P}_{U^\perp}
\vct{\alpha}_r]$.
\begin{lemma}
\label{teo:toshow}
There is a numerical constant $C$ such that for each $\beta > 2$,
\begin{equation}
  \label{eq:toshow}
  c([\vct{\alpha}_1, \ldots, \vct{\alpha}_r]) \le  C \sqrt{\mu_0 r/n}
\end{equation}
with probability at least $1-O(n^{-\beta})$ provided that $m$ obeys
the condition of Lemma \ref{teo:HE}.
\end{lemma}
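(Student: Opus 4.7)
We need to show $\max_i\|\vct{e}_i^*\mathcal{H}^3(\mtx{E})\mtx{V}\|^2\le C^2\mu_0 r/n$ with the stated probability. The organizing observation is the identity
\begin{equation*}
 \vct{e}_i^*\mathcal{H}(\mtx{X})\mtx{V}\;=\;-p^{-1}\sum_{b=1}^n \xi_{(i,b)}\,X_{ib}\,\vct{e}_b^*\mtx{V},\qquad \mtx{X}\in T,
\end{equation*}
where $\xi_{\omega}=\delta_\omega-p$. I would derive this by unpacking $\mathcal{H}=\PT-p^{-1}\PT\PO\PT$, using $\PT(\mtx{X})=\mtx{X}$, and noting that formula \eqref{eq:PTeab} collapses to $\vct{e}_i^*\PT(\eab)\mtx{V}=1_{\{a=i\}}\vct{e}_b^*\mtx{V}$ thanks to $\mtx{P}_V\mtx{V}=\mtx{V}$. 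Applied with $\mtx{X}=\mathcal{H}^2(\mtx{E})\in T$, the identity yields
\begin{equation*}
  \vct{e}_i^*\mathcal{H}^3(\mtx{E})\mtx{V}\;=\;-p^{-1}\sum_b\xi_{(i,b)}\,[\mathcal{H}^2(\mtx{E})]_{ib}\,\vct{e}_b^*\mtx{V},
\end{equation*}
exhibiting the $i$-th row of $[\vct{\alpha}_1,\ldots,\vct{\alpha}_r]$ as a Rademacher-type sum in the single-row Bernoullis $\xi_{(i,b)}$ with $\R^r$-valued weights.

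The plan is then to control this sum by a conditional Bernstein bound. The conditional expected squared norm is $p(1-p)\sum_b [\mathcal{H}^2(\mtx{E})]_{ib}^2\,\|\mtx{P}_V\vct{e}_b\|^2 \le (p\mu_0 r/n)\,\|\vct{e}_i^*\mathcal{H}^2(\mtx{E})\|^2$ by the coherence bound, so the lemma reduces to the intermediate row-norm estimate $\max_i\|\vct{e}_i^*\mathcal{H}^2(\mtx{E})\|^2\le C\mu_0 r/n$, plus a union bound over $i$ that costs one extra $\log n$ factor (absorbed by the $\beta\log n$ in the hypothesis). This intermediate bound I would prove by a short induction in the exponent: the decomposition $\vct{e}_i^*\mtx{Y}=(\mtx{P}_U\vct{e}_i)^*\mtx{Y}(I-\mtx{P}_V)+\vct{e}_i^*\mtx{Y}\mtx{P}_V$, valid for $\mtx{Y}\in T$, bounds the first piece by $\sqrt{\mu_0 r/n}\cdot\|\mtx{Y}\|$ (with $\|\mathcal{H}^k(\mtx{E})\|\le\|\mathcal{H}\|^k\sqrt r$ controlled by Theorem \ref{teo:rudelson}) and relates the second piece to the ``row-$V$'' norm $\|\vct{e}_i^*\mtx{Y}\mtx{V}\|$, for which the same identity applies one level down, anchoring the induction at $k=0$ where $\|\vct{e}_i^*\mtx{E}\mtx{V}\|^2=\|\mtx{P}_U\vct{e}_i\|^2\le\mu_0 r/n$.

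The main obstacle is the dependence between $\xi_{(i,b)}$ and $[\mathcal{H}^2(\mtx{E})]_{ib}$: expanding $\mathcal{H}^2(\mtx{E})$ as in Section \ref{sec:order2} shows that the latter is a polynomial in the very Bernoullis we want to randomize against. I would handle this exactly as in Sections \ref{sec:order1}--\ref{sec:order2}, at each induction step: isolate the ``coincidence'' contributions where some index equals $(i,b)$ by using $\xi_{(i,b)}^2=(1-2p)\xi_{(i,b)}+p(1-p)$ to peel off deterministic and lower-order random pieces (controlled by Lemma \ref{teo:lemma0p} and the auxiliary bounds \eqref{eq:useful}--\eqref{eq:Cauchy}), then apply the decoupling inequality (Lemma \ref{teo:decoupling}) to the surviving off-diagonal sums so that the outer Bernoullis become independent of their weights. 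Bernstein's inequality (Lemma \ref{teo:bernstein}) closes the tail bound, and the hypothesis $m\ge c\mu_0^{4/3}nr^{5/3}\beta\log n$ is exactly what is needed for all of these estimates to hold simultaneously after the $i$-union bound.
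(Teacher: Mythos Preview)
Your organizing identity $\vct{e}_i^*\mathcal{H}(\mtx{X})\mtx{V}=-p^{-1}\sum_b\xi_{(i,b)}X_{ib}\,\vct{e}_b^*\mtx{V}$ for $\mtx{X}\in T$ is correct and is in fact the same mechanism the paper uses, stated more cleanly. In the paper's proof the reduction to row $i$ happens through $\langle\vct{e}_i,\mtx{F}_\omega\vct{v}_j\rangle=1_{\{a=i\}}v_{bj}$, which is exactly your observation $\PT(\eab)\mtx{V}=\eab\mtx{V}$; the Bernstein step you describe is the paper's Lemma \ref{teo:sij}. Where the paper differs is that it does not attempt an induction in $k$: it expands the full triple sum at once into the five coincidence patterns of \eqref{eq:equiv1} and bounds each via decoupling plus Lemma \ref{teo:sij}. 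Your inductive packaging is tidier to state, but it does not avoid that work: the moment you unfold $[\mathcal{H}^2(\mtx{E})]_{ib}$ to break the dependence on $\xi_{(i,b)}$, you land on the same five-term partition, and each piece needs the same decoupling and the same deterministic estimates (Lemmas \ref{teo:lemma0p}, \ref{teo:Hp}, and the bounds \eqref{eq:useful}--\eqref{eq:Cauchy}) that the paper invokes. So the two routes are essentially the same computation, organized differently.

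Two small points to tighten. First, your conditional variance is missing the $p^{-2}$ from the identity: the correct bound is $p^{-1}(1-p)\sum_b[\mathcal{H}^2(\mtx{E})]_{ib}^2\|\mtx{P}_V\vct{e}_b\|^2\le(\mu_0 nr/m)\,\|\vct{e}_i^*\mathcal{H}^2(\mtx{E})\|^2$, which is still fine under the hypothesis. Second, Bernstein also needs a uniform bound on the individual summands, hence on $p^{-1}\|\mathcal{H}^2(\mtx{E})\|_\infty$; you mention only the row-norm input. That $\ell_\infty$ control is available (iterate Lemma \ref{teo:bernstein} after decoupling, plus the diagonal pieces), but it should be listed alongside the row-norm estimate as an ingredient.
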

A similar estimate for $[\vct{\beta}_1, \ldots, \vct{\beta}_r]$ is
obtained in the same way by exchanging the roles of $\vct{u}$ and
$\vct{v}$. Moreover, a minor modification of the argument gives
\begin{equation}
  \label{eq:toshow1}
  c([\mtx{P}_{U^\perp}\vct{\alpha}_1, \ldots, \mtx{P}_{U^\perp} \vct{\alpha}_r]) \le  C \sqrt{\mu_0 r/n}
\end{equation}
as well, and we will omit the details. In short, the estimate
\eqref{eq:toshow} implies Lemma \ref{teo:HE}.

\begin{proof}[of Lemma~\ref{teo:toshow}]
To prove \eqref{eq:toshow}, we use the notations of the previous
section and write
\begin{align*}
  \vct{\alpha}_j & = p^{-3} \sum_{a_1 b_1, a_2 b_2, a_3 b_3} \xi_{a_1
    b_1} \xi_{a_2 b_2} \xi _{a_3 b_3} E_{a_3 b_3} \<\PT \eabthree,
  \eabtwo\> \<\PT \eabtwo, \eabone\> \PT(\eabone) \vct{v}_j\\
  & = p^{-3} \sum_{\omega_1, \omega_2, \omega_3} \xi_{\omega_1}
  \xi_{\omega_2} \xi _{\omega_3} \, E_{\omega_3} P_{\omega_3    \omega_2} P_{\omega_2 \omega_1} \PT(\mtx{F_{\omega_1}}) \vct{v}_j\\
& =  p^{-3} \sum_{\omega_1, \omega_2, \omega_3} \xi_{\omega_1}
  \xi_{\omega_2} \xi _{\omega_3} \, E_{\omega_3} P_{\omega_3
    \omega_2} P_{\omega_2 \omega_1} (\mtx{F_{\omega_1}} \vct{v}_j)
\end{align*}
since for any matrix $\mtx{X}$, $\PT(\mtx{X}) \vct{v}_j = \mtx{X}
\vct{v}_j$ for each $1 \le j \le r$.  We then follow the same steps as
in Section \ref{sec:order2} and partition the sum depending on whether
some of the $\omega_i$'s are the same or not
\begin{equation}
  \label{eq:equiv1}
  \vct{\alpha}_{j} = p^{-3} \left[\sum_{\omega_1 = \omega_2 = \omega_3} ~ + ~
    \sum_{\omega_1 \neq \omega_2 = \omega_3} ~ + ~ \sum_{\omega_1 = \omega_3
      \neq \omega_2} ~ + ~ \sum_{\omega_1 = \omega_2 \neq \omega_3}
    ~ + ~ \sum_{\omega_1 \neq \omega_2 \neq \omega_3}\right].
\end{equation}
The idea is this: to establish \eqref{eq:toshow}, it is sufficient to
show that if $\vct{\gamma}_{j}$ is any of the five terms above, it
obeys
\begin{equation}
\label{eq:fifth}
\sqrt{\sum_{1 \le j \le r} |\gamma_{ij}|^2} \le C \sqrt{\mu_0 r/n}
\end{equation}
($\gamma_{ij}$ is the $i$th component of $\vct{\gamma}_{j}$ as usual)
with large probability.
% In particular, it would be sufficient to show
% that
% \[
% \sup_{ij} |\gamma_{ij}| \le \frac{1}{5}\, \sqrt{\mu_0/n}
%\]
% with large probability, where the supremum is taken over $1 \le i \le
% n$ and $1 \le j \le r$.
The strategy for getting such estimates is to use decoupling whenever
applicable.

Just as Theorem \ref{teo:PO} proved useful to bound the norm of
$p^{-1} (\PO-p\OpId) \mathcal{H}^2(\mtx{E})$ in Section
\ref{sec:order2}, the lemma below will help bounding the magnitudes
% 05/06: changed
of the components of $\vct{\alpha}_{j}$.
% BHR (5/27): all a^{(ij)}_\omega changed to \<\vct{e}_i, \mtx{F}_\omega \vct{v}_j\>

\begin{lemma}
  \label{teo:sij}
  Define $\mtx{S} \equiv p^{-1} \sum_{ij} \sum_\omega \xi_\omega
  H_\omega \<\vct{e}_i, \mtx{F}_\omega \vct{v}_j\> \vct{e}_i \vct{e}_j^*$. Then for each $\lambda > 0$
\begin{equation}
\label{eq:sij}
\P(\|\mtx{S}\|_\infty \ge \sqrt{\mu_0/n}) \le 2n^2
\exp\left(-\frac{1}{\frac{2n}{\mu_0 p} \|H\|_\infty^2 + \frac{2}{3p}
    \sqrt{r}  \|H\|_\infty}\right).
\end{equation}
\end{lemma}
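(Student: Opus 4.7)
The plan is to expand the $(i,j)$-entry of $\mtx{S}$ explicitly, observe that it is a sum of independent bounded zero-mean random variables in a single summation index, apply Bernstein's inequality, and finish with a union bound over the $n^2$ entries.

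First, since $\mtx{F}_\omega = \vct{e}_a\vct{e}_b^\ast$ with $\omega = (a,b)$, we have $\langle \vct{e}_i, \mtx{F}_\omega \vct{v}_j\rangle = \delta_{ia}\, v_{bj}$, which collapses the sum over $a$:
\[
S_{ij} \;=\; p^{-1} \sum_{b} \xi_{ib}\, H_{ib}\, v_{bj}.
\]
For each fixed $(i,j)$ this is a sum of $n$ independent mean-zero random variables, which is exactly the setting for Bernstein's inequality \eqref{eq:bern2}.

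Next I would estimate the two ingredients of Bernstein. For the variance, $\operatorname{Var}(\xi_{ib}) = p(1-p) \le p$, so
\[
\sigma^2 \;=\; \sum_b \operatorname{Var}\!\bigl(p^{-1}\xi_{ib}H_{ib}v_{bj}\bigr) \;\le\; p^{-1}\|\mtx{H}\|_\infty^2 \sum_b v_{bj}^2 \;=\; p^{-1}\|\mtx{H}\|_\infty^2,
\]
using $\|\vct{v}_j\|=1$. For the per-summand bound, the coherence hypothesis $\mu(V)\le \mu_0$ gives
\[
|v_{bj}|^2 \;\le\; \sum_{k\le r} v_{bk}^2 \;=\; \|\mtx{P}_V\vct{e}_b\|^2 \;\le\; \mu_0 r/n,
\]
so each summand obeys $|p^{-1}\xi_{ib}H_{ib}v_{bj}| \le p^{-1}\|\mtx{H}\|_\infty \sqrt{\mu_0 r/n}$.

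Feeding $\sigma^2$ and this uniform bound into Bernstein at deviation $t = \sqrt{\mu_0/n}$ yields
\[
\P\bigl(|S_{ij}| \ge \sqrt{\mu_0/n}\bigr) \;\le\; 2\exp\!\left(-\frac{\mu_0/n}{2p^{-1}\|\mtx{H}\|_\infty^2 + \tfrac{2}{3}p^{-1}\|\mtx{H}\|_\infty\sqrt{\mu_0 r/n}\cdot\sqrt{\mu_0/n}}\right),
\]
and multiplying the numerator and denominator of the exponent by $n/\mu_0$ produces exactly the expression in \eqref{eq:sij}. A union bound over the $n^2$ pairs $(i,j)$ then gives the stated prefactor $2n^2$. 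The only non-routine aspect is arranging the Bernstein exponent into the precise form claimed; there is no real obstacle beyond that bookkeeping, since the key structural point — that fixing the row index $i$ leaves only $n$ independent Bernoulli terms (rather than $n^2$) — is built into the definition of $\mtx{F}_\omega\vct{v}_j$.
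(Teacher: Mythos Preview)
Your proof is correct and follows essentially the same route as the paper: identify $\langle \vct{e}_i,\mtx{F}_\omega\vct{v}_j\rangle = 1_{\{a=i\}}v_{bj}$ to collapse the $\omega$-sum to a single index, bound the variance by $p^{-1}\|\mtx{H}\|_\infty^2$ via $\sum_b v_{bj}^2=1$, bound each summand by $p^{-1}\|\mtx{H}\|_\infty\sqrt{\mu_0 r/n}$ via the coherence assumption, apply Bernstein, and union-bound. Your explicit verification of the exponent's algebraic form is a bit more detailed than the paper's own write-up, but the argument is the same.
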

\begin{proof} The proof is an application of Bernstein's inequality
\eqref{eq:bern2}.  Note that $\<\vct{e}_i, \mtx{F}_\omega
\vct{v}_j\> = 1_{\{a = i\}} v_{bj}$ and hence
\[
\textrm{Var}(S_{ij}) \le p^{-1} \|\mtx{H}\|_\infty^2 \sum_{\omega}
|\<\vct{e}_i, \mtx{F}_\omega \vct{v}_j\>|^2 =  p^{-1}
\|\mtx{H}\|_\infty^2
\]
since $\sum_\omega |\<\vct{e}_i, \mtx{F}_\omega \vct{v}_j\>|^2 = 1$,
and $|p^{-1} H_\omega \<\vct{e}_i, \mtx{F}_\omega \vct{v}_j\>| \le
p^{-1} \, \|\mtx{H}\|_\infty \, \sqrt{\mu_0 r/n}$ since
$|\<\vct{e}_i, \mtx{F}_\omega \vct{v}_j\>| \le |v_{bj}|$ and
\[
|v_{bj}| \le \|\mtx{P}_V \vct{e}_b\| \le \sqrt{\mu_0 r/n}.
\]
\end{proof}

Each term in \eqref{eq:equiv1} is given by the corresponding term in
\eqref{eq:equiv} after formally substituting $\mtx{F}_\omega$ with
$\mtx{F}_\omega \vct{v}_j$. We begin with the first term whose $i$th
component is equal to
\begin{equation}
\label{eq:term1}
\gamma_{ij} \equiv p^{-3}(1-3p+3p^2) \sum_\omega \xi_{\omega} \, E_{\omega}
P_{\omega \omega}^2 \aij + p^{-2}(1-3p+2p^2) \sum_\omega E_{\omega}
P_{\omega \omega}^2 \aij.
\end{equation}
Ignoring the constant factor $(1-3p+3p^2)$ which is bounded by 1, we
write the first of these two terms as
\[
(\mtx{S}_0)_{ij} \equiv p^{-1} \sum_\omega \xi_\omega \, H_\omega \aij, \qquad
H_\omega = E_\omega \, (p^{-1} P_{\omega \omega})^2.
\]
Since $\|\mtx{H}\|_\infty \le (\mu_0 nr/m)^2 \, \mu_1 \sqrt{r}/n$, it
follows from Lemma \eqref{teo:sij} that
\begin{equation*}
  \P\left(\|\mtx{S}_0\|_\infty \ge
    \sqrt{\mu_0/n}\right) \le 2n^2 \,
  \e^{-1/D}, \quad D \le C\left(\mu_0^3 \mu_1^2 \left(\frac{nr}{m}\right)^5 + \mu_0^{2} \mu_1 \left(\frac{nr}{m}\right)^3\right)
\end{equation*}
for some numerical $C > 0$.  Since $\mu_1 \le \mu_0 \sqrt{r}$, we have
that when $m \ge \lambda \mu_0 \, nr^{6/5} (\beta \log n)$ for some
numerical constant $\lambda > 0$, $\|\mtx{S}_0\|_\infty \ge
\sqrt{\mu_0/n}$ with probability at most $2n^2 e^{-(\beta \log n)^3}$;
this probability is inversely proportional to a superpolynomial in
$n$.  For the second term, the matrix with entries $E_{\omega}
P_{\omega \omega}^2$ is given by
\[
\mtx{\Lambda}^2_U \mtx{E} + \mtx{E} \mtx{\Lambda}^2_V +
2\mtx{\Lambda}_U \mtx{E} \mtx{\Lambda}_V  + \mtx{\Lambda}^2_U \mtx{E} \mtx{\Lambda}^2_V
- 2\mtx{\Lambda}^2_U \mtx{E} \mtx{\Lambda}_V - 2\mtx{\Lambda}_U
\mtx{E} \mtx{\Lambda}^2_V
\]
and thus
\[
\sum_\omega E_{\omega} P_{\omega \omega}^2 \aij = \<\vct{e}_i,
(\mtx{\Lambda}^2_U \mtx{E} + \mtx{E} \mtx{\Lambda}^2_V +
2\mtx{\Lambda}_U \mtx{E} \mtx{\Lambda}_V  + \mtx{\Lambda}^2_U \mtx{E} \mtx{\Lambda}^2_V
- 2\mtx{\Lambda}^2_U \mtx{E} \mtx{\Lambda}_V - 2\mtx{\Lambda}_U
\mtx{E} \mtx{\Lambda}^2_V) \vct{v}_j\>.
\]
This is a sum of six terms and we will show how to bound the first
three; the last three are dealt in exactly the same way and obey
better estimates. For the first, we have
\[
\<\vct{e}_i, \mtx{\Lambda}^2_U \mtx{E} \vct{v}_j \> = \<
\mtx{\Lambda}^2_U \vct{e}_i, \mtx{E} \vct{v}_j \> = \|\mtx{P}_U
\vct{e}_i\|^4 \<\vct{e}_i, \vct{u}_j\>
\]
Hence
\[
p^{-2} \sqrt{\sum_{1 \le j \le r} |\<\vct{e}_i, \mtx{\Lambda}^2_U
  \mtx{E} \vct{v}_j \>|^2} = p^{-2} \|\mtx{P}_U \vct{e}_i\|^4
\sqrt{\sum_{1 \le j \le r} | \<\vct{e}_i, \vct{u}_j\>|^2} = p^{-2}
\|\mtx{P}_U \vct{e}_i\|^5 \le \left(\frac{\mu_0 r}{np}\right)^2
\sqrt{\frac{\mu_0 r}{n}}.
\]
In other words, when $m \ge \mu_0 nr$, the right hand-side is bounded
by $\sqrt{{\mu_0 r}/{n}}$ as desired. For the second term, we have
\[
\<\vct{e}_i, \mtx{E} \mtx{\Lambda}^2_V \vct{v}_j \> = \sum_{b}
\|\mtx{P}_V \vct{e}_b\|^4 v_{bj} \<\vct{e}_i, \mtx{E} \vct{e}_b\> =
\sum_{b} \|\mtx{P}_V \vct{e}_b\|^4 v_{bj} E_{ib}.
\]
Hence it follows from the Cauchy-Schwarz inequality and
\eqref{eq:Cauchy} that
\[
p^{-2} |\<\vct{e}_i, \mtx{E} \mtx{\Lambda}^2_V \vct{v}_j \>| \le
\left(\frac{\mu_0 r}{np}\right)^2 \sqrt{\frac{\mu_0 r}{n}}.
\]
In other words, when $m \ge \mu_0 nr^{5/4}$,
\begin{equation}
 \label{eq:lambda6}
 p^{-2} \sqrt{\sum_{1 \le j \le r} |\<\vct{e}_i, \mtx{E} \mtx{\Lambda}^2_V \vct{v}_j \>|^2} \le \sqrt{\frac{\mu_0 r}{n}}
\end{equation}
as desired. For the third term, we have
\[
\<\vct{e}_i, \mtx{\Lambda}_U \mtx{E} \mtx{\Lambda}_V \vct{v}_j \> =
\|\mtx{P}_U \vct{e}_i\|^2 \sum_{b} \|\mtx{P}_V \vct{e}_b\|^2 v_{bj}
E_{ib}.
\]
The Cauchy-Schwarz inequality gives
\[
2p^{-2}|\<\vct{e}_i, \mtx{\Lambda}_U \mtx{E} \mtx{\Lambda}_V \vct{v}_j
\>| \le 2\left(\frac{\mu_0 r}{np}\right)^2 \sqrt{\frac{\mu_0 r}{n}}
\]
just as before. In other words, when $m \ge \mu_0 nr^{5/4}$,
$2p^{-2}\sqrt{\sum_{ 1\le j \le r} |\<\vct{e}_i, \mtx{\Lambda}_U
  \mtx{E} \mtx{\Lambda}_V \vct{v}_j \>|^2}$ is bounded by
$2\sqrt{\mu_0 r/n}$. The other terms obey \eqref{eq:lambda6} as well
when $m \ge \mu_0 nr^{5/4}$.  In conclusion, the first term
\eqref{eq:term1} in \eqref{eq:equiv1} obeys \eqref{eq:fifth} with
probability at least $1 - O(n^{-\beta})$ provided that $m \ge \mu_0
n r^{5/4} (\beta \log n)$.

We now turn our attention to the second term which can be written as
\begin{multline*}
\gamma_{ij} \equiv p^{-3} ({1-2p}) \sum_{\omega_1 \neq \omega_2}
\xi_{\omega_1} \xi_{\omega_2} \, E_{\omega_2} P_{\omega_2 \omega_2}
P_{\omega_2
  \omega_1} \aijone \\
+ p^{-2}({1-p}) \sum_{\omega_1 \neq \omega_2} \xi_{\omega_1} \,
E_{\omega_2} P_{\omega_2 \omega_2} P_{\omega_2 \omega_1} \aijone.
\end{multline*}
We decouple the first term so that it suffices to bound
\[
(\mtx{S}_0)_{ij} \equiv p^{-1} \sum_{\omega_1} \xi^{(1)}_{\omega_1} H_{\omega_1}
\aijone, \qquad H_{\omega_1} \equiv p^{-2} \sum_{\omega_2 : \omega_2
  \neq \omega_1} \xi^{(2)}_{\omega_2} \, E_{\omega_2} P_{\omega_2
  \omega_2} P_{\omega_2 \omega_1},
\]
where the sequences $\{\xi_\omega^{(1)}\}$ and $\{\xi_\omega^{(2)}\}$
are independent. The method from Section \ref{sec:order1} shows that
\[
\|\mtx{H}\|_\infty \le C \, \sqrt\frac{\mu_0 nr \beta \log n}{m} \,
\sup_\omega |E_\omega (p^{-1} P_{\omega \omega})| \le C\, \sqrt{\beta
  \log n} \, \left(\frac{\mu_0 nr}{m}\right)^{3/2} \,
\|\mtx{E}\|_\infty
\]
with probability at least $1-2n^{-\beta}$ for each $\beta >
2$. Therefore, Lemma \ref{teo:sij} gives
\begin{equation}
\label{eq:threetimesa}
\P\left(\|\mtx{S}_0\|_\infty \ge \sqrt{\mu_0/n}\right) \le 2n^2
e^{-1/D},
\end{equation}
where $D$ obeys
\begin{equation}
\label{eq:threetimesb}
  D \le C \left(\mu_0^2 \mu_1^2 (\beta \log n)
    \left(\frac{nr}{m}\right)^4 + \mu_0^{3/2} \mu_1 \sqrt{\beta \log
      n}\left(\frac{nr}{m}\right)^{5/2}\right).
\end{equation}
for some positive constant $C$.  Hence, when $m \ge \lambda \mu_0 \,
nr^{5/4} (\beta \log n)$ for some sufficiently large numerical
constant $\lambda > 0$, we have that $\|\mtx{S}_0\|_\infty \ge
\sqrt{\mu_0/n}$ with probability at most $2n^2 e^{-(\beta
  \log n)^2}$. This is inversely proportional to a superpolynomial in
$n$.  We write the second term as
\[
(\mtx{S}_1)_{ij} \equiv p^{-1} \sum_{\omega_1 \neq \omega_2} \xi_{\omega_1}
H_{\omega_1} \aijone, \qquad H_{\omega_1} = p^{-1} \sum_{\omega_2 :
  \omega_2 \neq \omega_1} E_{\omega_2} P_{\omega_2 \omega_2}
P_{\omega_2 \omega_1}.
\]
We know from Section \ref{sec:order2} that $\mtx{H}$ obeys
$\|\mtx{H}\|_\infty \le C \, \mu_0^2\, r^2/m$ since $\mu_1 \le \mu_0
\sqrt{r}$ so that Lemma \ref{teo:sij} gives
\[
\P\left(\|\mtx{S}_1\|_\infty \ge \sqrt{\mu_0/n}\right) \le
2n^2 e^{-1/D}, \quad D \le C \left(\mu_0^3 \frac{n^3 r^4}{m^3} +
  \mu_0^2 \frac{n^2 r^{5/2}}{m^2}\right)
\]
for some $C > 0$.  Hence, when $m \ge \lambda \mu_0 \, nr^{4/3} (\beta
\log n)$ for some numerical constant $\lambda > 0$, we have that
$\|\mtx{S}_1\|_\infty \ge  \sqrt{\mu_0/n}$ with
probability at most $2n^2 e^{-(\beta \log n)^2}$. This is inversely
proportional to a superpolynomial in $n$.  In conclusion and taking
into account the decoupling constants in \eqref{eq:decoupling}, the
second term in \eqref{eq:equiv1} obeys \eqref{eq:fifth} with
probability at least $1-O(n^{-\beta})$ provided that $m$ is
sufficiently large as above.

We now examine the third term which can be written as
\[
p^{-3}({1-2p}) \sum_{\omega_1 \neq \omega_2} \xi_{\omega_1}
\xi_{\omega_2} \, E_{\omega_1} P^2_{\omega_2 \omega_1} \aijone +
p^{-2}({1-p}) \sum_{\omega_1 \neq \omega_2} \xi_{\omega_2} \,
E_{\omega_1} P^2_{\omega_2 \omega_1} \aijone.
\]
For the first term of the right-hand side, it suffices to estimate the
tail of
\[
(\mtx{S}_0)_{ij} \equiv p^{-1} \sum_{\omega_1} \xi^{(1)}_{\omega_1} E_{\omega_1}
H_{\omega_1} \aijone, \qquad H_{\omega_1} \equiv p^{-2} \sum_{\omega_2
  : \omega_2 \neq \omega_1} \xi^{(2)}_{\omega_2} \, P^2_{\omega_2
  \omega_1},
\]
where $\{\xi^{(1)}_\omega\}$ and $\{\xi^{(2)}_\omega\}$ are
independent.  We know from Section \ref{sec:order2} that
$\|\mtx{H}\|_\infty$ obeys $\|\mtx{H}\|_\infty \le C \, \sqrt{\beta
  \log n} \, (\mu_0 nr/m)^{3/2}$ with probability at least
$1-2n^{-\beta}$ for each $\beta > 2$. Thus, Lemma \eqref{teo:sij}
shows that $\mtx{S}_0$ obeys
\eqref{eq:threetimesa}--\eqref{eq:threetimesb} just as before.  The
other term is equal to $(1-p)$ times $\sum_{\omega_1} E_{\omega_1}
H_{\omega_1} \aijone$, and by the Cauchy-Schwarz inequality and
\eqref{eq:Cauchy}
\[
\left|\sum_{\omega_1} E_{\omega_1} H_{\omega_1} \aijone\right| \le
\|\mtx{H}\|_\infty \, \|\vct{e}_i^* \mtx{E}\| \, \left(\sum_{b}
v_{bj}^2\right)^{1/2} \le C \sqrt{\frac{\mu_0}{n}} \, \sqrt{\beta
\log n} \, \left(\frac{\mu_0
    nr^{4/3}}{m}\right)^{3/2}
\]
on the event where $\|\mtx{H}\|_\infty \le C \, \sqrt{\beta \log n} \,
(\mu_0 nr/m)^{3/2}$.  Hence, when $m \ge \lambda \mu_0 \, nr^{4/3} \,
(\beta \log n)$ for some numerical constant $\lambda > 0$, we have
that $|\sum_{\omega_1} E_{\omega_1} H_{\omega_1} \aijone| \le
 \sqrt{\mu_0/n}$ on this event. In conclusion, the third
term in \eqref{eq:equiv1} obeys \eqref{eq:fifth} with probability at
least $1-O(n^{-\beta})$ provided that $m$ is sufficiently large as
above.

We proceed to the fourth term which can be written as
\[
p^{-3}({1-2p}) \sum_{\omega_1 \neq \omega_3} \xi_{\omega_1}
\xi_{\omega_3} \, E_{\omega_3} P_{\omega_3 \omega_1} P_{\omega_1
  \omega_1} \aijone + p^{-2}({1-p}) \sum_{\omega_1 \neq \omega_3}
\xi_{\omega_3} \, E_{\omega_3} P_{\omega_3 \omega_1} P_{\omega_1
  \omega_1} \aijone.
\]
We use the decoupling trick for the first term and bound the tail of
\[
(\mtx{S}_0)_{ij} \equiv p^{-1} \sum_{\omega_1} \xi^{(1)}_{\omega_1} H_{\omega_1}
(p^{-1} P_{\omega_1 \omega_1}) \, \aijone, \qquad H_{\omega_1} \equiv
p^{-1} \sum_{\omega_3 : \omega_3 \neq \omega_1} \xi^{(3)}_{\omega_3}
\, E_{\omega_3} P_{\omega_3 \omega_1},
\]
where $\{\xi^{(1)}_\omega\}$ and $\{\xi^{(3)}_\omega\}$ are
independent.  We know from Section \ref{sec:order1} that
\[
\|\mtx{H}\|_\infty \le C \, \sqrt{\frac{\mu_0 nr\beta \log n}{m}} \,
\|\mtx{E}\|_\infty
\]
with probability at least $1-2n^{-\beta}$ for each $\beta >
2$. Therefore, Lemma \ref{teo:sij} shows that $\mtx{S}_0$ obeys
\eqref{eq:threetimesa}--\eqref{eq:threetimesb} just as before.  The
other term is equal to $(1-p)$ times $\sum_{\omega_1} H_{\omega_1}
(p^{-1} P_{\omega_1 \omega_1}) \, \aijone$, and the Cauchy-Schwarz
inequality gives
\begin{equation*}
  \left|\sum_{\omega_1} H_{\omega_1} (p^{-1} P_{\omega_1 \omega_1}) \,
    \aijone\right| \le \sqrt{n} \, \|\mtx{H}\|_\infty \, \frac{\mu_0 nr}{m}
  \le
  C\, \frac{\mu_1 \sqrt{r \beta \log n}}{\sqrt{n}} \,
  \left(\frac{\mu_0 nr}{m}\right)^{3/2}
\end{equation*}
on the event $\|\mtx{H}\|_\infty \le C \, \sqrt{\mu_0 nr (\beta \log
  n)/m} \, \|\mtx{E}\|_\infty$. Because $\mu_1 \le \mu_0 \sqrt{r}$, we
have that whenever $m \ge \lambda \, \mu_0^{4/3} nr^{5/3} \, (\beta
\log n)$ for some numerical constant $\lambda > 0$, $p^{-1}
|\sum_{\omega_1} H_{\omega_1} P_{\omega_1 \omega_1} \aijone| \le
\sqrt{\mu_0/n}$ just as before. In conclusion, the fourth term in
\eqref{eq:equiv1} obeys \eqref{eq:fifth} with probability at least
$1-O(n^{-\beta})$ provided that $m$ is sufficiently large as above.

We finally examine the last term
\[
p^{-3} \sum_{\omega_1 \neq \omega_2 \neq \omega_3} \xi_{\omega_1}
\xi_{\omega_2} \xi _{\omega_3} \, E_{\omega_3} P_{\omega_3
\omega_2} P_{\omega_2 \omega_1} \aijone.
\]
Just as before, we need to bound the tail of
\[
(\mtx{S}_0)_{ij} \equiv p^{-1} \sum_{\omega_1, \omega_2, \omega_3}
\xi^{(1)}_{\omega_1} H_{\omega_1} \aijone,
\]
where $\mtx{H}$ is given by \eqref{eq:forlater}.  We know from Section
\ref{sec:order2} that $\mtx{H}$ obeys $$\|\mtx{H}\|_\infty \le C \,
(\beta \log n)\, \frac{\mu_0 nr}{m} \, \mu_1 \frac{\sqrt{r}}{n}$$ with
probability at least $1-4n^{-\beta}$ for each $\beta > 2$. Therefore,
Lemma \ref{teo:sij} gives
\[
\P\left(\|\mtx{S}_0\|_\infty \ge \frac{1}{5} \sqrt{\mu_0/n}\right) \le 2n^2
e^{-1/D}, \quad D \le C \left(\mu_0 \mu_1^2 (\beta \log n)^2
  \left(\frac{n r}{m}\right)^3 + \mu_0 \mu_1 (\beta \log n)
  \left(\frac{n r}{m}\right)^2\right)
\]
for some $C > 0$.  Hence, when $m \ge \lambda \mu_0 \, nr^{4/3} (\beta
\log n)$ for some numerical constant $\lambda > 0$, we have that
$\|\mtx{S}_0\|_\infty \ge \frac{1}{5} \sqrt{\mu_0/n}$ with probability
at most $2n^2 e^{-(\beta \log n)}$.  In conclusion, the fifth term in
\eqref{eq:equiv1} obeys \eqref{eq:fifth} with probability at least
$1-O(n^{-\beta})$ provided that $m$ is sufficiently large as above.

To summarize the calculations of this section, if $m = \lambda \,
\mu_0^{4/3} nr^{5/3} \, (\beta \log n)$ where $\beta \ge 2$ is fixed
and $\lambda$ is some sufficiently large numerical constant, then
\[
\sum_{1\le j \le r} |\alpha_{ij}|^2 \le \mu_0 r/n
\]
with probability at least $1 - O(n^{-\beta})$. This concludes the proof.
\end{proof}

\subsection{Proof of Lemma \ref{teo:highorder}}

It remains to study the spectral norm of $p^{-1} (\PTc \PO \PT)\sum_{k
  \ge k_0} \mathcal{H}^k(\mtx{E})$ for some positive integer $k_0$, which we
bound by the Frobenius norm
\begin{align*}
p^{-1}  \|(\PTc \PO \PT)\sum_{k \ge k_0} \mathcal{H}^k(\mtx{E})\| &
\le p^{-1} \|(\PO
  \PT) \sum_{k \ge
    k_0} \mathcal{H}^k(\mtx{E})\|_F \\
 & \le \sqrt{3/2p} \, \|\sum_{k \ge k_0} \mathcal{H}^k(\mtx{E})\|_F,
\end{align*}
where the inequality follows from Corollary \ref{teo:POPT}. To bound
the Frobenius of the series, write
\begin{align*}
  \|\sum_{k \ge k_0} \mathcal{H}^k(\mtx{E})\|_F & \le \|\mathcal{H}\|^{k_0} \|\mtx{E}\|_F +
  \|\mathcal{H}\|^{k_0+1} \|\mtx{E}\|_F + \ldots  \\
  & \le \frac{ \|\mathcal{H}\|^{k_0}}{1- \|\mathcal{H}\|} \, \|\mtx{E}\|_F.
\end{align*}
Theorem \ref{teo:rudelson} gives an upper bound on $\|\mathcal{H}\|$
since $\|\mathcal{H}\| \le C_R \, \sqrt{\mu_0 nr \beta \log n/m}<1/2$
on an event with probability at least $1 - 3n^{-\beta}$. Since
$\|\mtx{E}\|_F = \sqrt{r}$, we conclude that
\[
p^{-1} \|(\PO \PT) \sum_{k \ge k_0} \mathcal{H}^k(\mtx{E})\|_F \le C
\, \frac{1}{\sqrt{p}} \, \left(\frac{\mu_0 nr \beta \log
    n}{m}\right)^{k_0/2} \, \sqrt{r} = C \, \left(\frac{n^2
    r}{m}\right)^{1/2} \, \left(\frac{\mu_0 nr \beta \log
    n}{m}\right)^{k_0/2}
\]
with large probability. This is the content of Lemma
\ref{teo:highorder}.

\section{Numerical Experiments}
\label{sec:numerical}

To demonstrate the practical applicability of the nuclear norm
heuristic for recovering low-rank matrices from their entries, we
conducted a series of numerical experiments for a variety of the
matrix sizes $n$, ranks $r$, and numbers of entries $m$.  For each
$(n,m,r)$ triple, we repeated the following procedure $50$ times. We
generated $\mtx{M}$, an $n\times n$ matrix of rank $r$, by sampling
two $n\times r$ factors $\mtx{M}_L$ and $\mtx{M}_R$ with i.i.d.
Gaussian entries and setting $\mtx{M}=\mtx{M}_L\mtx{M}_R^*$.  We
sampled a subset $\Omega$ of $m$ entries uniformly at random. Then
the nuclear norm minimization
\begin{equation*}
  \begin{array}{ll}
\textrm{minimize}   & \quad \|\mtx{X}\|_*\\
\textrm{subject to} & \quad   X_{ij} = M_{ij}, \quad (i,j) \in
\Omega
 \end{array}
\end{equation*}
was solved using the SDP solver SDPT3~\cite{SDPT3}. We declared
$\mtx{M}$ to be recovered if the solution returned by the SDP,
$\mtx{X_{\mathrm{opt}}}$, satisfied
$\|\mtx{X_{\mathrm{opt}}}-\mtx{M}\|_F/\|\mtx{M}\|_F<10^{-3}$.
Figure~\ref{fig:phase-trans-full} shows the results of these
experiments for $n=40$ and $50$.  The $x$-axis corresponds to the
fraction of the entries of the matrix that are revealed to the SDP
solver.  The $y$-axis corresponds to the ratio between the dimension
of the set of rank $r$ matrices, $d_r=r(2n-r)$, and the number of
measurements $m$. Note that both of these axes range from zero to
one as a value greater than one on the $x$-axis corresponds to an
overdetermined linear system where the semidefinite program always
succeeds,  and a value of greater than one on the $y$-axis
corresponds to a situation where there is always an infinite number
of matrices with rank $r$ with the given entries.  The color of each
cell in the figures reflects the empirical recovery rate of the $50$
runs (scaled between $0$ and $1$). White denotes perfect recovery in
all experiments, and black denotes failure for all experiments.
Interestingly, the experiments reveal very similar plots for
different $n$, suggesting that our asymptotic conditions for
recovery may be rather conservative.

\begin{figure}
 \centering
 \begin{tabular}{cc}
   \includegraphics[width=8cm]{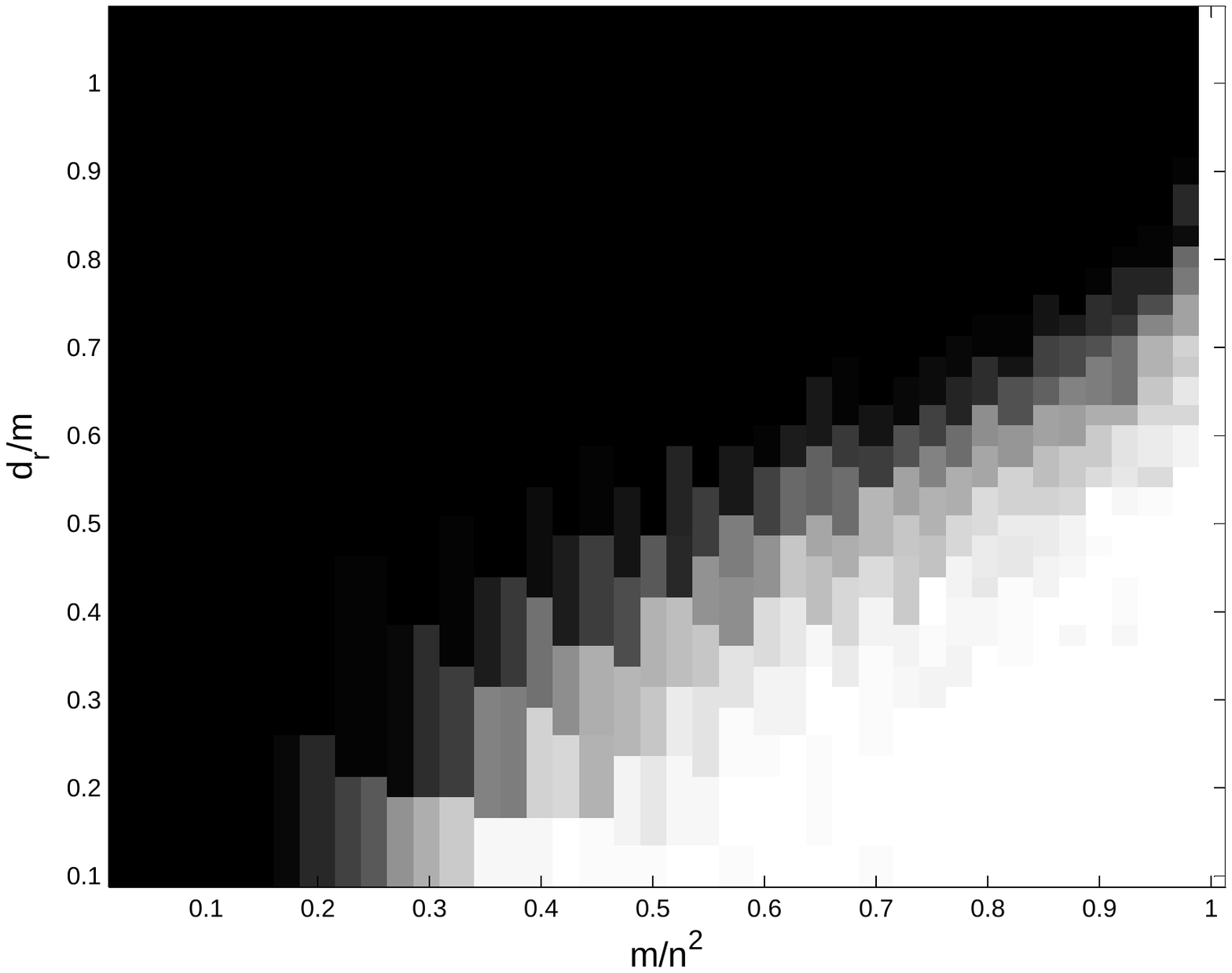}  &
   \includegraphics[width=8cm]{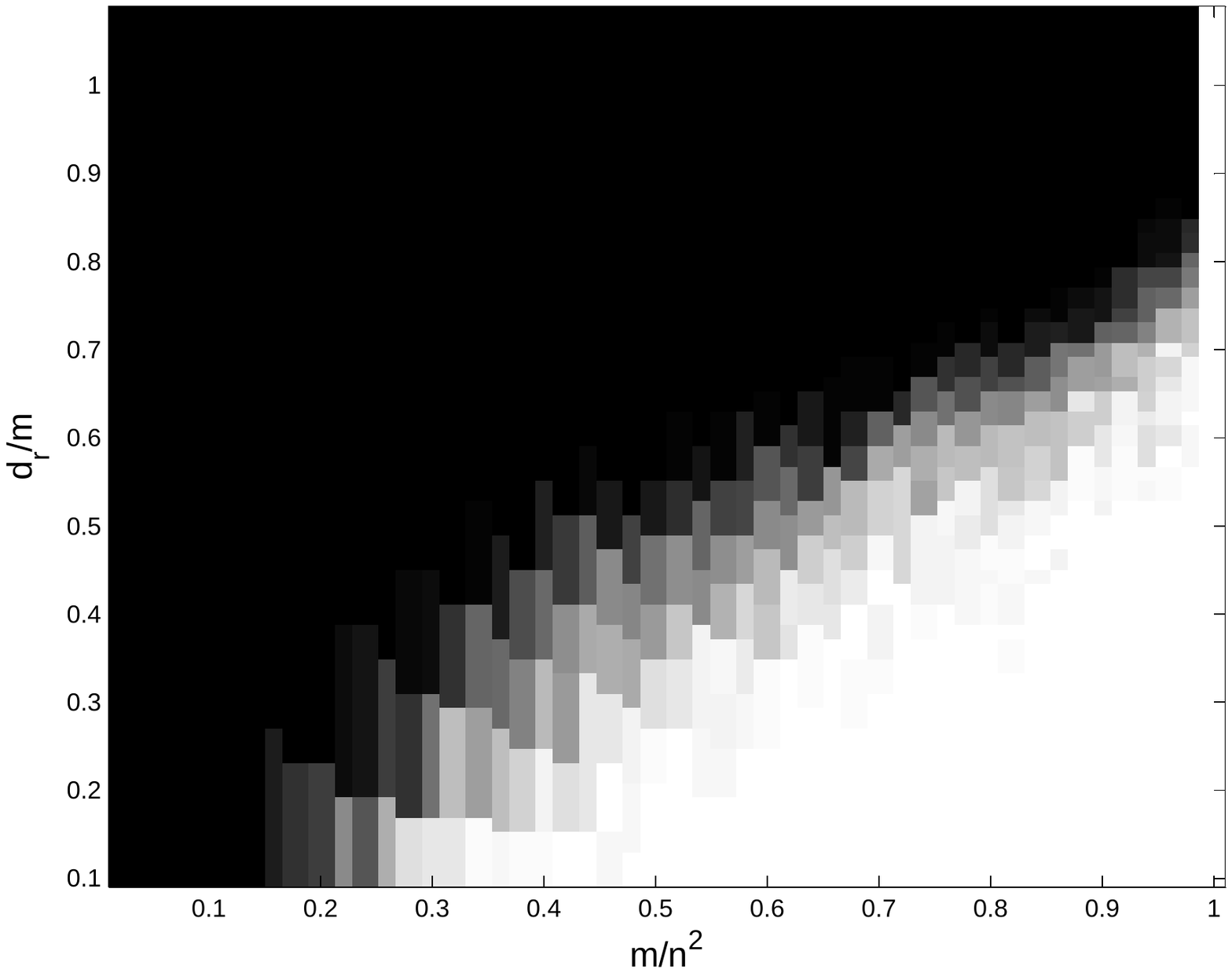}\\ (a) & (b)
     \end{tabular}
 \caption{\small
{\bf Recovery of full matrices from their entries.} For each
$(n,m,r)$ triple, we repeated the following procedure $50$ times. A
matrix $\mtx{M}$ of rank $r$ and a subset of $m$ entries were
selected at random. Then we solved the nuclear norm minimization for
$\mtx{X}$ subject to $X_{ij} = M_{ij}$ on the selected entries. We
declared $\mtx{M}$ to be recovered if
$\|\mtx{X_{\mathrm{opt}}}-\mtx{M}\|_F/\|\mtx{M}\|_F<10^{-3}$. The
results are shown for (a) $n=40$ and (b) $n=50$. The color of each
cell reflects the empirical recovery rate (scaled between $0$ and
$1$).  White denotes perfect recovery in all experiments, and black
denotes failure for all experiments.} \label{fig:phase-trans-full}
\end{figure}

For a second experiment, we generated random \emph{positive
semidefinite} matrices and tried to recover them from their entries
using the nuclear norm heuristic. As above, we repeated the same
procedure $50$ times for each $(n,m,r)$ triple. We generated
$\mtx{M}$, an $n\times n$ positive semidefinite matrix of rank $r$,
by sampling an $n\times r$ factor $\mtx{M}_F$ with i.i.d. Gaussian
entries and setting $\mtx{M}=\mtx{M}_F\mtx{M}_F^*$.  We sampled a
subset $\Omega$ of $m$ entries uniformly at random. Then we solved
the nuclear norm minimization problem
\begin{equation*}
  \begin{array}{ll}
\textrm{minimize}   & \quad \trace(\mtx{X})\\
\textrm{subject to} & \quad   X_{ij} = M_{ij}, \quad (i,j) \in \Omega\\
& \quad \mtx{X} \succeq 0
 \end{array}.
\end{equation*}
As above, we declared $\mtx{M}$ to be recovered if
$\|\mtx{X_{\mathrm{opt}}}-\mtx{M}\|_F/\|\mtx{M}\|_F<10^{-3}$.
Figure~\ref{fig:phase-trans-psd} shows the results of these
experiments for $n=40$ and $50$.  The $x$-axis again corresponds to
the fraction of the entries of the matrix that are revealed to the SDP
solver, but, in this case, the number of measurements is divided by
$D_n = n(n+1)/2$, the number of unique entries in a
positive-semidefinite matrix and the dimension of the rank $r$
matrices is $d_r = nr - r(r-1)/2$.  The color of each cell is chosen
in the same fashion as in the experiment with full matrices.
Interestingly, the recovery region is much larger for positive
semidefinite matrices, and future work is needed to investigate if the
theoretical scaling is also more favorable in this scenario of
low-rank matrix completion.

\begin{figure}
 \centering
 \begin{tabular}{cc}
   \includegraphics[width=8cm]{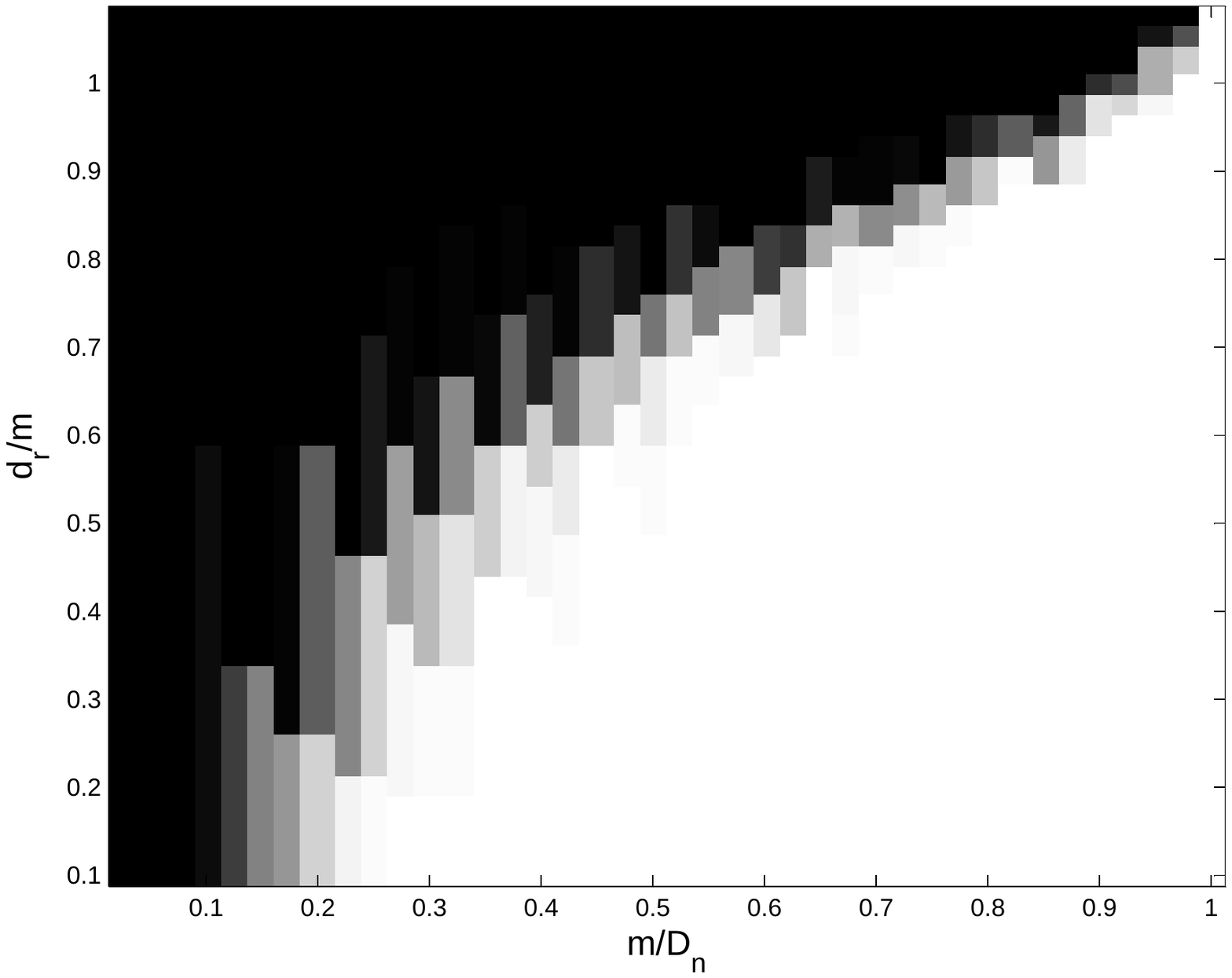}  &
   \includegraphics[width=8cm]{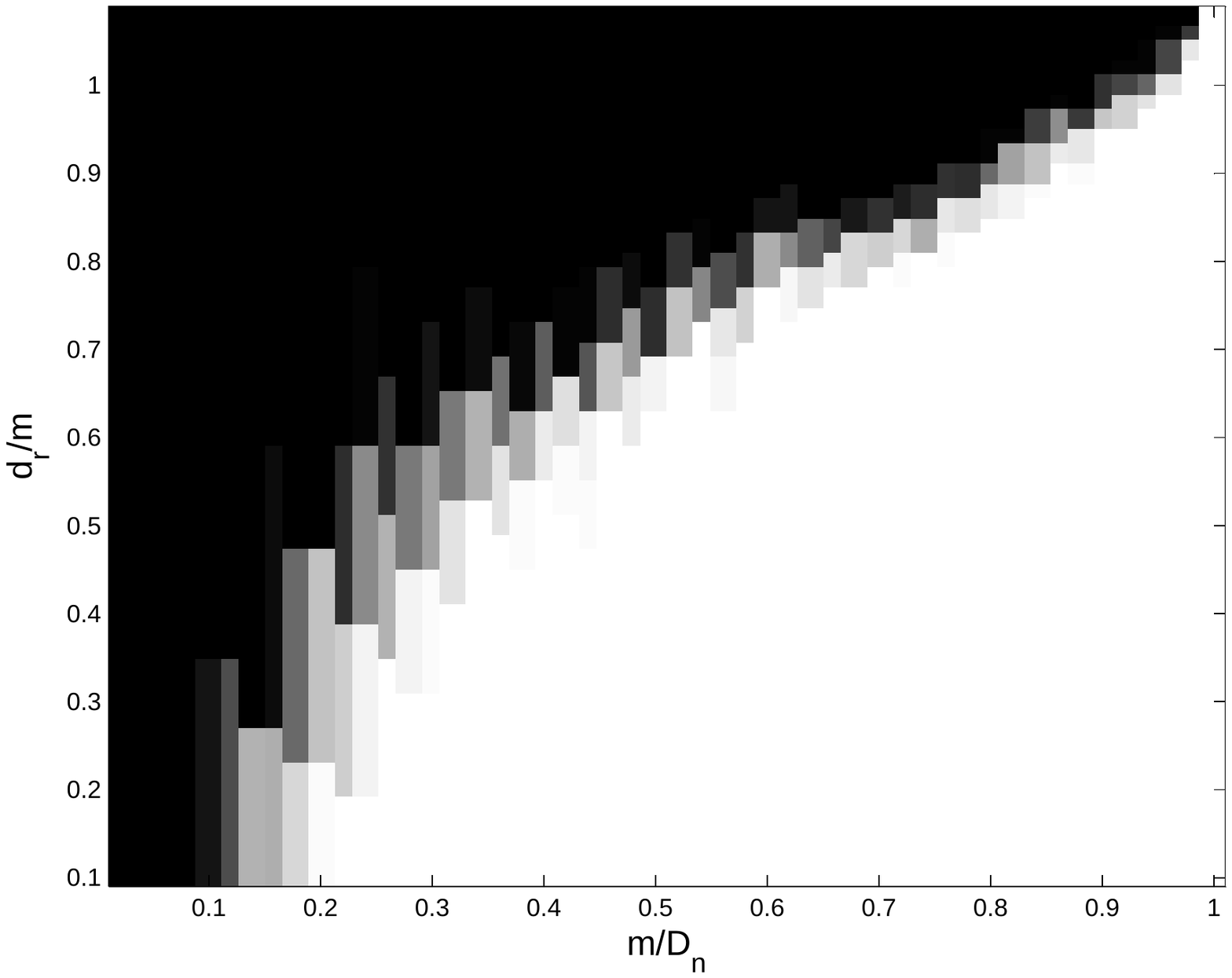}\\ (a) & (b)
     \end{tabular}
\caption{\small {\bf Recovery of positive semidefinite matrices from
their entries.} For each $(n,m,r)$ triple, we repeated the following
procedure $50$ times. A positive semidefinite matrix $\mtx{M}$ of
rank $r$ and a set of $m$ entries were selected at random. Then we
solved the nuclear norm minimization subject to $X_{ij} = M_{ij}$ on
the selected entries with the constraint that $\mtx{X}\succeq 0$.
The color scheme for each cell denotes empirical recovery
probability and is the same as in Figure~\ref{fig:phase-trans-full}.
The results are shown for (a) $n=40$ and (b) $n=50$.}
\label{fig:phase-trans-psd}
\end{figure}

Finally, in Figure~\ref{fig:phase-trans-gaussian}, we plot the
performance of the nuclear norm heuristic when recovering low-rank
matrices from Gaussian projections of these matrices.  In these
cases, $\mtx{M}$ was generated in the same fashion as above, but, in
place of sampling entries, we generated $m$ random Gaussian
projections of the data (see the discussion in
Section~\ref{sec:background}). Then we solved the optimization
\begin{equation*}
  \begin{array}{ll}
\textrm{minimize}   & \quad  \|\mtx{X}\|_*\\
\textrm{subject to} & \quad  {\cal A}(\mtx{X}) = {\cal A}(\mtx{M})
 \end{array}.
\end{equation*}
with the additional constraint that $\mtx{X}\succeq 0$ in the
positive semidefinite case.  Here $\mathcal{A}(\mtx{X})$ denotes a
linear map of the form (\ref{eq:linearfunctional}) where the entries
are sampled i.i.d. from a zero-mean unit variance Gaussian
distribution. In these experiments, the recovery regime is far
larger than in the case of that of sampling entries, but this is not
particularly surprising as each Gaussian observation measures a
contribution from every entry in the matrix $\mtx{M}$.  These
Gaussian models were studied extensively in~\cite{Recht07}.

\begin{figure}
 \centering
 \begin{tabular}{cc}
   \includegraphics[width=8cm]{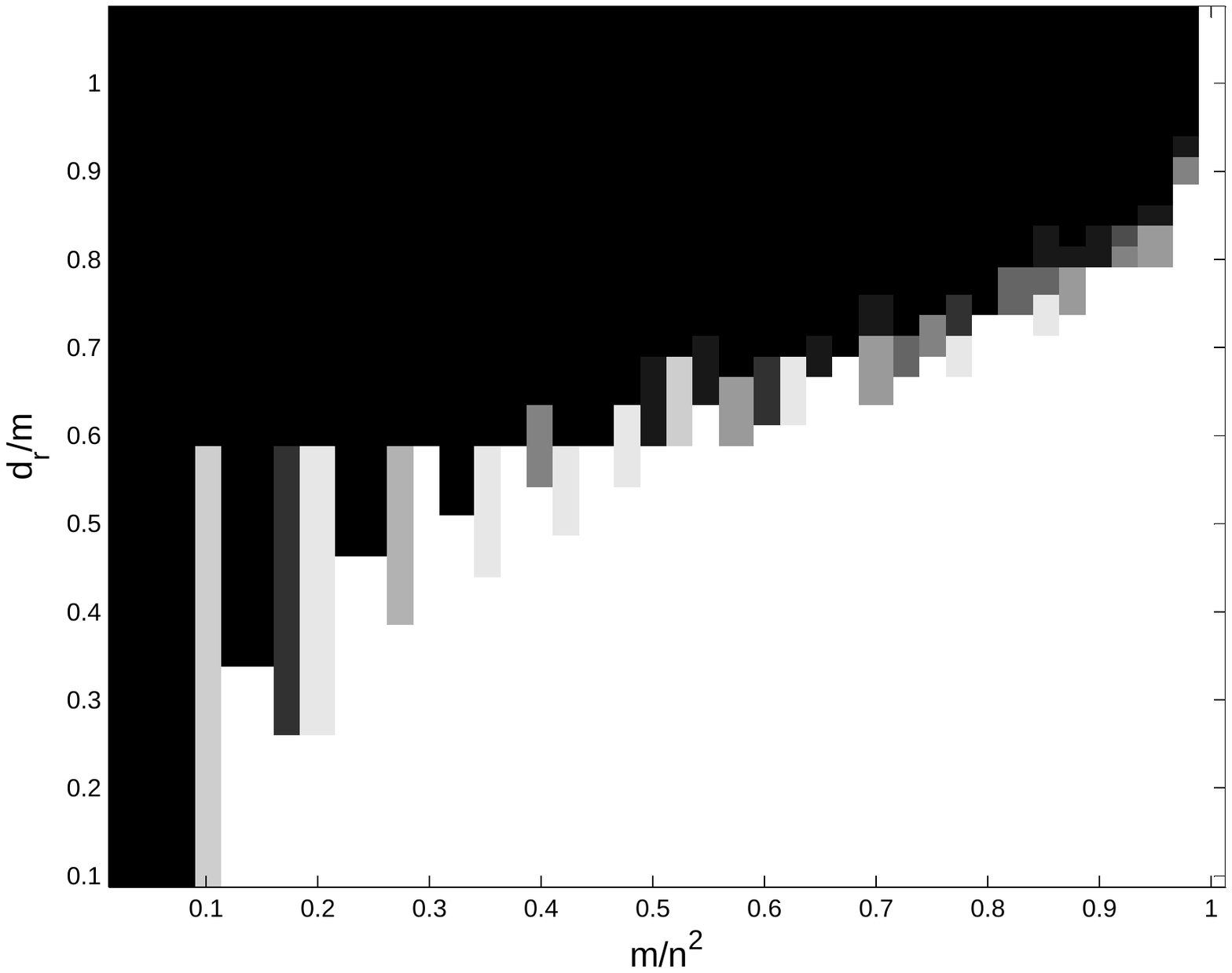}  &
   \includegraphics[width=8cm]{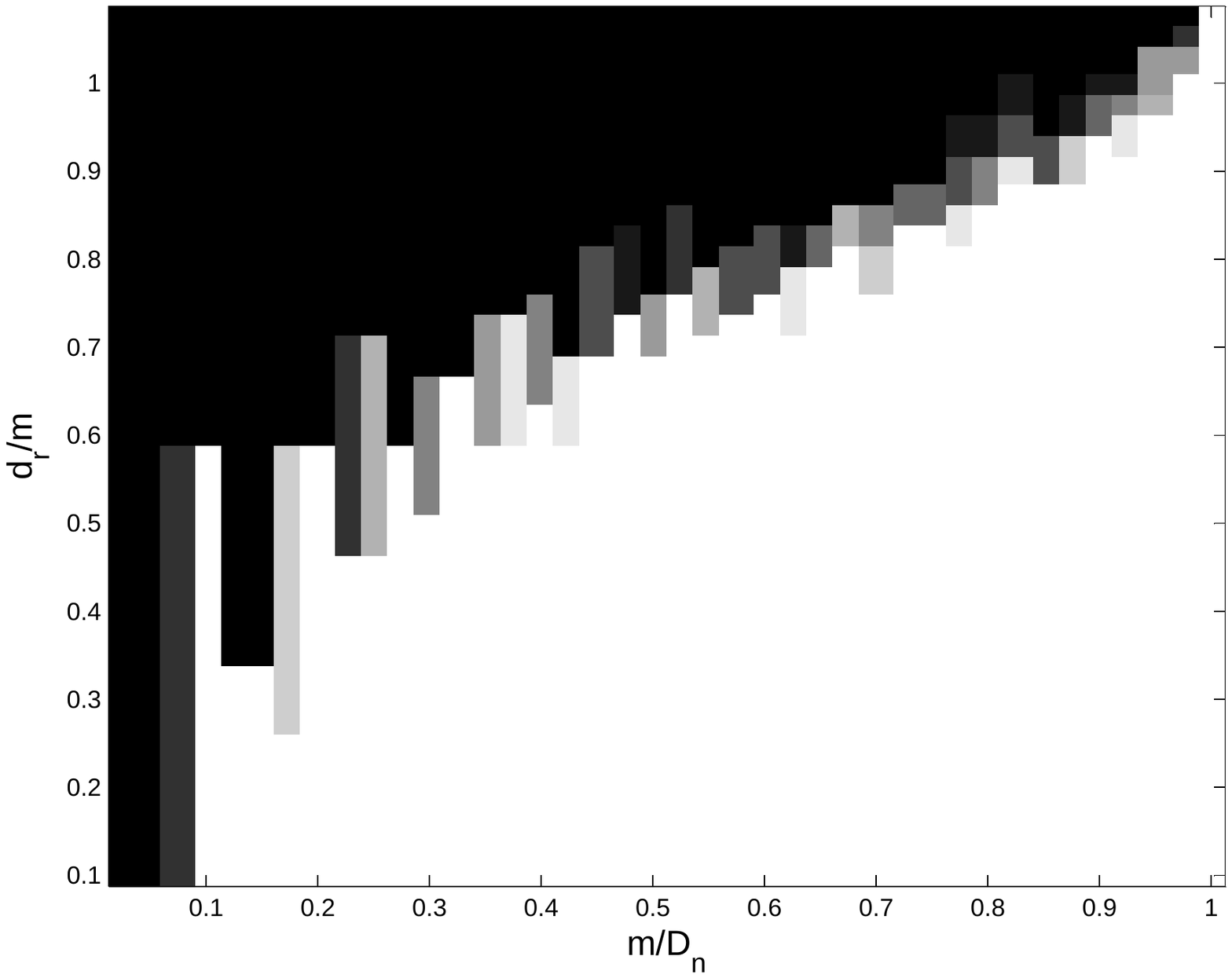}\\ (a) & (b)
     \end{tabular}
     \caption{\small {\bf Recovery of matrices from Gaussian
         observations.}  For each $(n,m,r)$ triple, we repeated the
       following procedure $10$ times. In (a), a matrix of rank $r$
       was generated as in Figures~\ref{fig:phase-trans-full}.  In (b)
       a positive semidefinite matrix of rank $r$ was generated as in
       Figures~\ref{fig:phase-trans-psd}. In both plots, we select a
       matrix $\mathcal{A}$ from the Gaussian ensemble with $m$ rows and
       $n^2$ (in (a)) or $D_n=n(n+1)/2$ (in (b)) columns. Then we
       solve the nuclear norm minimization subject to
       $\mathcal{A}(\mtx{X}) = \mathcal{A}(\mtx{M})$.  The color
       scheme for each cell denotes empirical recovery probability and
       is the same as in Figures~\ref{fig:phase-trans-full}
       and~\ref{fig:phase-trans-psd}.
     } \label{fig:phase-trans-gaussian}
\end{figure}

\section{Discussion}
\label{sec:discussion}

\subsection{Improvements}

In this paper, we have shown that under suitable conditions, one can
reconstruct an $n \times n$ matrix of rank $r$ from a small number
of its sampled entries provided that this number is on the order of
$n^{1.2} r \log n$, at least for moderate values of the rank. One
would like to know whether better results hold in the sense that
exact matrix recovery would be guaranteed with a reduced number of
measurements. In particular, recall that an $n \times n$ matrix of
rank $r$ depends on $(2n-r)r$ degrees of freedom; is it true then
that it is possible to recover most low-rank matrices from on the
order of $nr$---up to logarithmic multiplicative factors---randomly
selected entries? Can the sample size be merely proportional to the
true complexity of the low-rank object we wish to recover?

In this direction, we would like to emphasize that there is nothing
in our approach that apparently prevents us from getting stronger
results.  Indeed, we developed a bound on the spectral norm of each
of the first four terms $(\PTc \PO \PT)\mathcal{H}^k(E)$ in the
series \eqref{eq:size} (corresponding to values of $k$ equal to $0,
1, 2, 3$) and used a general argument to bound the remainder of the
series. Presumably, one could bound higher order terms by the same
techniques. Getting an appropriate bound on $\|(\PTc \PO \PT){\cal
  H}^4(E)\|$ would lower the exponent of $n$ from $6/5$ to $7/6$.  The
appropriate bound on $\|(\PTc \PO \PT)\mathcal{H}^5(E)\|$ would
further lower the exponent to $8/7$, and so on. To obtain an optimal
result, one would need to reach $k$ of size about $\log n$. In doing
so, however, one would have to pay special attention to the size of
the decoupling constants (the constant $C_D$ for two variables in
Lemma \ref{teo:decoupling}) which depend on $k$---the number of
decoupled variables. These constants grow with $k$ and upper bounds
are known \cite{delaPena1,delaPena2}.

\subsection{Further directions}

It would be of interest to extend our results to the case where the
unknown matrix is approximately low-rank. Suppose we write the SVD
of a matrix $\mtx{M}$ as
\[
\mtx{M} = \sum_{1 \le k \le n} \sigma_k \vct{u}_k \vct{v}_k^*,
\]
where $\sigma_1 \ge \sigma_2 \ge \ldots \ge \sigma_n \ge 0$ and assume for
simplicity that none of the $\sigma_k$'s vanish. In general, it is
impossible to complete such a matrix exactly from a partial subset
of its entries. However, one might hope to be able to recover a good
approximation if, for example, most of the singular values are small
or negligible. For instance, consider the truncated SVD of the
matrix $\mtx{M}$,
\[
\mtx{M}_r = \sum_{1 \le k \le r} \sigma_k \vct{u}_k \vct{v}_k^*,
\]
where the sum extends over the $r$ largest singular values and let
$\mtx{M}_\star$ be the solution to \eqref{eq:sdp}. Then one would
not expect to have $\mtx{M}_\star = \mtx{M}$ but it would be of
great interest to determine whether the size of $\mtx{M}_\star -
\mtx{M}$ is comparable to that of $\mtx{M} - \mtx{M}_r$ provided
that the number of sampled entries is sufficiently large. For
example, one would like to know whether it is reasonable to expect
that $\|\mtx{M}_\star - \mtx{M}\|_*$ is on the same order as
$\|\mtx{M} - \mtx{M}_r\|_*$ (one could ask for a similar comparison
with a different norm). If the answer is positive, then this would
say that approximately low-rank matrices can be accurately recovered
from a small set of sampled entries.

Another important direction is to determine whether the reconstruction
is robust to noise as in some applications, one would presumably
observe
\[
Y_{ij} = M_{ij} + z_{ij}, \quad (i,j) \in \Omega,
\]
where $z$ is a deterministic or stochastic perturbation. In this
setup, one would perhaps want to minimize the nuclear norm subject
to $\|\PO(\mtx{X}-\mtx{Y})\|_F \le \epsilon$ where $\epsilon$ is an
upper bound on the noise level instead of enforcing the equality
constraint $\PO(\mtx{X}) = \PO(\mtx{Y})$. Can one expect that this
algorithm or a variation thereof provides accurate answers?  That
is, can one expect that the error between the recovered and the true
data matrix be proportional to the noise level?

\section{Appendix}

\subsection{Proof of Theorem \ref{teo:rudelson2}}

The proof of \eqref{eq:largeRudel} follows that in \cite{CR07} but we
shall use slightly more precise estimates.

Let $Y_1, \ldots, Y_n$ be a sequence of independent random variables
taking values in a Banach space and let $Y_\star$ be the supremum
defined as
\begin{equation}
  \label{eq:Z}
  Y_\star = \sup_{f \in {\cal F}} \, \sum_{i = 1}^n f(Y_i),
\end{equation}
where ${\cal F}$ is a countable family of real-valued functions such
that if $f \in {\cal F}$, then $-f \in {\cal F}$.  Talagrand
\cite{talagrand96ne} proved a concentration inequality about $Y_\star$,
see also \cite[Corollary 7.8]{ledoux01co}.
\begin{theorem}
  \label{teo:LT}
  Assume that $|f| \le B$ and $\E f(Y_i) = 0$ for every $f$ in ${\cal
    F}$ and $i = 1, \ldots, n$. Then for all $t \ge 0$,
  \begin{equation}
    \label{eq:Talagrand}
    \P(|Y_\star - \E Y_\star| > t) \le 3 \exp\left(-\frac{t}{K B} \log \left(1 + \frac{B t}
        {\sigma^2 + B \E Y_\star}\right)\right),
  \end{equation}
  where $\sigma^2 = \sup_{f \in {\cal F}} \, \sum_{i = 1}^n \E
  f^2(Y_i)$, and $K$ is a numerical constant.
\end{theorem}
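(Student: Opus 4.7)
The plan is to establish this Talagrand-type concentration inequality via the entropy method (as developed by Ledoux, Massart, Rio and Bousquet), rather than Talagrand's original inductive proof based on the convex distance inequality. The central object will be the log-Laplace transform $L(\lambda) = \log \E e^{\lambda(Y_\star - \E Y_\star)}$; the goal is to derive a differential inequality on $L$ whose solution is a Bennett-type bound, and then convert that bound into a tail estimate through a Chernoff argument.

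First I would set up the tensorization of entropy. Writing $Z = Y_\star = \Phi(Y_1,\ldots,Y_n)$ with $\Phi(y_1,\ldots,y_n) = \sup_{f\in\mathcal{F}} \sum_i f(y_i)$, and letting $Y'_1,\ldots,Y'_n$ be an independent copy with $Z^{(i)}$ the value of $Z$ after replacing $Y_i$ by $Y'_i$, the subadditivity of entropy gives $\mathrm{Ent}(e^{\lambda Z}) \le \sum_i \E[\mathrm{Ent}_i(e^{\lambda Z})]$, where $\mathrm{Ent}_i$ denotes conditional entropy with respect to the $i$th coordinate. This reduces the task to a one-coordinate modified log-Sobolev estimate, which for bounded summands takes the form $\mathrm{Ent}_i(e^{\lambda Z}) \le \E_i[\phi(-\lambda(Z - Z^{(i)})_+) e^{\lambda Z}]$ with $\phi(x) = e^x - x - 1$. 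Using a (nearly) maximizing tangent functional $\hat f \in \mathcal{F}$ at the realized $(Y_1,\ldots,Y_n)$, one has $Z - Z^{(i)} \le \hat f(Y_i) - \hat f(Y'_i)$, so $|Z - Z^{(i)}| \le 2B$; moreover the total second-moment contribution $\sum_i \E[\hat f(Y_i)^2 \mid \text{others}]$ is controlled by $\sigma^2 + B \E Z$ after exploiting $\E \hat f(Y'_i) = 0$ and a contraction/symmetrization step that produces the cross term $B \E Z$.

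Combining these ingredients yields a differential inequality of the rough form $\lambda L'(\lambda) - L(\lambda) \le (V/B^2) \bigl(e^{\lambda B} - 1 - \lambda B\bigr)$ with $V = \sigma^2 + B \E Z$. Integrating this via Herbst's argument produces a log-Laplace bound $L(\lambda) \le (V/B^2)(e^{\lambda B} - 1 - \lambda B)$ for $\lambda > 0$, and a symmetric bound for $\lambda < 0$. The final step is a Chernoff optimization: $\P(Y_\star - \E Y_\star > t) \le \exp(-\lambda t + L(\lambda))$. Since the Legendre transform of $x \mapsto e^x - x - 1$ is $x \mapsto (1+x)\log(1+x) - x$, optimizing over $\lambda$ produces the Bennett-type expression $(t/KB)\log(1 + Bt/V)$ after bounding $(1+x)\log(1+x) - x \ge \tfrac{1}{K} x \log(1+x)$ for some absolute constant. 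The factor of $3$ in front absorbs the sum of the upper- and lower-tail contributions and the loss from this last comparison.

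The main obstacle, and what requires the most care, is the one-coordinate modified log-Sobolev estimate together with the correct identification of the variance proxy as $\sigma^2 + B \E Y_\star$ rather than just $\sigma^2$. The extra $B \E Y_\star$ term is not optional: it arises because the tangent functional $\hat f$ is itself random and correlated with $Z$, and bounding $\sum_i \E \hat f(Y_i)^2$ honestly forces a cross-term between the deviation and $Z$. Handling this cross-term cleanly, via a symmetrization followed by a contraction (Talagrand's contraction principle for Rademacher averages), is the key technical step; once it is in place the remainder of the proof is the standard Herbst/Bennett machinery.
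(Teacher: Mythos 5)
The paper does not prove Theorem~\ref{teo:LT} at all: it is cited as a known result of Talagrand~\cite{talagrand96ne}, with pointers to Ledoux~\cite{ledoux01co} and Klein~\cite{klein05co} for the statement and the value of the constant. So there is no internal proof against which to match your argument; what you have written is an attempt to reprove a theorem the authors treat as standing literature.

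Your route, the entropy method of Ledoux--Massart--Rio--Bousquet (tensorization of entropy, a one-coordinate modified log-Sobolev estimate via a tangent functional, a Herbst integration to a Bennett-type Laplace bound with variance proxy $V = \sigma^2 + B\,\E Y_\star$, then Chernoff and the elementary comparison $(1+u)\log(1+u) - u \ge \tfrac{1}{K}\,u\log(1+u)$), is a legitimate and well-established path to this inequality and would yield the stated upper-tail bound with an explicit $K$. The identification of the extra $B\,\E Y_\star$ in the variance proxy as coming from the randomness of the near-maximizer $\hat f$ is exactly the right mechanism.

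The gap is the lower tail. Theorem~\ref{teo:LT} bounds $\P(|Y_\star - \E Y_\star| > t)$, and the entropy-method derivation you sketch bounds only $\P(Y_\star - \E Y_\star > t)$. Applying the same machinery to $-Y_\star$ does not work as written: $-Y_\star$ is an infimum, the tangent-functional step $Z - Z^{(i)} \le \hat f(Y_i) - \hat f(Y_i')$ flips direction, and the one-coordinate modified log-Sobolev inequality you quote (with $\phi(-\lambda(Z-Z^{(i)})_+)$) is specifically tuned to suprema and to $\lambda>0$. The lower-tail bound with the same variance proxy is a genuinely separate theorem, due to Klein and Rio, and requires a distinct differential inequality (one that controls $L(\lambda)$ for $\lambda<0$) whose derivation is not a symmetry of the upper-tail one. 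The symmetry hypothesis on $\mathcal{F}$ (that $-f\in\mathcal{F}$ whenever $f\in\mathcal{F}$) does make the lower tail tractable, but it has to be used in a concrete way; saying that the prefactor $3$ ``absorbs the sum of the upper- and lower-tail contributions'' is not an argument. As it stands your sketch proves only half the statement.
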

We note that very precise values of the numerical constant $K$ are
known and are small, see \cite{klein05co}.

We will apply this theorem to the random variable $Z$ defined in the
statement of Theorem \ref{teo:rudelson2}.  Put $\mathcal{Y}_{ab} = p^{-1}
(\delta_{ab} - p) \, \PT(\eab) \otimes \PT(\eab)$ and $\mathcal{Y} =
\sum_{ab} \mathcal{Y}_{ab}$. By definition,
\begin{align*}
  Z = \sup \,\, \<\mtx{X}_1, \mathcal{Y}(\mtx{X}_2)\> & =
  \sup \,\, \sum_{ab} \<\mtx{X}_1, \mathcal{Y}_{ab} (\mtx{X}_2)\>\\
  & = \sup \,\, p^{-1} \sum_{ab} (\delta_{ab} - p) \<\mtx{X}_1,
  \PT(\eab)\> \, \<\PT(\eab), \mtx{X}_2\>,
\end{align*}
where the supremum is over a countable collection of matrices
$\mtx{X}_1$ and $\mtx{X}_2$ obeying $\|\mtx{X}_1\|_F \le 1$ and
$\|\mtx{X}_2\|_F \le 1$. Note that it follows from \eqref{eq:PTeabF}
% (BHR 5/15) added the missing term: |\delta_{ab}-p|
\begin{align*}
  |\<\mtx{X}_1, \mathcal{Y}_{ab}(\mtx{X}_2)\>| &= p^{-1}\, |\delta_{ab}-p|\,
  |\<\mtx{X}_1,
  \PT(\eab)\>| \, |\<\PT(\eab), \mtx{X}_2\>| \\
  &\le p^{-1} \, \|\PT(\eab)\|_F^2 \le 2\mu_0 r/(\min(n_1,n_2) p) =
  2\mu_0 \, nr/m
\end{align*}
(recall that $n = \max(n_1,n_2)$).  Hence, we can apply Theorem
\ref{teo:LT} with $B = 2\mu_0 (nr/m)$. Also
\begin{align*}
  \E |\<\mtx{X}_1, \mathcal{Y}_{ab}(\mtx{X}_2)\>|^2 & = p^{-1}(1-p) \, |\<\mtx{X}_1,
  \PT(\eab)\>|^2 \,
  |\<\mtx{X}_2, \PT(\eab)\>|^2 \\
  & \le p^{-1}\, \|\PT(\eab)\|_F^2 \, |\<\PT(\mtx{X}_2), \eab\>|^2
\end{align*}
so that
\begin{align*}
  \sum_{ab} \E |\<\mtx{X}_1, \mathcal{Y}_{ab}(\mtx{X}_2)\>|^2 & \le (2\mu_0 \, nr/m) \sum_{ab}   |\<\PT(\mtx{X}_2), \eab\>|^2\\
  & = (2 \mu_0 \, nr/m) \, \|\PT(\mtx{X}_2)\|_F^2 \le 2\mu_0 nr/m.
\end{align*}
Since $\E Z \le 1$, Theorem \ref{teo:LT} gives
\[
P(|Z - \E Z| > t) \le 3 \exp\left(-\frac{t}{KB} \log(1+ t/2)\right) \le 3
\exp\left(-\frac{t \log 2}{K B} \min(1, t/2)\right),
\]
where we have used the fact that $\log(1+u) \ge (\log 2)\, \min(1,u)$
for $u \ge 0$. Plugging $t = \lambda \sqrt{\frac{\mu_0\, nr \log
    n}{m}}$ and $B = 2\mu_0\, nr/m$ establishes the claim.

\subsection{Proof of Lemma \ref{teo:E-max-indep2}}

We shall make use of the following lemma which is an application of
well-known deviation bounds about binomial variables.
\begin{lemma}
  Let % $(a_i)_{1 \le i \le n}$ be a nonnegative scalar sequence obeying
  % $a_i \le 1$ and
  $\{\delta_i\}_{1 \le i \le n}$ be a sequence of i.i.d.~Bernoulli
  variables with $\P(\delta_i = 1) = p$ and $Y = \sum_{i = 1}^n
  \delta_i$.  Then for each $\lambda > 0$,
\begin{equation}
\label{eq:binomial}
\P(Y > \lambda \, \E Y) \le \exp\left(-\frac{\lambda^2}{2 + 2\lambda/3}
  \, \E Y\right).
\end{equation}
\end{lemma}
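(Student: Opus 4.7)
The plan is to derive the inequality via the standard Chernoff argument applied to the centered sum $Y - \E Y = \sum_{i=1}^n (\delta_i - p)$. For any $s > 0$, the exponential Markov inequality gives $\P(Y - np > t) \le e^{-st}\, \E e^{s(Y - np)}$, and by independence $\E e^{s(Y - np)} = \prod_i \E e^{s(\delta_i - p)}$. Each factor satisfies $\E e^{s(\delta_i - p)} = e^{-sp}(1 - p + p e^{s}) \le \exp\!\bigl(p(e^{s} - 1 - s)\bigr)$, using the elementary inequality $1 + u \le e^{u}$ with $u = p(e^{s} - 1)$. Multiplying yields $\E e^{s(Y - np)} \le \exp\!\bigl(np(e^{s} - 1 - s)\bigr)$.

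Next I would insert the calculus bound $e^{s} - 1 - s \le \frac{s^{2}/2}{1 - s/3}$ valid for $s \in [0, 3)$, reshaping the exponent into Bernstein form. Optimizing the upper bound $-st + np \cdot \frac{s^{2}/2}{1 - s/3}$ over $s \in (0, 3)$ by the choice $s = t/(np + t/3)$ and simplifying produces the classical Bernstein inequality
\[
\P(Y - np > t) \le \exp\!\left(-\frac{t^{2}}{2np + 2t/3}\right).
\]

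To conclude, I would parametrize the deviation as a multiple of $\E Y = np$ and factor $\E Y$ out of both the numerator and denominator of $t^{2}/(2np + 2t/3)$, producing an exponent of the shape $-\frac{\lambda^{2}}{2 + 2\lambda/3}\,\E Y$ as stated. The only subtlety worth flagging is the standard convention in parametrizing the deviation (writing it as $\lambda\,\E Y$ rather than $(1+\lambda)\,\E Y$), but this is purely an algebraic relabelling at the last step. No step is a genuine obstacle: the derivation is a textbook Chernoff/Bernstein computation, which is precisely why the lemma is introduced as an application of ``well-known deviation bounds about binomial variables.''
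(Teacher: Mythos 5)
Your Chernoff-to-Bernstein derivation is sound up to the final step, and what it actually proves is
\[
\P\bigl(Y - \E Y > \lambda\,\E Y\bigr)\;\le\;\exp\!\left(-\frac{\lambda^{2}}{2 + 2\lambda/3}\,\E Y\right),
\]
i.e. the event is $\{Y > (1+\lambda)\E Y\}$, not $\{Y > \lambda\,\E Y\}$. Your closing remark that the gap between these two is ``purely an algebraic relabelling'' is the one place where the argument goes wrong. It is not a relabelling: if you substitute $\lambda \mapsto \lambda - 1$ to force the left-hand side into the form $\P(Y > \lambda\,\E Y)$, the exponent becomes $(\lambda-1)^{2}/(2 + 2(\lambda-1)/3)$, which is a genuinely different (and for $\lambda > 1$ strictly smaller) function than $\lambda^{2}/(2+2\lambda/3)$. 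Indeed, the statement exactly as printed cannot hold for all $\lambda > 0$: at $\lambda = 2$ it would assert $\P(Y > 2\,\E Y) \le e^{-1.2\,\E Y}$, which beats the large-deviation-optimal Chernoff exponent $2\log 2 - 1 \approx 0.386$ (take $p \to 0$ with $np$ fixed, so $Y$ is essentially Poisson). Since no exponential bound can be sharper than the Cramér rate, the displayed inequality is literally false, and the intended statement is the one your derivation produces, with $Y - \E Y$ on the left. So the Bernstein machinery is right, but the final ``relabelling'' step should have been flagged as a correction to the lemma's statement rather than swept under the rug.
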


The random variable $\sum_{b} \delta_{ab} E_{ab}^2$ is bounded by
$\|\mtx{E}\|_\infty^2 \, \sum_b \delta_{ab}$ and it thus suffices to
estimate the $q$th moment of $Y_* = \max Y_a$ where $Y_a = \sum_{b}
\delta_{ab}$. The inequality \eqref{eq:binomial} implies that
\[
\P(Y_* > \lambda np) \le n \, \exp\left(-\frac{\lambda^2}{2+2\lambda/3}
    \, np\right),
\]
and for $\lambda \ge 2$, this gives $\P(Y_* > \lambda np) \le n \,
e^{-\lambda np/2}$. Hence
\[
  \E Y_*^q = \int_0^\infty \P(Y_* > t) \, q t^{q-1} \, dt  \le
  (2np)^q + \int_{2np}^\infty
  n \, e^{-t/2} \,  q t^{q-1} \, dt.
\]
By integrating by parts, one can check that when $q \le np$, we have
\[
\int_{2np}^\infty
  n \, e^{-t/2} \,  q t^{q-1} \, dt \le nq \, (2np)^q \, e^{-np}.
\]
Under the assumptions of the lemma, we have $nq \, e^{-np} \le 1$ and,
therefore,
\[
 \E Y_*^q  \le 2 \, (2np)^q.
\]
The conclusion follows.

\subsection*{Acknowledgments}
E.~C. was partially supported by a National Science Foundation grant
CCF-515362, by the 2006 Waterman Award (NSF) and by an ONR grant. The
authors would like to thank Ali Jadbabaie, Pablo Parrilo, Ali Rahimi,
Terence Tao, and Joel Tropp for fruitful discussions about parts of
this paper. E.~C. would like to thank Arnaud Durand for his careful
proof-reading and comments.

\small
\bibliographystyle{plain}
\bibliography{MatrixCompletionFinal3}

\end{document}